\newtheorem{prop}{Proposition}[section]
\newtheorem{thrm}{Theorem}[section]
\newtheorem{lemma}{Lemma}[section]
\theoremstyle{definition}
\newtheorem{remark}{\it Remark}[section]
\newtheorem{example}{\it Example}[section]
\newtheorem{crl}{Corollary}[section]
\numberwithin{equation}{section}
\newcommand{\btp}{\begin{tikzpicture}[baseline=-5pt,scale=0.3,line width=0.6pt]}
\newcommand{\etp}{\end{tikzpicture}}
\newcommand{\btpm}{\begin{tikzpicture}[baseline=-5pt,scale=0.29,line width=0.6pt]}
\newcommand{\etpm}{\end{tikzpicture}}
\newcommand{\el}{\nonumber\\}
\newcommand{\CC}{\mathbb{C}}
\newcommand{\CCx}{\mathbb{C}^{\times}}
\newcommand{\cH}{\mathcal{H}}
\newcommand{\cT}{\mathcal{T}}
\newcommand{\cE}{\mathcal{E}}
\newcommand{\cV}{\mathcal{V}}
\newcommand{\chH}{\hat{\cH}}
\newcommand{\bY}{\bar{Y}}
\newcommand{\bB}{\bar{B}}
\newcommand{\Tr}{{\rm Tr}}
\newcommand{\si}{{\sigma_i}}
\newcommand{\sj}{{\sigma_j}}
\newcommand{\sip}{{\sigma_{i+1}}}
\newcommand{\ny}[1]{{\text{\small $#1$}}}
\newcommand{\mns}{{\text{\tiny -}}}
\newcommand{\pls}{{\text{\tiny +}}}
\newcommand{\dl}{\big\llbracket}
\newcommand{\dr}{\big\rrbracket}
\newcommand{\DL}{\Bigg\llbracket}
\newcommand{\DR}{\Bigg\rrbracket}
\newcommand{\vX}{\vec{X}}
\newcommand{\cX}{\cev{X}}
\newcommand{\cev}[1]{\reflectbox{\ensuremath{\vec{\,\,\reflectbox{\ensuremath{#1}}}}}\!\!}
\begin{document}


\thispagestyle{empty}

\begin{centering}

\begin{flushright}
DMUS-MP-13/11
\end{flushright}

\vspace{.2in}

{\Large {\bf On boundary fusion and functional relations \\ in the Baxterized affine Hecke algebra}}

\vspace{.2in}

{A.~Babichenko${}^{a,1}$ and V.~Regelskis${}^{b,2}$}\\
\vspace{.2 in}
${}^{a}${\emph{Department of Mathematics, University of York, \\
York, YO10 5DD, UK}} 
\\ \bigskip
${}^{b}${\emph{Department of Mathematics, University of Surrey, \\
Guildford, GU2 7XH, UK}}
\footnotetext[1]{{\tt babichen@weizmann.ac.il,}\quad ${}^{2}${\tt v.regelskis@surrey.ac.uk}}

\end{centering}

\begin{abstract}
We construct boundary type operators satisfying fused reflection equation for arbitrary representations of the Baxterized affine Hecke algebra. These operators are analogues of the fused reflection matrices in solvable half-line spin chain models. We show that these operators lead to a family of commuting transfer matrices of Sklyanin type. We derive fusion type functional relations for these operators for two families of representations.
\end{abstract}


\ytableausetup{boxsize=1.2em,aligntableaux=top}

\section{Introduction}

Originally introduced in the context of relativistic particles on a half line \cite{Chered84}, reflection equations have appeared in many different integrable systems ranging from spin chains \cite{Sklya88}, two dimensional quantum field theories \cite{GoZam92} and statistical lattice models (see e.g. \cite{BePearOB95,BePear00}). The more complex is the Lie algebraic realization of the symmetry of an integrable system with a boundary, the more involved the explicit forms of the K-matrices of the system. This makes it difficult to investigate the universal features of boundary integrability. There is hope that the main mathematical essence of integrability in general and boundary integrability in particular should be captured by the braid group, its generalizations and specializations \cite{IsaevRev}. A program for the formulation of integrability in terms of these more abstract algebraic objects was pushed forward in the context of both physical and mathematical problems (see \cite{ChGeX91,LevMar94,OrRam04,FuGeX94,KuMu05,Doi05,DoiMar05,Nep02,Arn04,dGN09,D09} and references therein). 
Generalizations of the braid group include Hecke, Brauer, Birman-Murakami-Wenzel (BMW) algebras, their quotients and boundary cousins. Specific realizations of these algebras obtained by baxterizing their elements give rise to integrable spin chains and lattice models. New general solutions constructed in terms of braid group specializations can be mapped to Lie algebraic representations, realized in spin chains and lattice models. In such a way, concrete implementations of new examples of boundary integrable models can be constructed.

Recently some progress was made in the realization of this program \cite{IsMoOg11,IsMoOg11B,Isaev05,Isaev10,IMO}. In particular, a fusion procedure for a simple class of representations of the affine baxterized Hecke algebra was developed in \cite{Isaev05,Isaev10}, and some elementary functional relations for the corresponding transfer matrix type objects were obtained. Moreover, a new solution of the baxterized reflection algebra was put forward in \cite{Isaev05}. It seems interesting to generalize this approach to a maximally wide class of Hecke algebra representations, with the goal of finding the universal hierarchy of functional relations. 
Such functional relations, combined with the analiticity input coming from the concrete realization of a given integrable model, were shown to be an effective tool in finding exact solutions of integrable models, both with closed \cite{KP92,KlP92,ZP95} and open boundary conditions \cite{Z95,N02}. Using these techniques one can avoid the direct diagonalization of the commuting set of transfer matrix type objects; some typical examples of such methods can be found in the review \cite{KNS11}. To find a hierarchy of such functional relations for a maximally general class of Hecke algebra representations is the objective of this paper.
In the case of the Lie algebraic invariant transfer matrix functional relations, known as Hirota relations for $sl_N$, this requires the knowledge of the transfer matrices and fusion hierarchy for representations corresponding to the rectangular Young diagrams. Motivated by a Lie algebraic analogy, we develop a fusion procedure, both for bulk and boundary baxterized Hecke elements, for {\it any} Hecke algebra representations, thus extending the procedure of \cite{Isaev10}. This gives a basis for the consideration of transfer matrix type functional relations demonstrated in this paper for two distinct classes of Young tableaux. 

The plan of the paper is as follows: In Section 2 we give the necessary preliminaries for both Hecke and affine Hecke algebras, and recall the construction of the primitive idempotent operators for arbitrary Young tableaux in terms of Jucys-Murphy elements, and using the Isaev-Molev-Os`kin method. In Section 3 we construct operators in the Hecke algebra that are solutions of the fusion-type bulk Yang-Baxter equation. These operators serve as building blocks of the algebraic structures considered in further sections. In Section 4 we construct boundary operators in the affine Hecke algebra which are solutions of the fusion-type reflection equation. Section 5 contains the main results of this paper. Here we construct a family of commutative transfer matrices and derive functional relations for these operators for two distinct families of representations, namely, a recursive relation for the totally--symmetric and antisymmetric representations, and functional relations for the hook--symmetric and antisymmetric representations. We conclude by explicitly calculating transfer matrices for selected Young tableaux of small size. Appendix A contains explicit derivations of some primitive functional relations which serve as simple checks of the relations presented in Section 5. For the readers' convenience, we have summarised our notation in Appendix B.

\section{Preliminaries}

In this section we define the main objects, define our notation and collect known facts about Hecke and affine Hecke algebras, which will be used in the main part of the paper. We refer to \cite{ChaPre,OgievLect,IsaevRev} for details.

\subsection{Baxterized Hecke algebra}

The Hecke algebra of type $A_{n-1}$, conveniently denoted as $\cH_n(q)$, where $q\in\CC$ and $n\geq1$, is an associative algebra over $\CC$ with generators $\sigma_i$ ($i=1 \ldots n-1$) and defining relations
\begin{align}
\si \si^{-1} &= \si^{-1} \si = 1 , \\
\si\sip\si &= \sip\si\sip , \\
\si\sj &= \sj\si \quad\text{if}\quad|i-j|>1 , \\
(\si-q)&(\si+q^{-1}) = 0 . \label{sigma_cycle}
\end{align}
The dimension of the algebra is dim $\cH_n(q)=n!$. In the $q\to1$ limit the defining relations above correspond to those of the symmetric group $\Sigma_n$. We will further consider only `generic' values of $q$, i.e.\ $q\in\CC^\times$ and is not a root of unity. Hence we will consider $\cH_n(q)$ to be semisimple. We also set $\cH_1(q)=\CC$. We will further use a concise notation $\cH_n=\cH_n(q)$. In all what follows below we assume that number $n$ is large enough, such that any index $i$ of elementary Hecke elements $\sigma_i$ does not drop out of the range $i\leq n$ as a result of any fusion procedure.

We will use the $j$--shifted elements of Hecke algebra, by which we mean a replacement of all the
elementary elements  $\sigma_i$ appearing in any Hecke words by the elementary elements $\sigma_{i,j}=\sigma_{i+j-1}$.

Let $w_k$ denote the unique longest element in the symmetric group, and let $w_{k,j}$ be the $j$--shifted equivalent of $w_k$. Then the corresponding element $X_{w_{k,j}}\in\cH_n$ ($1\leq j<n$, $2\leq k \leq n-j+1$) is given by
\begin{align} \label{WK}
X_{w_{k,j}} &= \sigma_j (\sigma_{j+1} \sigma_j) \cdots (\sigma_{j+k-3} \cdots \sigma_{j+1}\sigma_j)(\sigma_{j+k-2}\sigma_{j+k-2} \cdots \sigma_{j+1}\sigma_j) \el
 & = (\sigma_j \sigma_{j+1} \cdots \sigma_{j+k-3}\sigma_{j+k-2}) (\sigma_j \sigma_{j+1} \cdots \sigma_{j+k-3}) \cdots  (\sigma_j \sigma_{j+1}) \sigma_j .
\end{align}
These elements satisfy the following relation
\begin{equation} \label{WK:com}
X_{w_{k,j}} \sigma_l = \sigma_{2(j-1)+k-l}\, X_{w_{k,j}} , \qquad\text{for}\qquad j\leq l<j+k-1 ,
\end{equation}
which is easily deduced by induction from \eqref{WK}. For further convenience we introduce the following elements
\begin{equation}  \label{Xvec}
\vX_{m,j}= \sigma_j\sigma_{j+1}\cdots\sigma_{j+m-1}, \qquad \cX_{m,j}= \sigma_{j+m-1}\sigma_{j+m-2}\cdots\sigma_{j} ,
\end{equation}
in terms of which \eqref{WK} becomes
\begin{equation} \label{WK2}
X_{w_{k,j}} = \cX_{1,j}\, \cX_{2,j} \cdots \cX_{k-1,j} = \vX_{k-1,j} \, \vX_{k-2,j} \cdots  \vX_{1,j} .
\end{equation}
Note that $X_{w_{2,j}} = \vec{X}_{1,j} = \cev{X}_{1,j} = \sigma_j$ and we set $X_{w_{1,j}}=\vX_{0,j}=\cX_{0,j}=1$.

One can define {\it Baxterized} elements depending on a {\it spectral} parameter
 $x\in\CC$,
\begin{equation} \label{Xbax}
X_i(x) = \si + \frac{\lambda\,x}{1-x} ,\qquad\text{where}\qquad \lambda = q-q^{-1},
\end{equation}
They satisfy the Yang-Baxter equation (YBE)
\begin{equation} \label{YBE}
X_i(x)\,X_{i+1}(x y)\,X_i(y) = X_{i+1}(y)\,X_i(xy)\,X_{i+1}(x) ,
\end{equation}
and the relation
\begin{equation} \label{sxy}
X_i(x)\, X_i(1/x) = \frac{(q^2-x)(q^{-2}-x)}{(1-x)^2} = X_i(1/x)\, X_i(x) .
\end{equation}
The relation \eqref{WK:com} for the Baxterized elements becomes
\begin{equation} \label{WK:com2}
X_{w_{k,j}}\, X_l(x) = X_{2(j-1)+k-l}(x) X_{w_{k,j}} , \qquad\text{for}\qquad j\leq l<j+k-1.
\end{equation}
After normalization of $X_i(x)$ by the factor
\begin{equation}  \label{X_norm}
f(x)=\frac{1-x}{q - x/q} ,
\end{equation}
\eqref{sxy} becomes the unitarity relation
\begin{equation}
f(x) f(1/x) X_i(x) X_i(1/x) = 1.
\end{equation}
Throughout this paper we will use the non-normalized elements $X_i(x)$ due to their invariance under the substitution $q\to-1/q$. This symmetry will play an important role as we will show in further sections.


The affine Hecke algebra $\chH_n$ is an extension of $\cH_n$ and is generated by the elements $\si$ ($i=1\ldots n-1$) and the affine elements $\tau_j$ ($j=1\ldots n$) satisfying
\begin{align}
\tau_{j+1} &= \sigma_j \tau_j \sigma_j , 					\label{def:t:1} \\
\tau_l \tau_j &= \tau_j \tau_l , 								\label{def:t:2} \\
\tau_j \si &= \si \tau_j \quad\text{if}\quad j\neq i,i+1 . 			\label{def:t:3}
\end{align}
The elements $\{\tau_k\}$, $(k=1\ldots n)$ form the maximally commutative subalgebra in $\chH_n$; the set of all symmetric functions in $\tau_k$ span linearly the centre of $\chH_n$. The relations \eqref{def:t:1} and \eqref{def:t:2} imply that elements $\si$ and  $\tau_i$ satisfy the relation
\begin{equation}
\si \, \tau_i \, \si \, \tau_i = \tau_i \, \si \, \tau_i \, \si \,.
\end{equation}

One can further introduce a Baxterized element
\begin{equation}
\tau_j(x) = X_{j-1}(\tfrac{x}{\xi_{j-1}})\cdots X_{2}(\tfrac{x}{\xi_{2}}) X_{1}(\tfrac{x}{\xi_{1}})\tau_1(x) X_{1}({x}{\xi_{1}}) X_{2}({\xi_{2}})\cdots X_{j-1}({x}{\xi_{j-1}}) ,
\end{equation}
which satisfies the reflection equation (RE)
\begin{equation}
X_j(x/y) \, \tau_j(x)\, X_j(x y)\, \tau_j(y) = \tau_j(y)\, X_j(xy)\, \tau_j(x) X_j(x/y)\, . \label{RE}
\end{equation}
Here $\xi_k\in\CC^\times$ are arbitrary parameters, and  $\tau_j(x)$ is assumed to be local, $[\tau_j(x),\si]=0$, {\it c.f.} \eqref{def:t:3}. We will further always set $\xi_k=1$ without the loss of generality. The general rational solution of \eqref{RE} was found in \cite{Isaev05}
\begin{equation} \label{RE_sol}
\tau_1(x) = \frac{\tau_1 - \xi x}{\tau_1 - \xi x^{-1}} \,,
\end{equation}
where $\xi \in \CC$ is an arbitrary parameter. It is easy to see that this operator satisfies the unitarity condition,  $\tau_1(x)\tau_1(1/x)=1$, and for $\xi=0$ becomes trivial,  $\tau_1(0)=1$.

To finalize this section we want to give a diagrammatic representation of the elements of the affine algebra $\chH_n$:
\begin{equation*}
\tau_1 = \btp
	\node[] at (0,-1.5) {\tiny $1$};
	\draw [black,thick] plot [smooth,tension=.7] coordinates {(0,-1) (-0.3,-.4) (-1.2,-.2) (-1.2,.2) (-0.3,0.4) (0,1)};
	\draw [gray,ultra thick] (-0.85,-1.4) -- (-0.85,-.49);
	\draw [gray,ultra thick] (-0.95,-1.4) -- (-0.95,-.49);
	\draw [gray,ultra thick] (-0.85,1.4) -- (-0.85,-.07);
	\draw [gray,ultra thick] (-0.95,1.4) -- (-0.95,-.07);
	\etp ,
\qquad
\si = \btp
	\node[] at (1,-1.5) {\tiny $i\quad i\!+\!1$};
	\draw [black,thick] plot [smooth,tension=1] coordinates {(0,-1) (.32,-.3) (1.3,0.3) (1.6,1)};
	\draw [thick] plot [smooth,tension=1] coordinates {(1.6,-1) (1.52,-0.55) (1.18,-.2)};
	\draw [thick] plot [smooth,tension=1] coordinates {(0,1) (.08,0.55) (.41,.2)};
	\etp \!,
\qquad
\si^{-1} = \btp
	\node[] at (1,-1.5) {\tiny $i\quad i\!+\!1$};
	\draw [black,thick] plot [smooth,tension=1] coordinates {(0,1) (.32,.3) (1.3,-.3) (1.6,-1)};
	\draw [thick] plot [smooth,tension=1] coordinates {(1.6,1) (1.52,0.55) (1.18,.2)};
	\draw [thick] plot [smooth,tension=1] coordinates {(0,-1) (.08,-.55) (.41,-.2)};
	\etp \!\!,
\end{equation*}
where $i=1,\,\ldots,\,n-1$ and the vertical grey line for the element  $\tau_1$ represents a boundary pole which we will assume to be closed at infinity.

\subsection{Idempotents and Young tableaux}

Here we recall construction of the primitive orthogonal idempotents for the Hecke algebra $\cH_n$. These idempotents are conveniently defined in terms of the Jucys-Murphy elements  \cite{Jucys66,Murphy81} and are well known. We refer to \cite{OgievLect} for a pedagogical introduction. We will also state an alternative construction of these idempotents in terms of the Baxterized elements \cite{IMO}.

Let $\lambda=\{\lambda_1,\lambda_2,\ldots,\lambda_l\}$ be an $l$-partition of $m$ denoted by $\lambda\vdash m$, such that $\lambda_1\geq\lambda_2\geq\ldots\geq\lambda_l$ are non-negative numbers satisfying $\sum \lambda_i = m$. For a given $l$-partition $\lambda$ one assigns an $m$-node up-down {\it Young diagram} $\Lambda$ of shape $\lambda$ consisting of $l$ left-justified  lines with $\lambda_i$ nodes at each line. A node $\alpha$ is called {\it removable} if a diagram $\Lambda$ with the node $\alpha$ removed is still a Young diagram. A node $\beta$ is called {\it addable} if a diagram $\Lambda$ with the node $\beta$ added is a Young diagram. We will call $\cE_\pm(\Lambda)$ the set of all addable/removable nodes for the diagram $\Lambda$.

The {\it quantum contents} of a Young diagram are numbers $c_{(ij)}=q^{2(j-i)}$, where $i=1 \ldots \lambda_i,\;j=1 \ldots l$ are numbers of rows and columns of the $\alpha_{(ij)}$ node, respectively. The corresponding hook number is $h_{\alpha}=\lambda_i + \lambda'_j-i-j+1$, where the short-hand notation is $\alpha=\alpha_{(ij)}$, and $\lambda'_j$ is the number of nodes in the $j$-th column of $\lambda$. One can further fill the nodes with numbers $1 \ldots m$ called {\it contents} of nodes, turning a given Young diagram into {\it Young $m$-tableau}. A Young $m$-tableau is called {\it standard} if numbers in the nodes are increasing in left-right and up-down directions. We will denote a standard Young $m$-tableau of shape $\lambda$ by $\cT_m$. An $m$-tableau $\cT_m$ has $m-1$ subtableaux $\cT_{i}$ that are obtained by removing the last node,
\begin{equation}
\cT_{1} \subset \cT_{2} \subset \ldots \subset \cT_{m-1} \subset \cT_m \,.
\end{equation}
In the cases where needed, we will assume $\cT_{0}=1$. The quantum contents of an $m$-tableau will simply be denoted by $c_{i}$ with $i=1 \ldots m$. The sum of all standard inequivalent tableaux $\cT_m$ of shape $\lambda$ give a Young diagram $\Lambda$. Each $\cT_m$ is called a basis vector of $\Lambda$. 
\begin{example}

Choose $m=3$. Then the set of all possible partitions of $m$ are $\{\lambda\}=\{3\},\{2,1\},\{1,1,1\}$. They correspond to the following Young diagrams (with their quantum contents $c_{i}$ inside)
\begin{equation*}
\ytableaushort{{\text{\small 1}}{{\text{\scriptsize $q^2$}}}{{\text{\scriptsize $q^4$}}}}\;,\qquad
\ytableaushort{{\text{\small 1}}{{\text{\scriptsize $q^2$}}},{{\text{\scriptsize $q^{\text{-}2}$}}}}\;,\qquad
\ytableaushort{{\text{\small 1}},{{\text{\scriptsize $q^{\text{-}2}$}}},{{\text{\scriptsize $q^{\text{-}4}$}}}}\;.
\end{equation*}
The removable ($-$) and addable ($+$) nodes for each diagram are
\begin{equation*}
\ytableaushort{\hfil\hfil-+}\;,\qquad
\ytableaushort{\hfil-+,-+,+}\;,\qquad
\ytableaushort{\hfil+,\hfil,-,+}\;.
\end{equation*}
The set of all standard Young tableaux are
\begin{equation*}
\ytableaushort{123}\;,\qquad
\ytableaushort{12,3}\;,\qquad
\ytableaushort{13,2}\;,\qquad
\ytableaushort{1,2,3}\;.
\end{equation*}
\end{example}


The $j$--shifted {\it Jucys-Murphy} elements $J_{i,j}\in\cH_n$ ($1\leq i \leq n-j+1$) are defined inductively by
\begin{equation}
J_{1,j} = 1, \qquad J_{i+1,j} = \sigma_{j+i-1} \, J_{i,j} \, \sigma_{j+i-1}, 
\end{equation}
and satisfying
\begin{equation}
\qquad J_{i,j} \, \sigma_k = \sigma_k \, J_{i,j} \quad\text{if}\quad k\neq j-1,\,j+i-2,\,j+i-1.
\end{equation}
The Jucys-Murphy elements are mutually commutative and generate the maximal commutative subalgebra in $\cH_n$. 

Let $A_j^{\cT_{m}}$ be an idempotent associated with a Young tableau $\cT_m$. The idempotents $A_j^{\cT_{m}}$ can be defined inductively in the following manner
\begin{equation} \label{AT}
A_j^{\cT_{m}} = A_j^{\cT_{m-1}} \prod_{\substack{\alpha_{k}\in\cE_+({\cT_{m-1}}) \\ \alpha_{k}\neq\alpha_{m}}} \frac{J_{m,j}-c_{k}}{c_{m}-c_{k}} ,
\end{equation}
with the initial condition $A_j^{\cT_1}=1$. Furthermore,
\begin{equation}
 J_{m,j}\, A_j^{\cT_{m}} = A_j^{\cT_{m}}\, J_{m,j}  =  c_m\, A_j^{\cT_{m}}.
\end{equation}
The set of idempotents associated with all pairwise different $m$-tableaux is a partition of one,
\begin{equation}
\sum_{\cT_m} A_j^{\cT_{m}} = 1 .
\end{equation}
The sum of idempotents associated with all pairwise different tableaux $\cT_{m}$ that can be obtained from the tableau $\cT_{m-1}$ is equal to the idempotent associated with tableau $\cT_{m-1}$,
\begin{equation}
\sum_{\cT_{m}|\cT_{m-1}} A_j^{\cT_{m}} = A_j^{\cT_{m-1}} .
\end{equation}
This relation allows us derive to an alternative form of \eqref{AT}. The rational function
\begin{equation}
A_j^{\cT_{m}}(u) = A_j^{\cT_{m-1}} \frac{u-c_m}{u-J_{m,j}} ,
\end{equation}
is regular at the point $u=c_m$ and the corresponding value gives
\begin{equation}
A_j^{\cT_{m}} = A_j^{\cT_{m-1}}\, \frac{u-c_m}{u-J_{m,j}} \Big|_{u=c_m}.
\end{equation}

Let $\cT^T_m$ be a tableau obtained from $\cT_m$ by the transposition with respect to the main diagonal. Then 
\begin{equation} \label{A_trans_id}
A_j^{\cT^T_m} = A_j^{\cT_{m}} |_{q\to-q^{-1}}. 
\end{equation}

\begin{example} 
Let $\cT_m=(a_1,a_2,\ldots,a_{\lambda_1};b_1,\ldots,b_{\lambda_2};\ldots)$ denote the contents of a given $m$-tableau. Choose $m=3$. Then the set of all inequivalent idempotents $A_j^{\cT_{m}}$ is  
\begin{align} \label{A3}
A_j^{(1,2,3)} &= \frac{J_{2,j}-q^{-2}}{q^{2}-q^{-2}}\frac{J_{3,j}-q^{-2}}{q^{4}-q^{-2}}, &
A_j^{(1,2;3)} &= \frac{J_{2,j}-q^{-2}}{q^{2}-q^{-2}}\frac{J_{3,j}-q^{4}}{q^{-2}-q^{4}}, \el
A_j^{(1;2;3)} &= \frac{J_{2,j}-q^{2}}{q^{-2}-q^{2}}\frac{J_{3,j}-q^{2}}{q^{-4}-q^{2}}, &
A_j^{(1,3;2)} &= \frac{J_{2,j}-q^{2}}{q^{-2}-q^{2}}\frac{J_{3,j}-q^{-4}}{q^{2}-q^{-4}}.
\end{align}
It is a straightforward calculation to check that $A_j^{(1,2,3)}+A_j^{(1,2;3)}+A_j^{(1,3;2)}+A_j^{(1;2;3)}=1$, and $A_j^{(1;2;3)} = A_j^{(1,2,3)}|_{q\to-q^{-1}}$, $A_j^{(1,3;2)} = A_j^{(1,2;3)}|_{q\to-q^{-1}}$.

\end{example}

\begin{example} 
There are three $4$-tableaux $\cT_4=\{(1,2,4;3),(1,2;3,4),(1,2;3;4)\}$ that can be obtained from the $3$-tableau $\cT_3=(1,2;3)$. The corresponding idempotents are:
\begin{align} \label{A4}
A_j^{(1,2,4;3)} &= A_j^{\cT_3}\,\frac{J_{4,j}-1}{q^4-1}\frac{J_{4,j}-q^{-4}}{q^4-q^{-4}}, \quad
A_j^{(1,2;3,4)} = A_j^{\cT_3}\,\frac{J_{4,j}-q^4}{1-q^4}\frac{J_{4,j}-q^{-4}}{1-q^{-4}}, \el
A_j^{(1,2;3;4)} &= A_j^{\cT_3}\,\frac{J_{4,j}-q^4}{q^{-4}-q^4}\frac{J_{4,j}-1}{q^{-4}-1}.
\end{align}
It is a straightforward calculation to check that $A_j^{(1,2,4;3)}+A_j^{(1,2;3,4)}+A_j^{(1,2;3;4)}=A_j^{(1,2;3)}$.

\end{example}


Primitive idempotents for the Hecke algebra can be defined in an alternative way in terms of the Baxterized elements $X_i(x)$. This method was presented in \cite{IMO}. Here we will present a generalization which accommodates the $j$--shift. By setting $j=1$ one recovers the constructions presented in \cite{IMO}. 

Let us introduce the following operators in $\cH_n$,
\begin{align}
\vX_{m,j}(x \vec{u}) &= X_j(x u_{1})\, X_{j+1}(x u_{2}) \cdots X_{j+m-1}(x u_m) , \label{Xvec2} \\
\cX_{m,j}(x \vec{u}) &= X_{j+m-1}(x u_{1})\, X_{j+m-2}(x u_{2}) \cdots X_{j}(x u_m) ,  \label{Xvec3}
\end{align}
where we have assumed $j+m<n$; parameters $x$ and $\vec{u}=(u_1,\ldots,u_m)$  are arbitrary non-zero complex parameters. In what follows we will also be using the reversed vector notation, i.e.\ $\cev{u} = (u_m,u_{m-1},\ldots,u_1)$  and a combined notation, i.e.\ $\vec{c}\,\vec{u}=(c_1 u_1,\ldots,c_m u_m)$, $\vec{c}/\vec{u}=(c_1/u_1,\ldots,c_m/u_m)$. Let us also introduce operators
\begin{align} \label{PsiPhi}
\Psi_{m,j}(\vec{u}) = \prod_{i=1 \ldots m-1}^{\rightarrow} \! \cX_{i,j}(\vec{u}_{(i)}/u_{i+1}) \,,\qquad\quad 
\Phi_{m,j}(\vec{u}) = \prod_{i=1 \ldots m-1}^{\leftarrow} \! \vX_{i,j}(\cev{u}_{(i)}/u_{i+1}) \,,
\end{align}
where $\vec{u}_{(i)} = (u_1,\ldots,u_i)$ is a subset of the first $i$ components of the $m$-vector $\vec{u}$. We will further often use a concise notation $\vec{u}_\circ =\vec{u}_{(m-1)}$. Next, consider a rational function
\begin{equation}
f(\lambda) = \prod_{\alpha\in\lambda} \frac{\sqrt{c_\alpha}}{[h_\alpha]} , \qquad\text{where}\qquad [n] = \frac{q^n-q^{-n}}{q-q^{-1}} ,
\end{equation}
Let $\vec{c}\in\cT_m$ be the set of the quantum contents and $\lambda$ be the shape of $\cT_m$. Then the idempotent $A^{\cT_{m}}_j$ can be obtained by the consecutive evaluations
\begin{equation} \label{A_IMO}
A^{\cT_{m}}_j = f(\lambda) \, \Psi_{m,j}(\vec{u}) X^{-1}_{w_{m,j}} \big|_{u_1 = c_1} \big|_{u_2 = c_2} \ldots \big|_{u_m = c_m} .
\end{equation}
The consecutive evaluations ensure that the whole expression above is regular. We will also be in the need of an alternative definition of idempotents.

\begin{prop}
The idempotent $A^{\cT_{m}}_j$ can be obtained by the consecutive evaluations
\begin{equation} \label{A_IMO2}
A^{\cT_{m}}_j = f(\lambda) \, X^{-1}_{w_{m,j}} \Phi_{m,j}(\vec{u}) \big|_{u_1 = c_1} \big|_{u_2 = c_2} \ldots \big|_{u_m = c_m} .
\end{equation}
\end{prop}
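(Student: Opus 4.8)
The plan is to deduce \eqref{A_IMO2} from the already-established formula \eqref{A_IMO} by a symmetry argument, rather than redoing the Isaev--Molev--Os\'kin construction from scratch. The key observation is that there is an anti-automorphism of $\cH_n$ — the one that reverses words, sending $\sigma_{i_1}\sigma_{i_2}\cdots\sigma_{i_r}$ to $\sigma_{i_r}\cdots\sigma_{i_2}\sigma_{i_1}$ and fixing each generator $\sigma_i$ — which I will denote $\star$. Since the defining relations \eqref{sigma_cycle} and the braid relations are invariant under reversal, $\star$ is a well-defined anti-involution, and it clearly fixes the Baxterized elements: $X_i(x)^\star = X_i(x)$. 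First I would record the behaviour of the relevant composite elements under $\star$: from \eqref{Xvec} and \eqref{Xvec2}--\eqref{Xvec3} one gets $\vX_{m,j}(x\vec u)^\star = \cX_{m,j}(x\cev u)$ and, from \eqref{WK2}, $X_{w_{m,j}}^\star = X_{w_{m,j}}$ (the word for $X_{w_{m,j}}$ written in the first form of \eqref{WK2} is the reversal of the word in the second form, and both represent the same longest element, so $\star$ fixes it; consequently $(X_{w_{m,j}}^{-1})^\star = X_{w_{m,j}}^{-1}$ as well).

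Next I would show that the idempotents themselves are $\star$-invariant: $(A_j^{\cT_m})^\star = A_j^{\cT_m}$. This follows from the Jucys--Murphy definition \eqref{AT}, because each $J_{i,j}$ is $\star$-invariant (an immediate induction using $J_{i+1,j}=\sigma_{j+i-1}J_{i,j}\sigma_{j+i-1}$), hence every polynomial in a single $J_{m,j}$ is $\star$-invariant, and the inductive product in \eqref{AT} is built from such commuting polynomial factors, whose product is preserved by the anti-automorphism precisely because the factors commute. Alternatively this is just $A_j^{\cT_m}=\overline{A_j^{\cT_m}}$ in the notation of \cite{OgievLect}. With these two ingredients in hand, I apply $\star$ to \eqref{A_IMO}. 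The left-hand side is unchanged. On the right-hand side, $\star$ reverses the order of the product $\Psi_{m,j}(\vec u)X_{w_{m,j}}^{-1}$, turning it into $(X_{w_{m,j}}^{-1})^\star\,\Psi_{m,j}(\vec u)^\star = X_{w_{m,j}}^{-1}\,\Psi_{m,j}(\vec u)^\star$, and then reverses each factor $\cX_{i,j}(\vec u_{(i)}/u_{i+1})$ inside $\Psi_{m,j}$ while also reversing the order in which the factors $\prod^{\rightarrow}_{i=1\ldots m-1}$ are multiplied. Using $\cX_{i,j}(x\vec v)^\star=\vX_{i,j}(x\cev v)$ and comparing with the definition \eqref{PsiPhi} of $\Phi_{m,j}$, one checks that $\Psi_{m,j}(\vec u)^\star=\Phi_{m,j}(\vec u)$ exactly: the product $\prod^{\rightarrow}$ becomes $\prod^{\leftarrow}$, and the $i$-th factor $\cX_{i,j}(\vec u_{(i)}/u_{i+1})$ becomes $\vX_{i,j}(\cev u_{(i)}/u_{i+1})$, which is the $i$-th factor of $\Phi_{m,j}(\vec u)$ as written in \eqref{PsiPhi} (here I am using that $\cev u_{(i)}$ denotes the reversal of $\vec u_{(i)}=(u_1,\ldots,u_i)$, consistent with the reversed-vector convention fixed before \eqref{PsiPhi}).

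Finally I would address the evaluation point. The identity \eqref{A_IMO} is an identity of rational functions of $(u_1,\ldots,u_m)$ that is made to hold after the \emph{consecutive} specializations $u_1=c_1,\ u_2=c_2,\ldots,u_m=c_m$, carried out in that order to avoid spurious poles. Since $\star$ is $\CC$-linear and acts only on the Hecke-algebra factors, not on the scalar rational functions of $\vec u$, it commutes with each specialization; hence applying $\star$ to \eqref{A_IMO} and then specializing is the same as specializing and then applying $\star$, and the regularity of the consecutive evaluations is unaffected. Putting the pieces together yields exactly \eqref{A_IMO2}. The main obstacle — really the only place demanding care — is the bookkeeping in the second paragraph: verifying that under $\star$ the doubly-indexed product $\Psi_{m,j}$ (outer product over $i$, inner word inside each $\cX_{i,j}$) maps precisely onto $\Phi_{m,j}$ with the \emph{same} arguments, including the direction of the outer product and the reversal of each argument vector. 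I would do this by a small induction on $m$, peeling off the last factor ($i=m-1$) and matching it with the first factor of $\Phi_{m,j}$, and I expect no genuine difficulty beyond keeping the conventions straight.
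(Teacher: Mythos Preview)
Your proof is correct and takes a genuinely different route from the paper's. The paper establishes the operator identity $\Psi_{m,j}(\vec u)\,X_{w_{m,j}}^{-1} = X_{w_{m,j}}^{-1}\,\Phi_{m,j}(\vec u)$ directly: it first rewrites $\Psi_{m,j}(\vec u)$ via a sequence of elementary Yang--Baxter moves so that its indices become the mirror images of those in $\Phi_{m,j}(\vec u)$, and then invokes the conjugation relation \eqref{WK:com2} to pull $X_{w_{m,j}}^{-1}$ from right to left; combining with \eqref{A_IMO} gives \eqref{A_IMO2}. Your argument replaces this computation by symmetry: the word-reversal anti-involution $\star$ fixes $A_j^{\cT_m}$ (via the Jucys--Murphy description) and $X_{w_{m,j}}^{-1}$, and swaps $\Psi_{m,j}$ with $\Phi_{m,j}$, so applying $\star$ to \eqref{A_IMO} yields \eqref{A_IMO2} in one stroke. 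Your approach is more conceptual and explains \emph{why} the two formulas are partners; the paper's approach is more hands-on but produces the stronger unspecialized identity $\Psi_{m,j}(\vec u)\,X_{w_{m,j}}^{-1} = X_{w_{m,j}}^{-1}\,\Phi_{m,j}(\vec u)$ as a byproduct, which your $\star$-argument alone does not give (you obtain only $(\Psi\,X_{w}^{-1})^\star = X_{w}^{-1}\,\Phi$, not that $\Psi\,X_{w}^{-1}$ itself is $\star$-fixed).
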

\begin{proof}
First, note that
\begin{align}
\Psi_{m,j}(\vec{u}) &= X_j(u_1/u_2)X_{j+1}(u_1/u_3)X_j(u_2/u_3)\cdots \el
& \qquad \times X_{j+m-2}(u_{1}/u_m)\cdots X_{j+1}(u_{m-2}/u_m)X_j(u_{m-1}/u_m) \el
&= X_{j+m-2}(u_{m-1}/u_m)X_{j+m-3}(u_{m-2}/u_m)\cdots X_{j}(u_{1}/u_m) \el
& \qquad \times X_{j+m-2}(u_2/u_3)X_{j+m-3}(u_1/u_3)X_{j+m-2}(u_1/u_2) ,
\end{align}
which follows by sequences of elementary YBEs. Then, by \eqref{WK:com2},
\begin{align}
X_{j+m-2}(u_{m-1}/u_m)X_{j+m-3}(u_{m-2}/u_m)\cdots X_{j}(u_{1}/u_m) \qquad\quad\; \el
\times X_{j+m-2}(u_2/u_3)X_{j+m-3}(u_1/u_3)X_{j+m-2}(u_1/u_2) X^{-1}_{w_{m,j}} &= \el
X^{-1}_{w_{m,j}} X_{j}(u_{m-1}/u_m)X_{j+1}(u_{m-2}/u_m)\cdots X_{j+m-2}(u_{1}/u_m) \qquad \el
\times X_{j}(u_2/u_3)X_{j+1}(u_1/u_3)X_{j}(u_1/u_2) &= X^{-1}_{w_{m,j}} \Phi_{m,j}(\vec{u}) .
\end{align}
This gives $\Psi_{m,j}(\vec{u}) X^{-1}_{w_{m,j}} = X^{-1}_{w_{m,j}} \Phi_{m,j}(\vec{u})$ and by \eqref{A_IMO} we obtain \eqref{A_IMO2} as required.
\end{proof}
The primitive idempotents can also be obtained recursively. Let $\mu$ be the shape of $\cT_{m-1}$. Define a rational function
\begin{equation}
F_m(c_m) = f(\lambda) f^{-1}(\mu) .
\end{equation}
Then
\begin{align} \label{A_IMO_rec}
A^{\cT_{m}}_j &= F_m(c_m)\, A^{\cT_{m-1}}_j\,  X^{}_{w_{m-1,j}}\vX_{m-1,j}(\vec{c}_\circ/u)\, X^{-1}_{w_{m,j}} \big|_{u=c_m} \el
&= F_m(c_m) \, X^{-1}_{w_{m,j}} \, \vX_{m-1,j}(\cev{c}_{m-1}/u)\, A^{\cT_{m-1}}_j  \big|_{u=c_m} .
\end{align}

We will further use a concise notation $\big|_{\vec{u}=\vec{c}}\,$ for the consecutive evaluation. We will also use symbol $\cV_m$ (resp.\ $\cV'_n$) to denote an operator containing the set $\vec{u}$ (resp.\ $\vec{v}$) as its arguments.

\begin{prop}
Let $\vec{c}$ be the quantum contents of a tableau $\cT_m$. Then operators $\Psi_{m,j}(\vec{u})$ and $\Phi_{m,j}(\vec{u})$ satisfy the following identities,
\begin{equation} \label{AAA}
 \Phi_{m,j}(\vec{u}) \, A^{\cT_m}_{j} \big|_{\vec{u}=\vec{c}}  = \Phi_{m,j}(\vec{u}) \big|_{\vec{u}=\vec{c}} \,, \qquad\quad 
 A^{\cT_m}_{j} \, \Psi_{m,j}(\vec{u}) \big|_{\vec{u}=\vec{c}} = \Psi_{m,j}(\vec{u}) \big|_{\vec{u}=\vec{c}}  \,.
\end{equation}
\end{prop}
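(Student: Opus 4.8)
## Proof proposal

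The plan is to prove the two identities in \eqref{AAA} by exploiting the factorization of the idempotent provided by \eqref{A_IMO} and \eqref{A_IMO2}, together with the regularity of the consecutive evaluation. I will give the argument for the second identity, $A^{\cT_m}_j\,\Psi_{m,j}(\vec u)\big|_{\vec u=\vec c}=\Psi_{m,j}(\vec u)\big|_{\vec u=\vec c}$; the first follows by an entirely analogous argument (or by applying the transpose/opposite-algebra antiautomorphism that exchanges $\Psi$ and $\Phi$ and fixes the idempotent up to the relabelling of contents).

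First I would write $A^{\cT_m}_j$ using \eqref{A_IMO}, namely $A^{\cT_m}_j = f(\lambda)\,\Psi_{m,j}(\vec v)\,X^{-1}_{w_{m,j}}\big|_{\vec v=\vec c}$, introducing a fresh parameter vector $\vec v$ for the idempotent so as not to collide with the $\vec u$ appearing in the $\Psi_{m,j}(\vec u)$ on the right. Then the left-hand side becomes
\begin{equation*}
A^{\cT_m}_j\,\Psi_{m,j}(\vec u)\big|_{\vec u=\vec c} = f(\lambda)\,\Psi_{m,j}(\vec v)\,X^{-1}_{w_{m,j}}\,\Psi_{m,j}(\vec u)\,\big|_{\vec v=\vec c}\,\big|_{\vec u=\vec c}\,.
\end{equation*}
The key structural input is then the identity $X^{-1}_{w_{m,j}}\,\Psi_{m,j}(\vec u) = \Phi_{m,j}(\vec u)\,X^{-1}_{w_{m,j}}$, which is exactly the content of the Proposition just proved (its proof shows $\Psi_{m,j}(\vec u)X^{-1}_{w_{m,j}}=X^{-1}_{w_{m,j}}\Phi_{m,j}(\vec u)$, equivalently the displayed relation after moving $X^{-1}_{w_{m,j}}$ across). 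Using it, the left-hand side equals $f(\lambda)\,\Psi_{m,j}(\vec v)\,\Phi_{m,j}(\vec u)\,X^{-1}_{w_{m,j}}\big|_{\vec v=\vec c}\big|_{\vec u=\vec c}$.

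The remaining and central step is to show that, under the consecutive evaluation $\vec v=\vec c$ (performed after setting $\vec u=\vec c$), the product $\Psi_{m,j}(\vec v)\,\Phi_{m,j}(\vec u)$ collapses so that the whole expression reduces to $\Psi_{m,j}(\vec u)\,X^{-1}_{w_{m,j}}\big|_{\vec u=\vec c}$ up to the normalization already accounted for by $f(\lambda)$ — i.e.\ that specializing both argument vectors to the same quantum contents $\vec c$ makes one of the two $\Psi$/$\Phi$ factors act as a projector that is absorbed. Concretely, I would argue this by induction on $m$, peeling off the last ``row'' using the recursive form \eqref{A_IMO_rec} of the idempotent together with the recursive structure of $\Psi_{m,j}$ and $\Phi_{m,j}$ built from $\cX_{i,j}$ and $\vX_{i,j}$: at each stage the innermost Baxterized factors $X_{j+m-2}(v_{m-1}/v_m)\cdots$ evaluated at $v_k=c_k$ meet the matching factors of $\Phi_{m,j}(\vec c)$, and one uses \eqref{sxy} (the ``$X_i(x)X_i(1/x)$'' relation) at the coincident arguments $c_k/c_{k+1}$ together with the defining property $J_{m,j}A^{\cT_m}_j=c_mA^{\cT_m}_j$ to collapse the product, exactly the mechanism that makes the evaluation in \eqref{A_IMO}, \eqref{A_IMO2} regular and idempotent.

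The main obstacle I anticipate is bookkeeping the order of the two consecutive evaluations: the evaluation $\vec u=\vec c$ and $\vec v=\vec c$ must be interleaved in the right order for each intermediate rational function to be regular, and one must check that $\Psi_{m,j}(\vec v)\,\Phi_{m,j}(\vec u)\big|_{\vec u=\vec c}$ is already regular at $\vec v=\vec c$ before taking that limit — otherwise the manipulation $A^{\cT_m}_j\Psi_{m,j}(\vec u)=(\ldots)$ is not literally an equality of well-defined operators but only of their regular parts. I would handle this by keeping $\vec v$ generic until the very last step, using that $\Phi_{m,j}(\vec c)$ (with the already-specialized $\vec u=\vec c$) is a fixed element of $\cH_n$ and that $\Psi_{m,j}(\vec v)$ acting on the left of $A^{\cT_{m}}_j$'s constituent blocks inherits regularity from \eqref{A_IMO}. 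Once regularity is secured, the collapse is purely a sequence of applications of \eqref{sxy} and \eqref{WK:com2}, with $f(\lambda)$ providing exactly the normalization to turn the surviving block back into $\Psi_{m,j}(\vec c)X^{-1}_{w_{m,j}}=A^{\cT_m}_j/f(\lambda)\cdot f(\lambda)$, i.e.\ into the right-hand side $\Psi_{m,j}(\vec u)\big|_{\vec u=\vec c}$ as claimed.
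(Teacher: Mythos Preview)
Your route is far more complicated than necessary, and the ``collapse'' you describe is not a step \emph{towards} the identity but a disguised restatement of it. Notice that the target $A^{\cT_m}_j\,\Psi_{m,j}(\vec c)=\Psi_{m,j}(\vec c)$, after right-multiplying by the invertible element $f(\lambda)\,X^{-1}_{w_{m,j}}$, is literally $(A^{\cT_m}_j)^2=A^{\cT_m}_j$. So the induction you sketch --- collapsing $\Psi_{m,j}(\vec v)\,\Phi_{m,j}(\vec u)\,X^{-1}_{w_{m,j}}$ at $\vec v=\vec u=\vec c$ via \eqref{sxy}, \eqref{A_IMO_rec}, and the Jucys--Murphy eigenvalue relation --- would amount to an independent proof of idempotence from scratch. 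That is a substantially harder exercise than the proposition at hand, and you leave it as a vague plan with acknowledged regularity obstacles rather than an argument.

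The paper's proof simply \emph{uses} idempotence, which is already known for the $A^{\cT_m}_j$ by construction. Writing $A^{\cT_m}_j=f(\lambda)\,X^{-1}_{w_{m,j}}\,\Phi_{m,j}(\vec u)\big|_{\vec u=\vec c}$ from \eqref{A_IMO2}, the relation $(A^{\cT_m}_j)^2=A^{\cT_m}_j$ becomes
\[
f(\lambda)\,\Phi_{m,j}(\vec u)\,X^{-1}_{w_{m,j}}\,\Phi_{m,j}(\vec u)\big|_{\vec u=\vec c}=\Phi_{m,j}(\vec u)\big|_{\vec u=\vec c}
\]
after cancelling the invertible left factor $f(\lambda)\,X^{-1}_{w_{m,j}}$; recognising the inner block as $A^{\cT_m}_j$ gives the first identity in \eqref{AAA} immediately. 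The $\Psi$ identity follows the same way from \eqref{A_IMO}. A minor side remark: your claim that $X^{-1}_{w_{m,j}}\Psi_{m,j}(\vec u)=\Phi_{m,j}(\vec u)\,X^{-1}_{w_{m,j}}$ follows from $\Psi_{m,j}(\vec u)\,X^{-1}_{w_{m,j}}=X^{-1}_{w_{m,j}}\Phi_{m,j}(\vec u)$ ``by moving $X^{-1}_{w_{m,j}}$ across'' is not a valid deduction as stated --- it holds, but because the conjugation $a\mapsto X_{w_{m,j}}\,a\,X^{-1}_{w_{m,j}}$ acts as an involution on the generators $X_l(x)$ by \eqref{WK:com2}, not by any formal rearrangement.
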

\begin{proof}
The idempotence property $(A_{j}^{\cT_{m}})^2 = A_{j}^{\cT_{m}}$ together with the definitions \eqref{A_IMO} and \eqref{A_IMO2} give
\begin{align}
f(\lambda)\, \Phi_{m,j}(\vec{u}) \, X^{-1}_{w_{m,j}} \Phi_{m,j}(\vec{u}) \big|_{\vec{u}=\vec{c}} &= \Phi_{m,j}(\vec{u}) \big|_{\vec{u}=\vec{c}} \;, \\
f(\lambda)\, \Psi_{m,j}(\vec{u}) \, X^{-1}_{w_{m,j}} \Psi_{m,j}(\vec{u}) \big|_{\vec{u}=\vec{c}} &= \Psi_{m,j}(\vec{u}) \big|_{\vec{u}=\vec{c}} \;.
\end{align}
which are equivalent to \eqref{AAA}.
\end{proof}

Primitive idempotents for the totally symmetric ($\vec{c}=(1,q^2,q^4,\ldots,q^{2m-2})$) and totally antisymmetric ($\vec{c}=(1,q^{-2},q^{-4},\ldots,q^{2-2m})$) tableaux have the following form
\begin{equation} \label{A_IMO_sa}
A^{\cT_m}_j = \frac{\Phi_{m,j}(\vec{c})}{[m]_{\pm}!} = \frac{\Psi_{m,j}(\vec{c})}{[m]_{\pm}!} ,
\end{equation}
where $[m]_{\pm}!=[m]_{\pm}\cdots[2]_{\pm}[1]_{\pm}$ is the $q$-factorial, and $[m]_{\pm} = [m]\,|_{q\to\pm q}$. Here the plus sign stands for the symmetric case and minus for the antisymmetric one. These expressions contain no zeros in the individual elements and can be evaluated all at once.

\smallskip

\begin{example}
The hook numbers $h_\alpha$ for all $4$-node Young diagrams are 
\begin{equation*}
\ytableaushort{4321}\;,\qquad
\ytableaushort{421,1}\;,\qquad
\ytableaushort{32,21}\;,\qquad
\ytableaushort{41,2,1}\;,\qquad
\ytableaushort{4,3,2,1}\;.
\end{equation*}
\end{example}

\begin{example}
Idempotents for $\cT_4$ are found by evaluating
\begin{align} \label{A4_IMO}
 A^{\cT_{4}}_j &= f(\lambda)\, X_j(u_1/u_2) X_{j+1}(u_1/u_3) X_j(u_2/u_3) X_{j+2}(u_1/u_4) X_{j+1}(u_2/u_4) X_j(u_3/u_4) \, X^{-1}_{w_{4,j}} \el
 &= f(\lambda)\, X^{-1}_{w_{4,j}} X_j(u_3/u_4) X_{j+1}(u_2/u_4) X_{j+2}(u_1/u_4) X_j(u_2/u_3) X_{j+1}(u_1/u_3) X_j(u_1/u_2) ,
\end{align}
at $\vec{u}=\vec{c}\in\cT_4$. Here $X_{w_{4,j}} = \sigma_j\sigma_{j+1}\sigma_j\sigma_{j+2}\sigma_{j+1}\sigma_j$.  In particular, for $\cT_4=(1,2;3,4)$ we have $\vec{c}=(1,q^2,q^{-2},1)$ and $f(\lambda) = 1/\big([3]\,[2]^2 \big)$,  for $\cT_4=(1,3;2,4)$ we have $\vec{c}=(1,q^{-2},q^{2},1)$ and $f(\lambda)$ is the same. 
\end{example}

\section{Fused solutions of bulk Yang-Baxter equation} \label{sec:3}

In this section, we introduce simple concepts which are crucial, as building blocks of the algebraic structures, to our subsequent analysis of the affine Hecke algebra. Here we construct fused operators with values in the Hecke algebra which are solutions of the fused Yang-Baxter equations (simply called {\it fused YBEs} for convenience). We then describe some properties of these fused operators. (The authors believe that operators of this type must be well-known, but were unable to locate them explicitly in the literature. The closest construction we found was in \cite{BePear00} where such operators, the Fused Row Operators and corresponding fused YBEs, for the totally symmetric representations were considered.) 
\begin{prop} \label{prop31}
In the Hecke algebra the following identities, fused Yang-Baxter equations, hold $(\forall x,z;\; \forall \vec{u})$
\begin{align}
X_{j}(x/z)\, \vX_{m,j+1}(x \vec{u})\, \vX_{m,j}(z \vec{u}) &= \vX_{m,j+1}(z \vec{u})\, \vX_{m,j}(x \vec{u})\, X_{j+m}(x/z) \,, \label{fYBE1} \\
X_{j+m}(x/z)\, \cX_{m,j}(x/\cev{u})\, \cX_{m,j+1}(z/\cev{u}) &= \cX_{m,j}(z/\cev{u})\, \cX_{m,j+1}(x/\cev{u})\, X_{j}(x/z) \,, \label{fYBE2} \\
\vX_{m,j}(x \vec{u})\, X_{j+m}(xz)\, \cX_{m,j}(z/\cev{u}) &= \cX_{m,j+1}(z/\cev{u})\, X_{j}(xz)\, \vX_{m,j}(x \vec{u}) \,. \label{fYBE3}
\end{align}
\end{prop}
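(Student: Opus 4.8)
The plan is to prove all three fused Yang-Baxter equations by induction on the width $m$, using the elementary YBE \eqref{YBE} and the commutation relations among the $X_i$'s with disjoint indices as the only inputs. I would first establish \eqref{fYBE1} and then obtain \eqref{fYBE2} from it by the symmetry $q\to -1/q$ together with the reversal of strand order (or, equivalently, by applying the anti-automorphism of $\cH_n$ that reverses words; note $\cX_{m,j}$ is the reverse of $\vX_{m,j}$), and finally derive \eqref{fYBE3} by a separate but parallel induction, or by combining \eqref{fYBE1} with the unitarity-type relation \eqref{sxy}.

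For \eqref{fYBE1}: the base case $m=1$ reads $X_j(x/z)X_{j+1}(xu_1)X_j(zu_1) = X_{j+1}(zu_1)X_j(xu_1)X_{j+1}(x/z)$, which is exactly the elementary YBE \eqref{YBE} with the substitution of spectral parameters $(x,y)\mapsto(x/z,\,zu_1)$ after checking the multiplicativity bookkeeping $(x/z)(zu_1) = xu_1$. For the inductive step, I would write $\vX_{m,j+1}(x\vec u) = \vX_{m-1,j+1}(x\vec u_{(m-1)})\,X_{j+m}(xu_m)$ and similarly peel off the last factor of $\vX_{m,j}(z\vec u)$, then push $X_j(x/z)$ rightward: it commutes past everything in $\vX_{m-1,j+1}$ and $\vX_{m-1,j}$ except the first one or two generators, so the induction hypothesis applied at width $m-1$ moves it to the position of $X_{j+m-1}(x/z)$, and then one more elementary YBE (involving the two freshly peeled factors $X_{j+m}(xu_m)$, $X_{j+m-1}(xu_m{\cdot} z/\ldots)$ — again a bookkeeping check of the multiplicative argument) carries it out to $X_{j+m}(x/z)$. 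The spectral-parameter arguments are all \emph{ratios}, which is what makes the multiplicative YBE \eqref{YBE} applicable at each step without the $\vec u$-dependence obstructing anything; indeed \eqref{fYBE1} should really be seen as the width-$m$ ``fusion'' of \eqref{YBE} in which the $\vec u$ are spectators that only shift individual arguments.

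For \eqref{fYBE3}, which is the genuinely new ``crossing''-type relation mixing a $\vX$ block, a $\cX$ block, and a single $X$, I expect the main obstacle. Here I would again induct on $m$, with base case $m=1$ being \eqref{YBE} in a different channel, and in the inductive step peel factors off both $\vX_{m,j}(x\vec u)$ and $\cX_{m,j}(z/\cev u)$; the middle factor $X_{j+m}(xz)$ must be moved through the outermost strand while the induction hypothesis handles the width-$(m-1)$ core. The delicate point is that $\vX_{m,j}$ and $\cX_{m,j}$ occupy overlapping index ranges, so one cannot simply commute the peeled pieces past each other — one genuinely needs a chain of elementary YBEs, and the order of peeling (top of $\vX$ versus bottom of $\cX$) has to be chosen so that at each stage exactly one YBE applies to three mutually adjacent generators. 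I would organize this by first using \eqref{sxy}/\eqref{WK:com2}-type rewritings to recognize both sides in terms of $X_{w_{m,j}}$-conjugations (as already done in the proof of Proposition 2.1 above), which converts \eqref{fYBE3} into an identity that follows from \eqref{fYBE1} and \eqref{fYBE2} by a single manipulation; if that shortcut closes, \eqref{fYBE3} needs no independent induction at all. Throughout, the only routine-but-error-prone parts are the multiplicative spectral-parameter substitutions, which I would track carefully but not reproduce in detail.
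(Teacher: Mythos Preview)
Your inductive approach for \eqref{fYBE1} is essentially the paper's: peel off the last factor $X_{j+m}(xu_m)$ from $\vX_{m,j+1}(x\vec u)$ and $X_{j+m-1}(zu_m)$ from $\vX_{m,j}(z\vec u)$, commute $X_{j+m}$ past the width-$(m-1)$ block (disjoint indices), apply the induction hypothesis to the width-$(m-1)$ core, and finish with one elementary YBE at indices $j+m-1,\,j+m$. One imprecision: you say $X_j(x/z)$ ``commutes past everything except the first one or two generators'' --- this is not what happens; $X_j(x/z)$ does not commute with most of the block, and it is the induction hypothesis (not commutation) that moves it to position $j+m-1$ in one stroke.

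For \eqref{fYBE2} your anti-automorphism shortcut (reversal of words) is valid and slightly slicker than the paper, which simply repeats the parallel induction. The $q\to -1/q$ symmetry is not needed here; word-reversal plus the substitution $\vec u\to 1/\cev u$ already does it, since these identities hold for arbitrary $\vec u$.

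For \eqref{fYBE3} you overestimate the difficulty. The paper's direct induction is exactly as clean as for \eqref{fYBE1}: write $\vX_{m,j}(x\vec u)=\vX_{m-1,j}(x\vec u_\circ)X_{j+m-1}(xu_m)$ and $\cX_{m,j}(z/\cev u)=X_{j+m-1}(z/u_m)\cX_{m-1,j}(z/\cev u_\circ)$, apply one elementary YBE to the central triple $X_{j+m-1}(xu_m)X_{j+m}(xz)X_{j+m-1}(z/u_m)$, commute the resulting $X_{j+m}$-factors outward past the width-$(m-1)$ blocks (disjoint indices), and apply the induction hypothesis. No ``chain of elementary YBEs'' is needed, and the overlap of index ranges is a non-issue because the peeled factors sit at the extreme index $j+m-1$ on both sides. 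Your proposed $X_{w_{m,j}}$-conjugation shortcut, by contrast, is not developed enough to evaluate and does not obviously close; I would drop it and present the direct induction, which is short.
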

\begin{proof}
We will prove the fused YBEs above by induction. The base of induction is \eqref{YBE}. Suppose $\vX_{m-1,j}(x \vec{u}_\circ)$ satisfies \eqref{fYBE1}. Then the RHS of \eqref{fYBE1} can be written as
\begin{align}
&\vX_{m,j+1}(z \vec{u})\, \vX_{m,j}(x \vec{u})\, X_{j+m}(x/z) \el
& \qquad = \vX_{m-1,j+1}(z \vec{u}_\circ)\, \vX_{m-1,j}(x \vec{u}_\circ)\,X_{j+m}(z u_m)\,X_{j+m-1}(x u_m)\, X_{j+m}(x/z) \el
& \qquad = X_{j}(x/z)\,\vX_{m-1,j+1}(x \vec{u}_\circ)\, \vX_{m-1,j}(z \vec{u}_\circ)\, X_{j+m}(x u_m)\,X_{j+m-1}(z u_m)  \el
& \qquad = X_{j}(x/z)\,\vX_{m,j+1}(x \vec{u})\, \vX_{m,j}(z \vec{u}) \,,
\end{align}
where in the second equality we have used \eqref{YBE} together with the induction hypothesis. The proof of \eqref{fYBE2} is analogous,
\begin{align}
& \cX_{m,j}(z/\cev{u})\, \cX_{m,j+1}(x/\cev{u})\, X_{j}(x/z)\el
& \qquad = X_{j+m-1}(z/u_m)\,X_{j+m}(x/u_m)\,\cX_{m-1,j}(z/\cev{u}_\circ)\, \cX_{m-1,j+1}(x/\cev{u}_\circ)\, X_{j}(x/z) \el
& \qquad = X_{j+m}(x/z)\, X_{j+m-1}(x/u_m)\,X_{j+m}(z/u_m)\,\cX_{m-1,j}(x/\cev{u}_\circ)\,\cX_{m-1,j+1}(z/\cev{u}_\circ) \el
& \qquad = X_{j+m}(x/z)\,\cX_{m,j}(x \vec{u})\,\cX_{m,j+1}(z \vec{u})  \,.
\end{align}
For \eqref{fYBE3} we have
\begin{align}
& \vX_{m,j}(x \vec{u})\, X_{j+m}(xz)\, \cX_{m,j}(z/\cev{u}) \el
& \qquad = \vX_{m-1,j}(x \vec{u}_\circ)\,X_{j+m-1}(x u_m)\, X_{j+m}(xz)\,X_{j+m-1}(z/u_m)\, \cX_{m-1,j}(z/\cev{u}_\circ) \el
& \qquad = X_{j+m}(z/u_m)\, \cX_{m-1,j+1}(z/\cev{u}_\circ)\,  X_{j}(xz)\,\vX_{m-1,j+1}(x \vec{u}_\circ)\,  X_{j+m}(x u_m) \el
& \qquad = \cX_{m,j+1}(z/\cev{u})\, X_{j}(xz)\, \vX_{m,j+1}(x \vec{u}) \,.
\end{align}
\end{proof}
Let us introduce conjugation $C$, inverse $I$ and reverse $R$ operators which act on the set $\vec{u}$ as follows
\begin{align}
 C:\quad & \vec{u}\to1/\cev{u}, \quad u_i\mapsto1/u_{m-i+1} \,, \label{map:C} \\
 I:\quad & \vec{u}\to1/\vec{u}, \quad u_i\mapsto1/u_{i} \,, \label{map:I} \\
 R:\quad & \vec{u}\to \cev{u}, \qquad u_i\mapsto u_{m-i+1} \,. \label{map:R}
\end{align}
It is easy to see that these maps are symmetries of the fused YBEs defined above. These operators will be employed in the next section.
We will further give a specialization of the fused YBEs for a given tableau $\cT_m$ (and $\vec{c}\in\cT_m$).
\begin{prop} \label{prop32}
Fused projected operators $(\forall x, \forall \cT_m)$
\begin{equation} \label{Xpm:YBE}
X_{j,+}^{\cT_{m}}(x) = A_{j+1}^{\cT_{m}}\, \vX_{m,j}(x\,\vec{c}\,) \,,  \quad\qquad
X_{j,-}^{\cT_{m}}(x) = A_{j}^{\cT_{m}}\, \cX_{m,j}(x/\cev{c}\,) \,,
\end{equation}
are solutions of the fused projected Yang-Baxter equations
\begin{align}
X_{j}(x/z)\, X_{j+1,+}^{\cT_{m}}(x)\, X_{j,+}^{\cT_{m}}(z) &= X_{j+1,+}^{\cT_{m}}(z)\, X_{j,+}^{\cT_{m}}(x)\, X_{j+m}(x/z) \,, \label{fYBE1A} \\
X_{j+m}(x/z)\, X_{j,-}^{\cT_{m}}(x)\, X_{j+1,-}^{\cT_{m}}(z) &= X_{j,-}^{\cT_{m}}(z)\, X_{j+1,-}^{\cT_{m}}(x)\, X_{j}(x/z) \,, \label{fYBE2A} \\
X_{j,+}^{\cT_{m}}(x)\, X_{j+m}(xz)\, X_{j,-}^{\cT_{m}}(z) &= X_{j+1,-}^{\cT_{m}}(z)\, X_{j}(xz)\, X_{j+1,+}^{\cT_{m}}(x) \,. \label{fYBE3A}
\end{align}
\end{prop}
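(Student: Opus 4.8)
The plan is to deduce Proposition~\ref{prop32} directly from the unprojected fused YBEs of Proposition~\ref{prop31} by inserting the appropriate idempotents and sliding them through using the results of Section~2.2. Set $x\vec u = x\vec c$ and $z\vec u = z\vec c$ (i.e.\ specialize the free vector $\vec u$ to the quantum contents $\vec c\in\cT_m$) in \eqref{fYBE1}, \eqref{fYBE2}, \eqref{fYBE3}. The point is that all the $\vec u$-dependence has been frozen to the contents, so each $\vX_{m,\cdot}$ and $\cX_{m,\cdot}$ factor may be prefixed by the relevant idempotent ``for free'' using \eqref{AAA}: indeed $A^{\cT_m}_{j+1}\,\vX_{m,j}(x\vec c)$ and $A^{\cT_m}_{j}\,\cX_{m,j}(x/\cev c)$ are exactly $X^{\cT_m}_{j,+}(x)$ and $X^{\cT_m}_{j,-}(x)$ by \eqref{Xpm:YBE}, and the second identity of \eqref{AAA} tells us that the idempotent can equally be absorbed on the \emph{other} side, $A^{\cT_m}_{j}\,\Psi_{m,j}(\vec c)=\Psi_{m,j}(\vec c)$, etc. So the strategy is: (i) multiply each fused YBE on the appropriate end by the idempotent that converts the end factor into a projected operator; (ii) then repeatedly use the ``absorption'' identities \eqref{AAA} (and the fact, implicit in \eqref{A_IMO_sa} and the general machinery, that $\Psi$ and $\Phi$ built from the contents are themselves left/right multiples of the idempotent) to grow idempotents in front of the \emph{interior} factors so that every $\vX_{m,\cdot}(x\vec c)$, $\cX_{m,\cdot}(x/\cev c)$ becomes a genuine $X^{\cT_m}_{\cdot,\pm}$.

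Concretely, for \eqref{fYBE1A} I would start from \eqref{fYBE1} with $\vec u=\vec c$, left-multiply by $A^{\cT_m}_{j+1}$, and observe that on the left-hand side $A^{\cT_m}_{j+1}\vX_{m,j+1}(x\vec c)\vX_{m,j}(z\vec c)$: the first factor is already $X^{\cT_m}_{j+1,+}(x)$, and to turn the second factor into $X^{\cT_m}_{j,+}(z)=A^{\cT_m}_{j+1}\vX_{m,j}(z\vec c)$ one needs an $A^{\cT_m}_{j+1}$ inserted between the two $\vX$'s. That insertion is the crux: it should follow because $\vX_{m,j+1}(x\vec c)$ commutes with $A^{\cT_m}_{j+1}$ up to the absorption identities, or more directly because $\vX_{m,j+1}(x\vec c)=A^{\cT_m}_{j+1}\vX_{m,j+1}(x\vec c)$ already holds on the relevant range (by \eqref{AAA} applied to the $\Psi$/$\Phi$ factorization of the idempotent for the symmetric/antisymmetric sub-pieces). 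On the right-hand side of \eqref{fYBE1}, after left-multiplying by $A^{\cT_m}_{j+1}$, one uses \eqref{WK:com2}-type relations (equivalently the fact that $A^{\cT_m}_{j+1}$ commutes with $X_{j+m}(x/z)$ since that generator lies outside the window of the tableau sitting on sites $j{+}1,\dots,j{+}m$) to push $A^{\cT_m}_{j+1}$ past $X_{j+m}(x/z)$ and onto $\vX_{m,j+1}(z\vec c)\vX_{m,j}(x\vec c)$, then repeat the interior-insertion step. The computations for \eqref{fYBE2A} and \eqref{fYBE3A} are entirely parallel, using \eqref{fYBE2} and \eqref{fYBE3} respectively together with the $C$, $I$, $R$ symmetries \eqref{map:C}--\eqref{map:R}, which interchange $\vX\leftrightarrow\cX$ and $+\leftrightarrow-$ and hence reduce \eqref{fYBE2A} to \eqref{fYBE1A}; \eqref{fYBE3A} needs the mixed identity \eqref{fYBE3} and one application each of the two halves of \eqref{AAA}.

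The main obstacle I anticipate is the ``interior idempotent insertion'' step: justifying that $A^{\cT_m}_{j+1}\vX_{m,j+1}(x\vec c)\vX_{m,j}(z\vec c)=A^{\cT_m}_{j+1}\vX_{m,j+1}(x\vec c)A^{\cT_m}_{j+1}\vX_{m,j}(z\vec c)$, i.e.\ that the idempotent on the far left can be ``cloned'' into the middle. This is not literally a consequence of $A^{\cT_m}_{j+1}$ being central (it is not), nor of a naive commutation, so it requires either (a) writing $\vX_{m,j+1}(x\vec c)$ as a word that, acting from the left on the image of $A^{\cT_m}_{j+1}$, stays in that image — which should follow from the $\Psi_{m,j+1}(\vec c)=A^{\cT_m}_{j+1}[\cdots]$-type decomposition guaranteed by \eqref{A_IMO}, \eqref{A_IMO2} and \eqref{AAA} — or (b) a short direct induction mirroring the induction already used in the proof of Proposition~\ref{prop31}, carrying the idempotent along at each step. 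I would go with (b): redo the induction of Proposition~\ref{prop31}, but with $A^{\cT_m}_{j+1}$ (resp.\ $A^{\cT_m}_j$) attached throughout, checking at each inductive step that the single-generator YBE move \eqref{YBE} and the relocation \eqref{WK:com2} are compatible with the attached idempotent because the generators involved act within the tableau's window and the one boundary generator $X_{j+m}$ (or $X_j$) commutes with the idempotent. Everything else — the base case, the bookkeeping of shifts, the use of \eqref{sxy}/unitarity if normalization crept in — is routine.
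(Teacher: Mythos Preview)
Your overall strategy---specialize Proposition~\ref{prop31} to $\vec u=\vec c$, attach an idempotent on one end, and justify that a second idempotent can be inserted between the two adjacent fused factors---is exactly the paper's approach. But your index bookkeeping has slipped at the crucial point: by definition $X^{\cT_m}_{j+1,+}(x)=A^{\cT_m}_{j+2}\,\vX_{m,j+1}(x\vec c)$, not $A^{\cT_m}_{j+1}\,\vX_{m,j+1}(x\vec c)$. So the obstacle is not ``cloning'' the \emph{same} idempotent,
\[
A^{\cT_m}_{j+1}\,\vX_{m,j+1}(x\vec c)\;\overset{?}{=}\;A^{\cT_m}_{j+1}\,\vX_{m,j+1}(x\vec c)\,A^{\cT_m}_{j+1},
\]
which is generically false (the operator $\vX_{m,j+1}$ reaches site $j{+}m{+}1$, outside the support of $A^{\cT_m}_{j+1}$), but rather the \emph{shifted} absorption
\[
A^{\cT_m}_{j+2}\,\vX_{m,j+1}(x\vec c)\,A^{\cT_m}_{j+1}=A^{\cT_m}_{j+2}\,\vX_{m,j+1}(x\vec c).
\]
This (together with its $\cX$-companion $A^{\cT_m}_{j}\,\cX_{m,j}(x/\cev c)\,A^{\cT_m}_{j+1}=A^{\cT_m}_{j}\,\cX_{m,j}(x/\cev c)$) is isolated as Lemma~\ref{lemma31} in the paper. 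Your route (a), once corrected for the shift, is precisely how that lemma is proved: write $A^{\cT_m}_{j+2}=f(\lambda)\,X^{-1}_{w_{m,j+2}}\,\Phi_{m,j+2}(\vec c)$ via \eqref{A_IMO2}, use the YBE-derived intertwining $\Phi_{m,j+2}(\vec u)\,\vX_{m,j+1}(x\vec u)=\vX_{m,j+1}(x\cev u)\,\Phi_{m,j+1}(\vec u)$ to slide $\Phi$ down by one site, and only then invoke \eqref{AAA} to absorb $A^{\cT_m}_{j+1}$ on the right. Note that \eqref{AAA} alone, or the special form \eqref{A_IMO_sa} (valid only for totally (anti)symmetric tableaux), does not suffice---the intertwining step is what produces the index shift. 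With Lemma~\ref{lemma31} in hand, each of \eqref{fYBE1A}--\eqref{fYBE3A} follows from the corresponding identity in Proposition~\ref{prop31} by removing the inner idempotent, applying the unprojected identity, commuting the outer idempotent past the single boundary generator $X_j$ or $X_{j+m}$, and reinserting; no second induction (your option (b)) is needed.
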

\noindent To prove the proposition above we will need the following Lemma.
\begin{lemma} \label{lemma31}
In the Hecke algebra the following identities hold $(\forall x, \forall \cT_m)$,
\begin{align}
A_{j+1}^{\cT_{m}}\, \vX_{m,j}(x\,\vec{c}\,)\, A_{j}^{\cT_{m}} &= A_{j+1}^{\cT_{m}}\, \vX_{m,j}(x\,\vec{c}\,) \,, \label{lemma31:1} \\
A_{j}^{\cT_{m}}\, \cX_{m,j}(x/\cev{c}\,)\, A_{j+1}^{\cT_{m}} &= A_{j}^{\cT_{m}}\, \cX_{m,j}(x/\cev{c}\,) \,. \label{lemma31:2} 
\end{align}
\end{lemma}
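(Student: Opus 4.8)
The two identities in Lemma~\ref{lemma31} are mirror images of each other under the reversal/conjugation symmetries (applying the map $q\to -q^{-1}$ combined with the appropriate index relabelling sends one to the other, using \eqref{A_trans_id}), so it suffices to prove \eqref{lemma31:1} and obtain \eqref{lemma31:2} by symmetry, or simply to run the same argument twice. I will prove \eqref{lemma31:1} by induction on $m$. The base case $m=1$ reads $A_{j+1}^{\cT_1}\,X_j(xc_1)\,A_j^{\cT_1} = A_{j+1}^{\cT_1}\,X_j(xc_1)$, which is trivial since $A_j^{\cT_1}=A_{j+1}^{\cT_1}=1$.

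For the inductive step I would split off the last node. Write $\vX_{m,j}(x\vec{c}\,) = \vX_{m-1,j}(x\vec{c}_\circ)\,X_{j+m-1}(xc_m)$, and use the recursive form \eqref{A_IMO_rec} of $A_j^{\cT_m}$, namely $A_j^{\cT_m} = F_m(c_m)\,X^{-1}_{w_{m,j}}\,\vX_{m-1,j}(\cev{c}_{m-1}/u)\,A_j^{\cT_{m-1}}\big|_{u=c_m}$, on the right-hand $A_j^{\cT_m}$. The key algebraic input is the commutation relation \eqref{WK:com2} together with the idempotent identities \eqref{AAA} of Proposition~2.2, which say precisely that $\Phi$- and $\Psi$-type strings may be absorbed into the idempotent after evaluation at $\vec{u}=\vec{c}$. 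Concretely I expect to move the factor $\vX_{m-1,j}(\cev{c}_{m-1}/u)$ appearing inside $A_j^{\cT_m}$ past $X_{j+m-1}(xc_m)$ using elementary YBEs \eqref{YBE}, pick up the relation $X_{j+m-1}(xc_m)\cdots$ acting on $A_j^{\cT_{m-1}}$, then apply the induction hypothesis for $\vX_{m-1,j}$, and finally reassemble using \eqref{AAA} to re-form $A_j^{\cT_m}$ on the left. The evaluation at $u=c_m$ is to be kept throughout, since the individual factors are only well-defined after the consecutive evaluation; the regularity of each intermediate expression is guaranteed by the same consecutive-evaluation convention that makes \eqref{A_IMO}, \eqref{A_IMO2} and \eqref{A_IMO_rec} well-defined.

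An alternative and perhaps cleaner route, which I would try first, bypasses induction entirely: use \eqref{A_IMO2} to write $A_j^{\cT_m} = f(\lambda)\,X^{-1}_{w_{m,j}}\,\Phi_{m,j}(\vec{u})\big|_{\vec{u}=\vec{c}}$ on the right-hand $A_j^{\cT_m}$, and use \eqref{A_IMO} to write $A_{j+1}^{\cT_m} = f(\lambda)\,\Psi_{m,j+1}(\vec{v})\,X^{-1}_{w_{m,j+1}}\big|_{\vec{v}=\vec{c}}$ on the left-hand $A_{j+1}^{\cT_m}$. The product $\Psi_{m,j+1}(\vec{v})\,X^{-1}_{w_{m,j+1}}\,\vX_{m,j}(x\vec{c}\,)\,X^{-1}_{w_{m,j}}\,\Phi_{m,j}(\vec{u})$ should then collapse: by \eqref{WK:com2} the string $X^{-1}_{w_{m,j}}$ converts $\vX_{m,j}$ into $\cX_{m,j+1}$-type data (or the reverse), and then the fused YBEs of Proposition~\ref{prop31}, specifically \eqref{fYBE1} or \eqref{fYBE3}, let one commute the Baxterized string past $\Phi_{m,j}$. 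Since $\Phi_{m,j}(\vec{u})\big|_{\vec{u}=\vec{c}}$ contains the idempotent information (via Proposition~2.2), the trailing $A_j^{\cT_m}$ factor becomes redundant and can be dropped, yielding the right-hand side of \eqref{lemma31:1}. The essential point in either approach is that $\vX_{m,j}(x\vec{c}\,)$ intertwines the actions of the Jucys--Murphy elements relevant to $A_j^{\cT_m}$ and to $A_{j+1}^{\cT_m}$ in a way compatible with the contents $\vec{c}$.

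**Main obstacle.** The delicate point is bookkeeping of the consecutive evaluations $\big|_{u_1=c_1}\cdots\big|_{u_m=c_m}$: the intertwining identities \eqref{AAA} hold only after full evaluation at $\vec{u}=\vec{c}$, whereas the YBE manipulations are naturally carried out with the $u_i$ (or $v_i$) generic, so I must check that the order of "commute, then evaluate" versus "evaluate, then commute" does not matter, i.e.\ that no spurious poles are crossed. This is where the regularity statements accompanying \eqref{A_IMO} and \eqref{A_IMO_rec} must be invoked carefully — one should argue that the full string $\Psi_{m,j+1}(\vec{v})\,X^{-1}_{w_{m,j+1}}\,\vX_{m,j}(x\vec{c}\,)\,X^{-1}_{w_{m,j}}\,\Phi_{m,j}(\vec{u})$ is a rational function of the $u_i, v_i$ whose only potential singularities at $\vec{u}=\vec{v}=\vec{c}$ are the ones already tamed by the consecutive-evaluation prescription, so that the YBE rearrangement (a polynomial identity valid for generic arguments) passes to the evaluated limit. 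Beyond this, the computation is a routine, if somewhat lengthy, chain of elementary YBEs and applications of \eqref{WK:com2}.
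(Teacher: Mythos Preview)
Your second (``alternative cleaner'') route is essentially the paper's approach, but you over-engineer it. The paper expands only \emph{one} idempotent, namely $A_{j+1}^{\cT_m}$ via \eqref{A_IMO2}, and leaves $A_j^{\cT_m}$ untouched. The single algebraic identity that does all the work is the intertwining
\[
\Phi_{m,j+1}(\vec{u})\,\vX_{m,j}(x\vec{u}) \;=\; \vX_{m,j}(x\cev{u})\,\Phi_{m,j}(\vec{u}),
\]
which holds for generic $\vec{u}$ by a straightforward chain of elementary YBEs \eqref{YBE}. Applying this with $\vec{u}$ generic and then evaluating gives
\[
A_{j+1}^{\cT_m}\,\vX_{m,j}(x\vec{c}\,)\,A_j^{\cT_m}
= f(\lambda)\,X^{-1}_{w_{m,j+1}}\,\vX_{m,j}(x\cev{u})\,\Phi_{m,j}(\vec{u})\,A_j^{\cT_m}\big|_{\vec{u}=\vec{c}},
\]
and now \eqref{AAA} in Proposition~2.2 simply deletes $A_j^{\cT_m}$. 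Reversing the commutation recovers $A_{j+1}^{\cT_m}\,\vX_{m,j}(x\vec{c}\,)$. No $X^{-1}_{w_{m,j}}$ from the right-hand idempotent, no fused YBEs from Proposition~\ref{prop31}, and no induction are needed. The companion identity $\Phi_{m,j}(\vec{u})\,\cX_{m,j}(x/\cev{u}) = \cX_{m,j}(x/\vec{u})\,\Phi_{m,j+1}(\vec{u})$ handles \eqref{lemma31:2} in the same way.

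Your ``main obstacle'' also dissolves in this streamlined version: the commutation $\Phi_{m,j+1}\vX_{m,j}=\vX_{m,j}\Phi_{m,j}$ is a polynomial identity valid for generic $\vec{u}$, so one commutes first and evaluates once at the end; the only evaluated statement invoked is \eqref{AAA}, which is already phrased with the evaluation built in. There is no pole-crossing to worry about.
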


\begin{proof}
First, observe that
\begin{align}
\Phi_{m,j+1}(\vec{u}) \, \vX_{m,j}(x\, \vec{u}) &= \vX_{m,j}(x\, \cev{u}) \, \Phi_{m,j}(\vec{u}) \,,  \label{FXXF1} \\
\Phi_{m,j}(\vec{u}) \, \cX_{m,j}(x/ \cev{u}) &= \cX_{m,j}(x/ \vec{u}) \, \Phi_{m,j+1}(\vec{u}) \,, \label{FXXF2}
\end{align}
which follow by sequences of YBEs \eqref{YBE}. Then the LHS of \eqref{lemma31:1} can be written as
\begin{align}
A_{j+1}^{\cT_{m}}\, \vX_{m,j}(x\,\vec{c}\,)\, A_{j}^{\cT_{m}} &= f(\lambda)\, X^{-1}_{w_{m,j+1}} \vX_{m,j}(x\,\cev{u})\, \Phi_{m,j}(\vec{u}) \, A_{j}^{\cT_{m}} \big|_{\vec{u}=\vec{c}} \el
&= f(\lambda)\, X^{-1}_{w_{m,j+1}} \vX_{m,j}(x\,\cev{u})\, \Phi_{m,j}(\vec{u}) \big|_{\vec{u}=\vec{c}} \, = A_{j+1}^{\cT_{m}}\, \vX_{m,j}(x\,\vec{c}\,) \,.
\end{align}
Here we have used \eqref{AAA}. In a similar way, the LHS of \eqref{lemma31:2} becomes
\begin{align}
A_{j}^{\cT_{m}}\, \cX_{m,j}(x/\cev{c}\,)\, A_{j+1}^{\cT_{m}} &= f(\lambda)\, X^{-1}_{w_{m,j}} \cX_{m,j}(x/\vec{u})\, \Phi_{m,j+1}(\vec{u}) \, A_{j+1}^{\cT_{m}} \big|_{\vec{u}=\vec{c}} \el
&= f(\lambda)\, X^{-1}_{w_{m,j}} \cX_{m,j}(x/\cev{u})\, \Phi_{m,j+1}(\vec{u}) \big|_{\vec{u}=\vec{c}} \, = A_{j}^{\cT_{m}}\, \cX_{m,j}(x/\cev{c}\,) \,.
\end{align}
\end{proof}
We are now ready to give a proof of Proposition \ref{prop32}.
\begin{proof}
The RHS of \eqref{fYBE1A}, by \eqref{lemma31:1}, becomes
\begin{align}
& X_{j+1,+}^{\cT_{m}}(z) \, X_{j,+}^{\cT_{m}}(x)\, X_{j+m}(x/z) = A_{j+1}^{\cT_m}\, \vX_{m,j+1}(z \vec{c}) \, \vX_{m,j}(x\vec{c})\, X_{j+m}(x/z) \el
& \qquad = A_{j+1}^{\cT_m} \, X_{j}(x/z) \, \vX_{m,j+1}(x\vec{c}) \, \vX_{m,j}(z \vec{c}) = X_{j}(x/z) \, X_{j+1,+}^{\cT_{m}}(x) \, X_{j,+}^{\cT_{m}}(z)  \,,
\end{align}
where in the second equality we have used \eqref{fYBE1}. The proof of \eqref{fYBE2A} is analogous,
\begin{align} \label{Y_fact}
& X_{j,-}^{\cT_{m}}(z)\, X_{j+1,-}^{\cT_{m}}(x)\, X_{j}(x/z) = A_{j}^{\cT_m}\, \cX_{m,j}(z/\cev{c}) \, \cX_{m,j+1}(x/\cev{c})\, X_{j}(x/z) \el
& \qquad = A_{j}^{\cT_m} \, X_{j+m}(x/z) \, \cX_{m,j}(x/\cev{c}) \, \cX_{m,j+1}(z/\cev{c}) = X_{j+m}(x/z) \, X_{j,-}^{\cT_{m}}(x) \, X_{j+1,-}^{\cT_{m}}(z)  \,,
\end{align}
where we have used \eqref{lemma31:2} and \eqref{fYBE2}. For \eqref{fYBE3A} we have
\begin{align}
& X_{j,+}^{\cT_{m}}(x)\, X_{j+m}(xz)\, X_{j,-}^{\cT_{m}}(z) = A_{j+1}^{\cT_m} \,\vX_{m,j}(x \vec{c})\, X_{j+m}(xz)\, \cX_{m,j}(z/\cev{c}) \el
& \qquad = A_{j+1}^{\cT_m} \,\cX_{m,j+1}(z/\cev{c})\, X_{j}(xz)\,\vX_{m,j+1}(x \vec{c}) = X_{j+1,-}^{\cT_{m}}(z)\, X_{j}(xz)\, X_{j+1,+}^{\cT_{m}}(x) \,,
\end{align}
where we have used \eqref{lemma31:1}, \eqref{lemma31:2} and \eqref{fYBE3}.
\end{proof}

To conclude this section we will state some useful properties of the fused projected operators and demonstrate a few examples afterwards. Fused projected operators enjoy the following transposition
\begin{equation} \label{Y_trans_id}
X_{j,\pm}^{\cT^T_m}(x) = X_{j,\pm}^{\cT_{m}}(x)|_{q\to-q^{-1}} \,, 
\end{equation}
and the pairwise idempotence identities
\begin{equation} \label{Y_idemp_id}
f^{\cT_{m}}(x)\, X_{j,+}^{\cT_{m}}(x)\, X_{j,-}^{\cT_{m}}(1/x) = A_{j+1}^{\cT_{m}} \,, \qquad\quad f^{\cT_{m}}(x)\, X_{j,-}^{\cT_{m}}(x)\, X_{j,+}^{\cT_{m}}(1/x) = A_{j}^{\cT_{m}} ,
\end{equation}
where
\begin{equation}
f^{\cT_{m}}(x) = \prod_{i=1\ldots m} \!\Big(f(x c_i) f(1/(x c_i) \Big)
\end{equation}
is the normalization factor. They follow rather straightforwardly by \eqref{A_trans_id}, Lemma \ref{lemma31}, and properties of $X_i(x)$.
%


\begin{example}
Fused operators for the tableau $\cT_4=(1,2;3,4)$ are
\begin{align}
X_{j,+}^{\cT_{m}}(x) 	&= A_{j+1}^{\cT_{m}}\, X_j(x c_{1}) \, X_{j+1}(x c_{2}) \, X_{j+2}(x c_{3}) \, X_{j+3}(x c_{4}) \,, \\
X_{j,-}^{\cT_{m}}(x) 	&= A_{j}^{\cT_{m}}\, X_{j+3}(x/c_{4}) \, X_{j+2}(x/c_{3}) \, X_{j+1}(x/c_{2}) \, X_{j}(x/c_{1}) \,,
\end{align} 
where $c_{1}=1$, $c_{2}=q^2$, $c_{3}=q^{-2}$, $c_{4}=1$, and $A_{j}^{\cT_{m}}$ is given in \eqref{A4} and \eqref{A4_IMO}.
\end{example}
\begin{example}
The set of all pairwise transposable $4$-tableaux is given below:
\smallskip
\begin{equation*}
\ytableaushort{{\ny{1}}{\ny{3}}{\ny{4}},{\ny{2}}}\;\;\overset{T}{\Longleftrightarrow}\;\;
\ytableaushort{{\ny{1}}{\ny{2}},{\ny{3}},{\ny{4}}}\;,\qquad
\ytableaushort{{\ny{1}}{\ny{2}}{\ny{3}},{\ny{4}}}\;\;\overset{T}{\Longleftrightarrow}\;\;
\ytableaushort{{\ny{1}}{\ny{4}},{\ny{2}},{\ny{3}}}\;,\qquad
\ytableaushort{{\ny{1}}{\ny{2}}{\ny{4}},{\ny{3}}}\;\;\overset{T}{\Longleftrightarrow}\;\;
\ytableaushort{{\ny{1}}{\ny{3}},{\ny{2}},{\ny{4}}}\;,
\end{equation*}
\smallskip
\begin{equation*}
\ytableaushort{{\ny{1}}{\ny{2}}{\ny{3}}{\ny{4}}}\;\;\overset{T}{\Longleftrightarrow}\;\;
\ytableaushort{{\ny{1}},{\ny{2}},{\ny{3}},{\ny{4}}}\;,\qquad
\ytableaushort{{\ny{1}}{\ny{2}},{\ny{3}}{\ny{4}}}\;\;\overset{T}{\Longleftrightarrow}\;\;
\ytableaushort{{\ny{1}}{\ny{3}},{\ny{2}}{\ny{4}}}\,.
\end{equation*}
\end{example}

\section{Fused solutions of affine Hecke reflection equation} \label{sec:4}

In this section we construct fused boundary operators, with values in the affine Hecke algebra, satisfying the fused reflection equation ({\it fused RE}). Following the same strategy as in Section 3, we will first define a fused boundary operator of a generic type satisfying the fused RE. We will then give a specialization of this fused boundary operator for a given tableau $\cT_m$. In the next section these operators will lead to transfer matrices, which are analogues of the Sklyanin--type transfer matrices and generalizations of the elements considered in \cite{Isaev05,Isaev10}. Fused boundary operators of a similar type for arbitrary representations were also considered in \cite{IsMoOg11}.%

Let us introduce fused Yang-Baxter operators ($\vec{u}\in\cV_m,\; \vec{v}\in\cV_n'$)
\begin{align}
Y_{j,++}^{\cV_m,\cV'_n}(x) &= \prod_{i=1\ldots m}^{\leftarrow} \vX_{n,j+i-1}(xu_{m-i+1}\vec{v}\,) \;= \prod_{i=1\ldots n}^{\rightarrow} \cX_{m,j+i-1}(x\vec{u}\,v_i) \,, \label{Ypp} \\
Y_{j,+-}^{\cV_m,\cV'_n}(x) &= \prod_{i=1\ldots m}^{\leftarrow} \vX_{n,j+i-1}(xu_{m-i+1}/\cev{v}) = \prod_{i=1\ldots n}^{\rightarrow} \cX_{m,j+i-1}(x\vec{u}/v_{n-i+1}) \,, \label{Ypm} \\
Y_{j,-+}^{\cV_m,\cV'_n}(x) &= \prod_{i=1\ldots m}^{\leftarrow} \vX_{n,j+i-1}(x\vec{v}/u_i) \hspace{.87cm} = \prod_{i=1\ldots n}^{\rightarrow} \cX_{m,j+i-1}(x v_i/\cev{u}) \,, \label{Ymp} \\
Y_{j,--}^{\cV_m,\cV'_n}(x) &= \prod_{i=1\ldots m}^{\leftarrow} \vX_{n,j+i-1}(x/(u_i\cev{v}\,)) \hspace{.51cm} = \prod_{i=1\ldots n}^{\rightarrow} \cX_{m,j+i-1}(x/(\cev{u}\,v_{n-i+1})) \,. \label{Ymm}
\end{align}
These fused operators are combinations of the elements given in Proposition \ref{prop31}. Their role will be clear after we introduce a Theorem below. For now let us note that elements $Y_{j,\pm\pm}^{\cV_m,\cV'_n}(x)$ are related to each other by the conjugation map \eqref{map:C} as shown in the diagram below:
\begin{center}\btp
	\node[] at (-15,0) {$Y_{j,++}^{\cV_m,\cV'_n}(x)$};	
	\node[] at (0,3) {$Y_{j,+-}^{\cV_m,\cV'_n}(x)$};		
	\node[] at (0,-3) {$Y_{j,-+}^{\cV_m,\cV'_n}(x)$};	
	\node[] at (15,0) {$Y_{j,--}^{\cV_m,\cV'_n}(x)$};	
	\node[] at (-7.5,-3) {\small $C_{\cV_m}$};
	\node[] at (-7.5,3) {\small $C_{\cV'_n}$};
	\node[] at (7.5,-3) {\small $C_{\cV'_n}$};
	\node[] at (7.5,3) {\small $C_{\cV_m}$};	
	\draw[<->,color=black] (-11,1) -- (-4,2.7);
	\draw[<->,color=black] (-11,-1) -- (-4,-2.7);
	\draw[<->,color=black] (4,2.7) -- (11,1);
	\draw[<->,color=black] (4,-2.7) -- (11,-1) ;
\etp
\end{center}
\smallskip
\begin{thrm} \label{th51} 
Fused boundary operators $(\forall x,z;\; \forall \cV_m,\cV_n')$
\begin{equation}
B^{\cV_m}_j(x) = \prod_{i=1\ldots m}^{\rightarrow} \cX_{i-1,j} (x^2 u_{i}\, \vec{u}_{(i-1)}) \, \tau_j(x\,u_{i})
\label{FusedB} 
\end{equation}
are solutions of the fused reflection equation
\begin{equation} \label{fRE}
Y_{j,-+}^{\cV'_n,\cV_m}(x/z)\,B_j^{\cV_{m}}(x)\,Y_{j,++}^{\cV_m,\cV'_n}(xz)\,B_j^{\cV'_{n}}(z) = B_j^{\cV'_{n}}(z)\,Y_{j,++}^{\cV'_n,\cV_m}(xz)\,B_j^{\cV_{m}}(x)\,Y_{j,+-}^{\cV_m,\cV'_n}(x/z) \,. 
\end{equation}
\end{thrm}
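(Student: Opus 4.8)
The plan is to prove the fused reflection equation \eqref{fRE} by induction on $m$ (the length of the multivector $\vec u \in \cV_m$), keeping $n$ and $\vec v\in\cV'_n$ fixed, with the base case $m=1$ being (a $\vec v$-iterated form of) the ordinary reflection equation \eqref{RE} for $\tau_j(x)$ together with the bulk fused YBEs \eqref{fYBE1}--\eqref{fYBE3} of Proposition \ref{prop31}. The crucial structural observation is that the definitions \eqref{FusedB} and \eqref{Ypp}--\eqref{Ymm} are built so that peeling off the last factor of $B_j^{\cV_m}(x)$ exposes a $\tau_j(x u_m)$ surrounded by $\cX$-strings, while simultaneously the $Y$-operators split into a product of a shifted $Y$-operator on the first $m-1$ components of $\vec u$ and a single $\vX$- or $\cX$-string carrying $u_m$. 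Concretely, I would first record the recursion
\begin{equation}
B_j^{\cV_m}(x) = B_j^{\cV_{m-1}}(x)\, \cX_{m-1,j}(x^2 u_m \vec u_{(m-1)})\, \tau_{j}(x u_m),
\end{equation}
(reading off \eqref{FusedB}), and the companion splittings of $Y_{j,-+}^{\cV'_n,\cV_m}$, $Y_{j,++}^{\cV_m,\cV'_n}$, $Y_{j,++}^{\cV'_n,\cV_m}$, $Y_{j,+-}^{\cV_m,\cV'_n}$ into their $\vec u_{(m-1)}$-parts and their $u_m$-parts, using the two equivalent product forms given in \eqref{Ypp}--\eqref{Ypm} to choose whichever ordering makes the $u_m$-string appear adjacent to the factor of $B$ that contains $u_m$.

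The inductive step then proceeds as a ``commute-the-$u_m$-block-across'' computation. Starting from the left-hand side of \eqref{fRE}, I would isolate the sub-expression involving only the $u_m$-strings and the rightmost $\tau_j(xu_m)$ / $\tau_j(zu_m)$-flavoured pieces (there is only one $B^{\cV_m}$ on each side, so $\tau_j(zu_m)$ does not actually occur; the $z$-dependence sits in $B^{\cV'_n}_j(z)$ which is untouched by the $m$-recursion). Using \eqref{fYBE1}, \eqref{fYBE2}, \eqref{fYBE3} to drag the $\cX_{m-1,j}(x^2 u_m\vec u_{(m-1)})$ block and the $u_m$-components of the various $Y$'s past the intervening strings, and using the reflection equation \eqref{RE} (in the Baxterized-$\tau_j$ form) exactly once to swap $\tau_j(xu_m)$ past $B_j^{\cV'_n}(z)$ through a single $X_j$, the whole $u_m$-sector reassembles on the right-hand side into $\cX_{m-1,j}(x^2u_m\vec u_{(m-1)})\tau_j(xu_m)$ sitting in the position dictated by \eqref{FusedB}, while the remaining factors are precisely the $m-1$ version of \eqref{fRE} with $\vec u$ replaced by $\vec u_{(m-1)}$ and $j$ possibly shifted. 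Invoking the induction hypothesis on that residual identity closes the argument. One should also separately check the base case $m=1$: here $B_j^{\cV_1}(x)=\tau_j(xu_1)$ and \eqref{fRE} reduces, after an auxiliary induction on $n$ built from \eqref{RE} and the fused YBEs (this is really the ``boundary fusion of the second space'' and is the mirror image of the main argument), to iterated applications of \eqref{RE}.

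The main obstacle I anticipate is purely bookkeeping: keeping the index shifts $j\mapsto j+i-1$, the orderings $\prod^{\rightarrow}$ vs $\prod^{\leftarrow}$, and the four decorations $\pm\pm$ consistent while moving blocks around, so that at each stage the string one wants to commute is genuinely adjacent (the fused YBEs \eqref{fYBE1A}--\eqref{fYBE3A} only apply to neighbouring strings with matched endpoints $j$ and $j+m$). The conjugation-map diagram relating $Y_{j,++},Y_{j,+-},Y_{j,-+},Y_{j,--}$ via \eqref{map:C} is the right tool to avoid re-deriving each splitting from scratch: once the identity is established for the $++/-+$ pattern, the appearances of $Y_{j,+-}$ and the $\cV'_n$-versions follow by applying $C_{\cV_m}$ and $C_{\cV'_n}$, which are symmetries of all the fused YBEs. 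A secondary subtlety is ensuring the argument of $\tau_j$ in \eqref{RE}, namely the combinations $xu_i/zu_k$ and $xu_i\cdot zu_k$, line up with the spectral parameters appearing in \eqref{FusedB} and in the $Y$'s; this is exactly what the factors $x^2 u_i\vec u_{(i-1)}$ in \eqref{FusedB} are engineered to arrange, so it should work out, but it is the place where a sign or inverse error would hide.
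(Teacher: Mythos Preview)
Your approach is essentially the paper's own: a double induction on $(m,n)$ with base case the elementary reflection equation \eqref{RE}, using the recursion $B_j^{\cV_m}(x)=B_j^{\cV_{m-1}}(x)\,\cX_{m-1,j}(x^2u_m\vec u_{(m-1)})\,\tau_j(xu_m)$ and the corresponding peel-off of the $u_m$-string from each $Y$-operator, then moving the $u_m$-block across via fused YBEs and the reflection equation before invoking the induction hypothesis. The paper organizes the inductive step into four explicit ``steps'' (two fused-YBE rearrangements, the reflection step, and the induction hypothesis), which is precisely your ``commute-the-$u_m$-block-across'' computation.

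One correction, however: your claim that the reflection equation \eqref{RE} is used \emph{exactly once} ``through a single $X_j$'' to pass $\tau_j(xu_m)$ past $B_j^{\cV'_n}(z)$ is not right. That passage (the paper's Step~2) is itself the full $m=1$ case of the fused RE, and requires $n$ applications of \eqref{RE} (one for each $\tau_j(zv_i)$ inside $B_j^{\cV'_n}(z)$) interleaved with fused YBEs of types \eqref{fYBE1} and \eqref{fYBE3}. You do acknowledge this implicitly when you say the base case $m=1$ needs its own induction on $n$; the point is that this same $m=1$ identity is exactly what you must invoke inside the inductive step for general $m$, not just at the bottom of the outer induction. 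Once you replace ``a single RE'' by ``the already-established $m=1$ fused RE'', your sketch is complete and matches the paper.
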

\begin{proof} 
We will prove the fused RE by induction, and use a diagrammatic representation to illustrate the proof. There are two cases that need to be considered.

\noindent {\it Case 1}. Let $B_j^{\cV_{m-1}}(x)\subset B_j^{\cV_{m}}(x)$ be a fused boundary operator for the subset $\cV_{m-1}\subset\cV_m$. Suppose that the pair of operators $B_j^{\cV_{m-1}}(x)$ and $B_j^{\cV'_{n}}(z)$ is a solution of \eqref{fRE}. Then we need to prove that the pair $B_j^{\cV_{m}}(x)$ and $B_j^{\cV'_{n}}(z)$ is also a solution of the fused RE. The proof goes in the following steps:

\begin{center}\btp
	\draw[-] (0,0) -- (9,0);
	\draw[->,color=black,thick] (0.5,6) -- (2.5,0) -- (5,6);
	\draw[->,color=black,thick] (0,5) -- (6,0) -- (9,3);
	\draw[->,color=black,dashed] (1,6) -- (3,0) -- (5.5,6);
	\draw[-] (11,0) -- (20,0);
	\draw[->,color=black,thick] (11.5,6) -- (13.5,0) -- (16,6);
	\draw[->,color=black,thick] (11,5) -- (17,0) -- (20,3);
	\draw[->,color=black,dashed] (14,6) -- (16,0) -- (18.5,6);
	\draw[-] (22,0) -- (31,0);
	\draw[->,color=black,thick] (22.5,6) -- (24.5,0) -- (27,6);
	\draw[->,color=black,thick] (22,5) -- (28,0) -- (31,3);
	\draw[->,color=black,dashed] (26,6) -- (29,0) -- (31,6);
	\draw[-] (33,0) -- (42,0);
	\draw[->,color=black,thick] (35.3,6) -- (37.3,0) -- (39.3,6);
	\draw[->,color=black,thick] (33,3) -- (36,0) -- (42,5);
	\draw[->,color=black,dashed] (37.5,6) -- (39.5,0) -- (41.5,6);
	\draw[-] (44,0) -- (53,0);
	\draw[->,color=black,thick] (49,6) -- (50,0) -- (52,6);
	\draw[->,color=black,thick] (44,3) -- (47,0) -- (53,5);
	\draw[->,color=black,dashed] (49.5,6) -- (50.5,0) -- (52.5,6);
	\node[right] at (8.3,-1.5) {\it \scriptsize Step 1 \hspace{2.3cm} Step 2 \hspace{2.25cm} Step 3  \hspace{2.25cm} Step 4 };
\etp
\end{center}
Here the dashed line represents the difference between the operators $B_j^{\cV_{m-1}}(x)$ and $B_j^{\cV_{m}}(x)$.

\noindent {\it Case 2}. Let $B_j^{\cV'_{n-1}}(z)\subset B_j^{\cV'_n}(z)$ be a fused boundary operator for the subset $\cV'_{n-1}\subset\cV'_n$. Suppose that the pair of operators $B_j^{\cV_{m}}(x)$ and $B_j^{\cV'_{n-1}}(z)$ is a solution of \eqref{fRE}. Then we need to prove that the pair $B_j^{\cV_{m}}(x)$ and $B_j^{\cV'_{n}}(z)$ is also a solution of the fused RE. The proof goes in the following steps:
\begin{center}\btp
	\draw[-] (0,0) -- (9,0);
	\draw[->,color=black,thick] (0.5,6) -- (2.5,0) -- (5,6);
	\draw[->,color=black,thick] (0,5) -- (6,0) -- (9,3.5);
	\draw[->,color=black,dashed] (0,5.5) -- (6.5,0) -- (9,3);
	\draw[-] (11,0) -- (20,0);
	\draw[->,color=black,thick] (12.4,6) -- (14.5,0) -- (16.6,6);
	\draw[->,color=black,thick] (11,3) -- (13,0) -- (20,3.5);
	\draw[->,color=black,dashed] (11,5.5) -- (17.5,0) -- (20,3);
	\draw[-] (22,0) -- (31,0);
	\draw[->,color=black,thick] (24,6) -- (26.2,0) -- (28.4,6);
	\draw[->,color=black,thick] (22,3) -- (24,0) -- (31,3.5);
	\draw[->,color=black,dashed] (22,3.5) -- (28,0) -- (31,3);
	\draw[-] (33,0) -- (42,0);
	\draw[->,color=black,thick] (36.5,6) -- (38.5,0) -- (40.5,6);
	\draw[->,color=black,thick] (33,3) -- (35,0) -- (42,3.5);
	\draw[->,color=black,dashed] (33,3.5) -- (39.5,0) -- (42,3);
	\draw[-] (44,0) -- (53,0);
	\draw[->,color=black,thick] (48,6) -- (50,0) -- (52,6);
	\draw[->,color=black,thick] (44,3) -- (47,0) -- (53,3.5);
	\draw[->,color=black,dashed] (44,3.5) -- (47.5,0) -- (53,3);
	\node[right] at (8.3,-1.5) {\it \scriptsize Step 1 \hspace{2.3cm} Step 2 \hspace{2.25cm} Step 3  \hspace{2.25cm} Step 4 };
\etp
\end{center}
Here the dashed line represents the difference between the operators $B_j^{\cV'_{n-1}}(z)$ and $B_j^{\cV_{n}}(z)$

The base for induction is \eqref{RE}. Then by virtue of {\it Case 1} and {\it Case 2} one can obtain fused RE for any $\cV_m$ and $\cV'_n$. We will give a proof for the first case only. The proof for the second case is analogous. 

We start by factorizing fused operators in both sides of fused RE \eqref{fRE}. Fused operators in the LHS can be factorized in the following way
\begin{align} 
Y_{j,-+}^{\cV'_n,\cV_m}(x/z) &= 
Y_{j,-+}^{\cV'_n,\cV_{m-1}}(x/z) \, \cX_{n,j+m-1}(x u_{m}/(z \cev{v}\,)) \,, \label{Ymfac} \\
Y_{j,++}^{\cV_m,\cV'_n}(xz) &= 
Y_{j+1,++}^{\cV_{m-1},\cV'_n}(xz) \, \vX_{n,j}(xz u_{m}\vec{v}) \,, \label{Ypfac} \\
B_j^{\cV_{m}}(x) &=  
B_j^{\cV_{m-1}}(x) \, \cX_{m-1,j}(x^2 u_m \vec{u}_\circ) \, \tau_j(x\,u_m) \,, \label{RE:fact1}
\end{align}
giving
\begin{align} \label{RE:step0}
& Y_{j,-+}^{\cV'_n,\cV_m}(x/z) \,B_j^{\cV_{m}}(x) \, Y_{j,++}^{\cV_m,\cV'_n}(xz) \, B_j^{\cV'_{n}}(z)\el
& \qquad = Y_{j,-+}^{\cV'_n,\cV_{m-1}}(x/z) \, B_j^{\cV_{m-1}}(x) \, \cX_{n,j+m-1}(x u_{m}/(z \cev{v}\,)) \, \cX_{m-1,j}(x^2 u_m \vec{u}_\circ) \, Y_{j+1,++}^{\cV_{m-1},\cV'_n}(xz) \el
& \qquad \qquad \times \tau_j(x\,u_m) \, \vX_{n,j}(xz u_{m}\vec{v}) \, B_j^{\cV'_{n}}(z) \,.
\end{align}
Similarly, operators in the RHS can be factorized as follows,
\begin{align} \label{RE:fact2}
Y_{j,+-}^{\cV'_n,\cV_m}(xz) &= Y_{j,+-}^{\cV'_n,\cV_{m-1}} (xz)\, \cX_{n,j+m-1}(x zu_m \vec{v}) \,, \\
Y_{j,--}^{\cV_m,\cV'_n}(x/z) &= Y_{j+1,--}^{\cV_{m-1},\cV'_n}(x/z) \, \vX_{n,j}(xu_m/(z \cev{v})) \,, \\
B_j^{\cV_{m}}(x) &= B_j^{\cV_{m-1}}(x) \, \cX_{m-1,j}(x^2 u_m \vec{u}_\circ) \, \tau_j(x\,u_m) \,,
\end{align}
giving
\begin{align} \label{RE:step5}
& B_j^{\cV'_{n}}(z) \, Y_{j,+-}^{\cV'_n,\cV_m}(xz) \, B_j^{\cV_{m}}(x) \, Y_{j,--}^{\cV_m,\cV'_n}(x/z) \el
& \qquad = B_j^{\cV'_{n}}(z) \, Y_{j,+-}^{\cV'_n,\cV_{m-1}} (xz)\, \cX_{n,j+m-1}(x zu_m \vec{v}) \, B_j^{\cV_{m-1}}(x) \el
& \qquad\qquad \times \cX_{m-1,j}(x^2 u_m \vec{u}_\circ) \, \tau_j(x\,u_m) \, Y_{j+1,--}^{\cV_{m-1},\cV'_n}(x/z) \, \vX_{n,j}(xu_m/(z \cev{v})) \,.
\end{align}
\begin{figure} \begin{center}\btpm
	\draw[-] (0,0) -- (50,0);
	\draw[->] (0,30) -- (20,0) -- (40,30);
	\draw[->] (4,30) -- (24,0) -- (44,30);
	\draw[->] (8,30) -- (28,0) -- (48,30);
	\draw[->,dashed] (12,30) -- (32,0) -- (52,30);
	\draw[->] (-1,26.5) -- (53,26.5);
	\draw[->] (1,23) -- (51,23);
	\draw[->] (3,19.5) -- (49,19.5);
	\draw[->] (5,16) -- (47,16);
	\node[]	at (6,12) {\scriptsize  $Y_{j,-+}^{\cV'_n,\cV_m}(x/z)$};
	\node[]	at (46,12) {\scriptsize  $Y_{j,++}^{\cV_m,\cV'_n}(xz)$};
	\node[]	at (13,2) {\scriptsize $B_{j}^{\cV_{m}}(x)$};
	\node[] at (4,16) {\small $v_{1}$};
	\node[]	at (-2,26.5) {\small $v_n$}; 
	\node[] at (-.2,23) {\small $v_{n\mns1}$};
	\node[] at (2,19.5) {\small $v_{2}$};
	\node[right] at (-1.5,31) {\small $u_{1}\hspace{.7cm} u_{2}\hspace{.6cm} u_{m \mns 1}\hspace{.6cm} u_{m}$};
	\node[right] at (39.3,31) {\small $u_{1}\hspace{.7cm} u_{2}\hspace{.6cm} u_{m \mns 1}\hspace{.6cm} u_{m}$};
	\node[right] at (19,-1) {\small $u_{1}\hspace{.7cm} u_{2}\hspace{.6cm} u_{m \mns 1}\hspace{.6cm} u_{m}$};%
	\node[]	at (3,27.7) {\small $\frac{u_1}{v_n}$};
		\node[]	at (7,27.7) {\small $\frac{u_2}{v_n}$};
			\node[]	at (11,27.7) {\small $\frac{u_{m\mns1}}{v_n}$};
				\node[]	at (15,27.7) {\small $\frac{u_m}{v_n}$};
	\node[]	at (5.5,24.2) {\small $\frac{u_1}{v_{n\mns1}}$};
		\node[]	at (9.5,24.2) {\small $\frac{u_2}{v_{n\mns1}}$};
			\node[]	at (13.5,24.2) {\small $\frac{u_{m\mns1}}{v_{n\mns1}}$};
				\node[]	at (17.7,24.2) {\small $\frac{u_m}{v_{n\mns1}}$};
	\node[]	at (7.5,20.6) {\small $\frac{u_1}{v_2}$};
		\node[]	at (11.5,20.6) {\small $\frac{u_2}{v_2}$};
			\node[]	at (15.5,20.6) {\small $\frac{u_{m\mns1}}{v_2}$};
				\node[]	at (19.7,20.6) {\small $\frac{u_m}{v_2}$};
	\node[]	at (9.8,17.2) {\small $\frac{u_1}{v_1}$};
		\node[]	at (13.9,17.2) {\small $\frac{u_2}{v_1}$};
			\node[]	at (18.1,17.2) {\small $\frac{u_{m\mns1}}{v_1}$};
				\node[]	at (22,17.2) {\small $\frac{u_m}{v_1}$};
	\node[]	at (39.8,27.4) {\scriptsize $u_1 v_n$};
		\node[]	at (43.8,27.4) {\scriptsize $u_2 v_n$};
			\node[]	at (48.2,27.4) {\scriptsize $u_{m\mns1} v_n$};
				\node[]	at (52,27.4) {\scriptsize $u_m v_n$};
	\node[]	at (37.7,23.8) {\scriptsize $u_1 v_{n\mns1}$};
		\node[]	at (41.7,23.8) {\scriptsize $u_2 v_{n\mns1}$};
			\node[]	at (45.9,23.8) {\scriptsize $u_{m\mns1} v_{n\mns1}$};
				\node[]	at (49.8,23.8) {\scriptsize $u_m v_{n\mns1}$};
	\node[]	at (35.1,20.3) {\scriptsize $u_1 v_2$};
		\node[]	at (39,20.3) {\scriptsize $u_2 v_2$};
			\node[]	at (43.4,20.3) {\scriptsize $u_{m\mns1}v_2$};
				\node[]	at (47.2,20.3) {\scriptsize $u_m v_2$};
	\node[]	at (32.8,16.8) {\scriptsize $u_1 v_1$};
		\node[]	at (36.7,16.8) {\scriptsize $u_2 v_1$};
			\node[]	at (41,16.8) {\scriptsize $u_{m\mns1} v_1$};
				\node[]	at (44.9,16.8) {\scriptsize $u_m v_1$};
	\node[]	at (27.4,9) {\tiny $u_{1} u_{m}$};
		\node[]	at (25.8,5.9) {\tiny $u_{1} u_{m\mns1}$};
		\node[]	at (29.7,5.9) {\tiny $u_{2} u_{m}$};
			\node[]	at (23.3,2.9) {\tiny $u_{1} u_{2}$};
			\node[]	at (27.8,2.9) {\tiny $u_{2} u_{m\mns1}$};
			\node[]	at (32,2.9) {\tiny $u_{m\mns1} u_{m}$};
	\node[color=black!70]	at (1,25.7) {\tiny $j\pls n\mns2$};
		\node[color=black!70]	at (3.3,22.2) {\tiny $j\pls n\mns1$};
			\node[color=black!70]	at (6.1,18.7) {\tiny $j\pls1$};
				\node[color=black!70]	at (8.9,15.2) {\tiny $j$};
				\node[color=black!70]	at (12.3,15.2) {\tiny $j\pls1$};
				\node[color=black!70]	at (15.9,15.2) {\tiny $j\pls m\mns 2$};
				\node[color=black!70]	at (19.9,15.2) {\tiny $j\pls m\mns 1$};
	\node[color=black!70]	at (34.5,25.7) {\tiny $j\pls m\pls n\mns2$};
		\node[color=black!70]	at (32.2,22.2) {\tiny $j\pls m\pls n\mns3$};
			\node[color=black!70]	at (30.7,18.7) {\tiny $j\pls m$};
				\node[color=black!70]	at (28,15.2) {\tiny $j\pls m\mns1$};
				\node[color=black!70]	at (32.3,15.2) {\tiny $j\pls m\mns2$};
				\node[color=black!70]	at (36.7,15.2) {\tiny $j\pls1$};
				\node[color=black!70]	at (41.1,15.2) {\tiny $j$};
	\node[color=black!70]	at (24,9) {\tiny $j\pls m\mns2$};
		\node[color=black!70]	at (22.1,5.9) {\tiny $j\pls m\mns3$};
			\node[color=black!70]	at (21,2.9) {\tiny $j$};
\etpm 
\end{center}
\caption[Composite operator]{
{The composite operator {\small  $Y_{j,-+}^{\cV'_n,\cV_m}(x/z)\,B_j^{\cV_{m}}(x)\,Y_{j,++}^{\cV_m,\cV'_n}(xz)$}. The notion employed in the diagram is:
\medskip\\
{\small
\btp 
	\draw[-] (-4,0) -- (-4,0);  
	\draw[-] (-.5,.8) -- (.5,-.8) ;
	\draw[-] (-.8,0) -- (.8,0);
	\node[] at (1.7,-.2) {\small $\frac{u_i}{v_j}$};
\etp = $X_{k}(\frac{x\, u_i}{z\, v_j})$,\qquad
\btp 
	\draw[-] (.5,.8) -- (-.5,-.8) ;
	\draw[-] (-.8,0) -- (.8,0);
	\node[] at (2.2,-.2) {\small ${u_i}{v_j}$};
\etp = $X_{k}(x z\, u_i v_j)$,\qquad
\btp 
	\draw[-] (.5,.7) -- (-.5,-.7) ;
	\draw[-] (.5,-.7) -- (-.5,.7) ;
	\node[] at (1.8,-.2) {\small ${u_i}{u_j}$};
\etp = $X_{k}(x^2 u_i u_j)$,\qquad
\btp 
	\draw[-] (-.7,.8) -- (0,0) -- (.7,.8);
	\draw[-] (-.7,0) -- (.7,0);
	\node[] at (0,-.7) {\small $u_i$};
\etp \, =  $\tau_j(x u_i)$,}
\medskip\\
where $k=j,\,\ldots,\,j\!+\!m\!+\!n\!-\!2$ measures the distance to the boundary (horizontal bottom line). The dashed line represents the right factors of \eqref{Ymfac}, \eqref{Ypfac} and \eqref{RE:fact1}.}
} \label{fig:RE-YBY}
\end{figure}
\noindent {\it Step 1.} 
Consider the composite operator $\cX_{n,j+m-1}(x u_{m}/(z \cev{v})) \cX_{m-1,j}(x^2 u_m \vec{u}_\circ\!)  Y_{j+1,++}^{\cV_{m-1},\cV'_n}(xz)$ bound\-ed by the dashed line in figure \ref{fig:RE-YBY}. Observe that the inner triangles having the dashed line as their left edge form a family of fused YBEs \eqref{fYBE3} with quantum contents serving as the coordinates:
\begin{equation*}
\btp 
	\draw[-,dotted] (-1.2,.6) -- (0,-1.0);
	\draw[-] (0,-1.0) -- (1.2,.6) -- (-1.2,.6);
	\node[] at (2.6,-.2) {$\to$};	
	\draw[-,dotted] (6.4,-1) -- (5.2,.6);
	\draw[-] (6.4,-1) -- (4,-1) -- (5.2,.6);
\etp \quad : \quad \left( \frac{u_m}{\cev{v}} ,\, u_i u_m ,\, u_i \vec{v} \right).
\end{equation*} 
For example the innermost fused YBE is
\begin{align}
& \cX_{n,j+m-1}({x u_m}/({z \cev{v}})) \, X_{j+m-2}(x^2 u_1 u_m) \, \vX_{n,j+m-1}(xz \, u_1 \vec{v}) = \el
& \qquad \vX_{n,j+m-2}(xz \, u_1 \vec{v}) \, X_{j+m+n-2}(x^2 u_1 u_m) \, \cX_{n,j+m-2}({xu_m}/({z\cev{v}})) \,.
\end{align}
Thus, by a virtue of fused YBEs, we obtain
%
%
\begin{align} \label{RE:step1}
& \cX_{n,j+m-1}(x u_{m}/(z \cev{v}\,)) \, \cX_{m-1,j}(x^2 u_m \vec{u}_\circ) \, Y_{j+1,++}^{\cV_{m-1},\cV'_n}(xz) \el
& \qquad = Y_{j,++}^{\cV_{m-1},\cV'_n}(xz) \, \cX_{m-1,j+n}(x^2 u_m \vec{u}_\circ) \, \cX_{n,j}(x u_{m}/(z \cev{v}\,)) \,.
\end{align}

\begin{figure} \begin{center}\btpm
	\node[right] at (9,-1) {\small $u_{m}\hspace{1.45cm} v_{1}\hspace{1.5cm}  v_{2}\hspace{1.45cm} v_{n\mns1}\hspace{1.3cm} v_{n}$};
	\node[]	at (27.5,9) {\scriptsize $B_{j}^{\cV'_{n}}(z)$};
	\node[]	at (21.2,15.5) {\scriptsize $\vX_{n,j}(x z u_{m}\vec{v})$};
	\node[]	at (1,3) {\scriptsize $\cX_{n,j}(x u_{m}/(z \cev{v}))$};
	\draw[-] (0,0) -- (40,0);
	\draw[->,dashed] (1,21) -- (10,0) -- (19,21);
	\draw[->] (0,20) -- (38,0) -- (40,1);
	\draw[->] (0,16) -- (31,0) -- (40,4.5);
	\draw[->] (0,12) -- (24,0) -- (40,8);
	\draw[->] (0,8) -- (17,0) -- (40,11.5);
	\node[]	at (3.3,20) {\small $\frac{u_m}{v_n}$};
		\node[]	at (5.5,15) {\small $\frac{u_m}{v_{n\mns1}}$};
			\node[]	at (7.1,10.2) {\small $\frac{u_m}{v_2}$};
				\node[]	at (9.3,5.3) {\small $\frac{u_m}{v_1}$};
	\node[color=black!70]	at (0.2,18.3) {\tiny $j\pls n\mns1$};
		\node[color=black!70]	at (2.2,13.1) {\tiny $j\pls n\mns2$};
			\node[color=black!70]	at (5,8.1) {\tiny $j\pls1$};
				\node[color=black!70]	at (7.3,3.5) {\tiny $j$};
	\node[]	at (17.3,12.5) {\scriptsize $u_m v_n$};
		\node[]	at (16.4,9.2) {\scriptsize ${u_m}{v_{n\mns1}}$};
			\node[]	at (14.5,6.1) {\scriptsize ${u_m}{v_2}$};
				\node[]	at (13.2,3.1) {\scriptsize ${u_m}{v_1}$};
	\node[]	at (30.2,5.3) {\scriptsize $v_{1} v_{m}$};
		\node[]	at (27,3.5) {\scriptsize $v_{1} v_{m\mns1}$};
		\node[]	at (33.7,3.5) {\scriptsize $v_{2} v_{m}$};
			\node[]	at (22.8,1.7) {\scriptsize $v_{1} v_{2}$};
			\node[]	at (30.5,1.7) {\scriptsize $v_{2} v_{m\mns1}$};
			\node[]	at (37.6,1.7) {\scriptsize $v_{m\mns1} v_{m}$};
	\node[color=black!70]	at (25.1,5.3) {\tiny $j\pls n\mns2$};
		\node[color=black!70]	at (21.3,3.5) {\tiny $j\pls n\mns3$};
			\node[color=black!70]	at (18.5,1.7) {\tiny $j$};

\etpm
\end{center}
\caption[Composite operator]{The composite operator {\small $\cX_{n,j}(x u_{m}/(z \cev{v})) \, \tau_j(x\,u_m)\, \vX_{n,j}(x z u_{m}\vec{v}) \, B_{j}^{\cV'_{n}}(z)$}. The notion employed in the diagram is: 
\medskip\\
{\small\hspace{0.1cm}
\btp 
	\draw[] (-2,0) -- (-2,0) ; 
	\draw[-,dashed] (-.3,.8) -- (.3,-.8) ;
	\draw[-] (-.8,.3) -- (.8,-.3);
	\node[] at (1.7,-.2) {\small $\frac{u_i}{v_j}$};
\etp = $X_{k}(\frac{x\, u_i}{z\, v_j})$,\;\;
\btp 
	\draw[-,dashed] (.3,.8) -- (-.3,-.8) ;
	\draw[-] (-.8,.3) -- (.8,-.3);
	\node[] at (2.2,-.2) {\small ${u_i}{v_j}$};
\etp = $X_{k}(x z\, u_i v_j)$,\;\;
\btp 
	\draw[-] (.6,.6) -- (-.6,-.6) ;
	\draw[-] (.6,-.6) -- (-.6,.6) ;
	\node[] at (2,-.2) {\small ${v_i}{v_j}$};
\etp = $X_{k}(z^2 v_i v_j)$,\;\;
\btp 
	\draw[-,dashed] (-.8,.8) -- (0,0) -- (.8,.8);
	\draw[-] (-.8,0) -- (.8,0);
	\node[] at (0,-.7) {\small $u_m$};
\etp \, =  $\tau_j(x u_m)$, \;
\btp 
	\draw[-] (-.8,.8) -- (0,0) -- (.8,.8);
	\draw[-] (-.8,0) -- (.8,0);
	\node[] at (0,-.7) {\small $v_i$};
\etp \, =  $\tau_j(z v_i)$,
}
\medskip\\
where $k=j,\,\ldots,\,j\!+\!n\!-\!1$ measures the distance to the boundary (horizontal bottom line). } \label{fig:RE-XXB}
\end{figure}

\noindent{\it Step 2.} 
Consider the composite operator $\cX_{n,j}(x u_{m}/(z \cev{v}))\, \tau_j(x\,u_m)\, \vX_{n,j}(x z u_{m}\vec{v})\, B_{j}^{\cV'_{n}}(z)$ shown in figure \ref{fig:RE-XXB}. Observe that the inner triangles having the dashed line as their left edge form a family of REs and two families of fused YBEs \eqref{fYBE1} and \eqref{fYBE3} (that follow after applying REs, i.e.\ can be seen in the figure after shifting the dashed line right by an appropriate RE) with quantum contents serving as the coordinates:
\begin{equation*}
\btp 
	\draw[-,dotted] (0,.7) -- (.7,-1.2);
	\draw[-,dotted] (.7,-1.2) -- (1.3,.0);
	\draw[-] (.7,-1.2) -- (3.5,-1.2) -- (0,.7);
	\node[] at (4.8,-.2) {$\to$};
	\draw[-,dotted] (8.15,-.1) -- (8.8,-1.2);
	\draw[-,dotted] (8.8,-1.2) -- (9.5,.7);
	\draw[-] (8.8,-1.2) -- (6,-1.2) -- (9.5,.7);
\etp \;\; : \;\; \left( \frac{u_m}{v_i} ,\, u_m ,\, u_m v_i ,\,  v_i \right),\qquad
\end{equation*} 
and
\begin{equation*}
\btp 
	\draw[-,dotted] (-2,-1.1) -- (-1,1.2);
	\draw[-] (-1,1.2) -- (.3,.2) -- (-2,-1.1);
	\node[] at (2,-.2) {$\to$};
	\draw[-,dotted] (3.8,-.2) -- (5,-1.1) -- (6.1,1.2);
	\draw[-] (6.1,1.2) -- (3.8,-.2);
\etp \;\; : \;\; \left( \frac{u_m}{v_i} ,\, u_m \vec{v}_{\{i\}} ,\, v_i \vec{v}_{\{i\}} \right),
\qquad
\btp 
	\draw[-,dotted] (-1.5,1) -- (-1,-1.1);
	\draw[-] (-1,-1.1) -- (.8,.4) -- (-1.5,1);
	\node[] at (2.2,-.2) {$\to$};
	\draw[-,dotted] (5.3,1.1) -- (6,-1.1);
	\draw[-] (5.3,1.1) -- (3.8,-.6) -- (6,-1.1);
\etp \;\; : \;\; \left( \frac{u_m}{\cev{v}_{\{i\}}} ,\, u_m v_i ,\, v_i \vec{v}_{\{i\}} \right),
\end{equation*}
where $\vec{v}_{\{i\}}=(v_{i+1},v_{i+2},\ldots,v_n)$ and $\cev{v}_{\{i\}}$ is its reverse. For example the innermost RE is
\begin{equation}
X_{j}({x u_m}/({z v_1})) \,\tau_j(xu_m)\, X_{j}(xz u_m v_1) \, \tau_j(z v_1)=
\tau_j(z v_1) \, X_{j}(xz u_m v_1) \, \tau_j(xu_m) \, X_{j}({x u_m}/({z v_1})) \,.
\end{equation}
It follows by the fused YBEs,
\begin{align}
& X_1({x u_m}/({z v_1})) \, \vX_{n-1,2}(xz u_m \vec{v}_{\{1\}}) \, \vX_{n-1,1}(z^2 v_1 \vec{v}_{\{1\}}) 
\el
& \qquad = \vX_{n-1,2}(z^2 v_1 \vec{v}_{\{1\}}) \, \vX_{n-1,1}(xz u_m \vec{v}_{\{1\}}) \, X_n({x u_m}/({z v_1})),
\end{align}
and
\begin{align}
& \cX_{n-1,2}({x u_m}/({z \cev{v}_{\{1\}}})) \, X_1(xz u_m v_1) \, \vX_{n-1,2}(z^2 v_1 \vec{v}_{\{1\}}) \el
& \qquad = \vX_{n-1,1}(z^2 v_1 \vec{v}_{\{1\}}) \, X_n(xz u_m v_1) \, \cX_{n-1,1}({x u_m}/({z \cev{v}_{\{1\}}})) .
\end{align}
These three steps are equivalent to shifting leftwards the line associated with $v_1$ in figure \ref{fig:RE-XXB}. Then, by repeating these steps for $n-1$ times we obtain
%
%
\begin{equation} \label{RE:step2}
\cX_{n,j}(x u_{m}/(z \cev{v}))\, \tau_j(x\,u_m)\, \vX_{n,j}(x z u_{m}\vec{v}) \, B_{j}^{\cV'_{n}}(z) = B_{j}^{\cV'_{n}}(z)\, \cX_{n,j}(x z u_{m}\vec{v})\, \tau_j(x\,u_m)\, \vX_{n,j}(x u_{m}/(z \cev{v})) \,.
\end{equation}
This {\it step} is a special case of the fused RE which is obtained when one of the boundary operators is primitive, namely $\tau_j(x u_m)$.

\noindent{\it Step 3.} 
By combining the RHS of \eqref{RE:step0}, \eqref{RE:step1} and \eqref{RE:step2} we obtain
%
%
\begin{align} \label{RE:step3}
& Y_{j,-+}^{\cV'_n,\cV_{m-1}} (x/z)\, B_j^{\cV_{m-1}}(x)\, Y_{j,++}^{\cV_{1},\cV'_n}(xz)\,B_{j}^{\cV'_{n}}(z) \el
& \qquad\qquad \times  \cX_{m-1,j+n}(x^2 u_m \vec{u}_\circ) \, \cX_{n,j}(x z u_{m}\vec{v})\, \tau_j(x\,u_m)\, \vX_{n,j}(x u_{m}/(z \cev{v})) \el
& \quad = B_{j}^{\cV'_{n}}(z) \, Y_{j,-+}^{\cV'_n,\cV_{m-1}} (xz)\, B_j^{\cV_{m-1}}(x) \, Y_{j,+-}^{\cV_{m-1},\cV'_n}(x/z) \el
& \qquad\qquad \times  \cX_{m-1,j+n}(x^2 u_m \vec{u}_\circ) \, \cX_{n,j}(x z u_{m}\vec{v})\, \tau_j(x\,u_m)\, \vX_{n,j}(x u_{m}/(z \cev{v})) \,,
\end{align}
where we have used the induction hypothesis.

\begin{figure}
\begin{center}\btpm
	\node[]	at (2,13) {\scriptsize  $Y_{j,+-}^{\cV_{m-1},\cV'_n}(x/z)$};
	\node[]	at (27,23) {\scriptsize $\cX_{m-1,j+n}(x^2 u_m \vec{u}_\circ)$};
	\node[]	at (32,13) {\scriptsize $\cX_{n,j}(x z u_{m}\vec{v})$};
	\draw[->] (4,9) -- (30,9);
	\draw[->] (2,6) -- (32,6);
	\draw[->] (0,3) -- (34,3);
	\draw[->] (-2,0) -- (36,0);
	\draw[->] (-1,-1) -- (17,24);
	\draw[->] (4,-1) -- (19.5,21);
	\draw[->] (9,-1) -- (22,18);
	\draw[->,dashed] (14.5,23) -- (33,-1);
	\node[right,color=black!70] at (10.3,21.8) {\tiny $j\pls n\pls m\mns2$};
		\node[right,color=black!70] at (12.9,18.6) {\tiny $j\pls n\pls m\mns3$};
			\node[right,color=black!70] at (17.5,15.5) {\tiny $j\pls n$};
	\node[right,color=black!70] at (22.1,8) {\tiny $j\pls n\mns1$};
		\node[right,color=black!70] at (24.3,5) {\tiny $j\pls n\mns2$};
			\node[right,color=black!70] at (27.7,2) {\tiny $j\pls1$};
				\node[right,color=black!70] at (30.9,-1) {\tiny $j$};
	\node[right] at (15.8,21.8) {\scriptsize $u_m u_1$};
		\node[right] at (18,18.8) {\scriptsize $u_m u_2$};
			\node[right] at (20.5,15.5) {\scriptsize $u_m u_{m\mns1}$};
	\node[right] at (1.7,9) {\small $v_1$};
		\node[right] at (-.5,6) {\small $v_2$};
			\node[right] at (-3,3) {\small $v_{n\mns1}$};
				\node[right] at (-4.4,0) {\small $v_n$};
	\node[right,color=black!70] at (-.6,-.7) {\tiny $j\pls m\mns2\hspace{.6cm} j\pls1\hspace{1cm} j$};
	\node[right] at (-2.5,-2) {\small $u_{1}\hspace{1cm} u_{2}\hspace{1cm} u_{m \mns 1}\hspace{5.8cm} u_{m}$};
	\node[right] at (7.3,10.2) {\small $\frac{u_1}{v_1} \hspace{1cm} \frac{u_2}{v_1} \hspace{1cm} \frac{u_{m\mns1}}{v_1} \hspace{1.6cm} $\scriptsize ${u_m}{v_1}$};
	\node[right] at (5.1,7.2) {\small $\frac{u_1}{v_2} \hspace{1cm} \frac{u_2}{v_2} \hspace{1cm} \frac{u_{m\mns1}}{v_2} \hspace{2.9cm} $\scriptsize ${u_m}{v_2}$};
	\node[right] at (2.8,4.2) {\small $\frac{u_1}{v_{n\mns1}} \hspace{.8cm} \frac{u_2}{v_{n\mns1}} \hspace{.85cm} \frac{u_{m\mns1}}{v_{n\mns1}}  \hspace{4.1cm} $\scriptsize ${u_m}{v_{n\mns1}}$};
	\node[right] at (.85,1.2) {\small $\frac{u_1}{v_n} \hspace{1cm} \frac{u_2}{v_n} \hspace{1.05cm} \frac{u_{m\mns1}}{v_n}  \hspace{5.4cm} $\scriptsize ${u_m}{v_n}$};

\etpm 
\end{center}
\caption[Composite operator]{The composite operator {\small  $Y_{j,+-}^{\cV_{m-1},\cV'_n}(x/z)\, \cX_{m-1,j+n}(x^2 u_m \vec{u}_\circ)\, \cX_{n,j}(x z u_{m}\vec{v})$}. The notion employed in the diagram is: 
\medskip\\
\hspace{2.6cm}{\small
\btp 
	\draw[-] (-10,0) -- (-10,0) ; 
	\draw[-] (-.5,-.8) -- (.5,.8) ;
	\draw[-] (-.8,0) -- (.8,0);
	\node[] at (1.7,-.2) {\small $\frac{u_i}{v_j}$};
\etp = $X_{k}(\frac{x\, u_i}{z\, v_j})$,\qquad
\btp 
	\draw[-] (.5,.7) -- (-.5,-.7) ;
	\draw[-,dashed] (.5,-.7) -- (-.5,.7) ;
	\node[] at (1.8,-.2) {\small ${u_i}{u_j}$};
\etp = $X_{k}(x^2 u_i u_j)$,\qquad
\btp 
	\draw[-,dashed] (.5,-.8) -- (-.5,.8) ;
	\draw[-] (-.8,0) -- (.8,0);
	\node[] at (2.2,-.2) {\small ${u_i}{v_j}$};
\etp = $X_{k}(x z\, u_i v_j)$,
}
\medskip\\
where $k=j,\,\ldots,\,j\!+\!m\!+\!n\!-\!2$ measures the distance to the boundary (not present in this diagram). 
} \label{fig:RE-YYY}
\end{figure}

\noindent{\it Step 4.} 
Consider the composite operator {\small $Y_{j,+-}^{\cV_{m-1},\cV'_n}(x/z)\, \cX_{m-1,j+n}(x^2 u_m \vec{u}_\circ)\, \cX_{n,j}(x z u_{m}\vec{v})$} shown in figure \ref{fig:RE-YYY}. This operator encloses a family of fused YBEs \eqref{fYBE3} via the inner triangles having the dashed line as their right edge and quantum contents serving as their coordinates:
\begin{equation*}
\btp 
	\draw[-,dotted] (0,0.6) -- (1,-1.1);
	\draw[-] (0,0.6) -- (-1,-1.1) -- (1,-1.1);
	\node[] at (2.2,-.2) {$\to$};	
	\draw[-,dotted] (3.6,.6) -- (4.6,-1.1);
	\draw[-] (3.6,.6) -- (5.6,.6) -- (4.6,-1.1);
\etp \;\; : \;\; \left( \frac{u_i}{\cev{v}} ,\, u_i u_m ,\, u_m \vec{v} \right).
\end{equation*} 
For example the innermost fused YBE is
\begin{align}
& \vX_{n,j}(\frac{x u_{m-1}}{z \cev{v}})\, X_{j+n}(x^2  u_m u_{m-1}) \, \cX_{n,j}(xz u_m \vec{v}) \el
& \qquad = \cX_{n,j+1}(xz u_m \vec{v}) \, X_{j}(x^2 u_m u_{m-1}) \, \vX_{j+n-1}(\frac{x u_{m-1}}{z \cev{v}}) .
\end{align}
Thus, by a virtue of fused YBEs, we obtain
%
%
\begin{align} \label{RE:step4}
& Y_{j,+-}^{\cV_{m-1},\cV'_n}(x/z)\, \cX_{m-1,j+n}(x^2 u_m \vec{u}_\circ)\, \cX_{n,j}(x z u_{m}\vec{v}) \el
& \qquad = \cX_{n,j+m-1}(x z u_m \vec{v}) \, \cX_{m-1,j}(x^2 u_m \vec{u}_\circ) \,Y_{j+1,+-}^{\cV_{m-1},\cV'_n}(x/z) \,.
\end{align}

\noindent Finally, by putting all the steps together, we have
\begin{align}
& Y_{j,-+}^{\cV'_n,\cV_m}(x/z)\,B_j^{\cV_{m}}(x)\,Y_{j,++}^{\cV_m,\cV'_n}(xz) B_{j}^{\cV'_{n}}(z) \el
& \quad \overset{\eqref{RE:step0}}{=}  Y_{j,-+}^{\cV'_n,\cV_{m-1}}(x/z) \, B_j^{\cV_{m-1}}(x) \, \cX_{n,j+m-1}(x u_{m}/(z \cev{v}\,)) \, \cX_{m-1,j}(x^2 u_m \vec{u}_\circ) \el
& \hspace{1.3cm} \times  Y_{j+1,++}^{\cV_{m-1},\cV'_n}(xz) \, \tau_j(x\,u_m) \, \vX_{n,j}(xz u_{m}\vec{v}) \, B_j^{\cV'_{n}}(z) \el
& \quad \overset{\eqref{RE:step1}}{=} 
Y_{j,-+}^{\cV'_n,\cV_{m-1}}(x/z)\,B_j^{\cV_{m-1}}(x)\,Y_{j,++}^{\cV_{m-1},\cV'_n}(xz) \el
& \hspace{1.3cm} \times \cX_{m-1,j+n}(x^2 u_m \vec{u}_\circ)\, \cX_{n,j}(xu_{m}/(z\cev{v})) \, \tau_j(x\,u_m) \, \vX_{j}(x z u_{m}\vec{v}) \, B_{j}^{\cV'_{n}}(z)
\el
& \quad \overset{\eqref{RE:step2}}{=}
Y_{j,-+}^{\cV'_n,\cV_{m-1}}(x/z)\,B_j^{\cV_{m-1}}(x)\,Y_{j,++}^{\cV_{m-1},\cV'_n}(xz) \, \cX_{m-1,j+n}(x^2 u_m \vec{u}_\circ) \el
& \hspace{1.3cm} \times B_{j}^{\cV'_{n}}(z)\, \cX_{n,j}(x z u_{m}\vec{v})\, \tau_j(x\,u_m)\, \vX_{n,j}(x u_{m}/(z \cev{v}))
\el
& \quad 
\overset{\eqref{RE:step3}}{=}
B_{j}^{\cV'_{n}}(z) \, Y_{j,++}^{\cV'_n,\cV_{m-1}} (xz)\, B_j^{\cV_{m-1}}(x) \, Y_{j,+-}^{\cV_{m-1},\cV'_n}\!(x/z) \, \cX_{m-1,j+n}(x^2 u_m \vec{u}_\circ) \el
& \hspace{1.3cm} \times \cX_{n,j}(x z u_{m}\vec{v})\, \tau_j(x\,u_m)\, \vX_{n,j}(x u_{m}/(z \cev{v}))
\el
& \quad 
\overset{\eqref{RE:step4}}{=}
B_{j}^{\cV'_{n}}(z) \, Y_{j,++}^{\cV'_n,\cV_{m-1}} (xz)\, B_j^{\cV_{m-1}}(x) \, \cX_{n,j+m-1}(x z u_m \vec{v}) \, \cX_{m-1,j}(x^2 u_m \vec{u}_\circ) \,Y_{j+1,+-}^{\cV_{m-1},\cV'_n}(x/z)
\el
& \hspace{1.3cm} \times \tau_j(x\,u_m)\, \vX_{n,j}(x u_{m}/(z \cev{v})) 
\el
& \quad
\overset{\eqref{RE:step5}}{=} B_{j}^{\cV'_{n}}(z) \, Y_{j,++}^{\cV'_n,\cV_m} (xz) B_j^{\cV_{m}}(x) \, Y_{j,+-}^{\cV_m,\cV'_n}(x/z) \,,
\end{align}
as required.
\end{proof} 


\noindent We will now give a specialization of the fused RE for a pair of tableaux $\cT_m$ and $\cT'_n$.
\begin{thrm} \label{th42} 
Fused projected boundary operators $(\forall x,z;\; \forall \cT_m, \cT'_n)$
\begin{equation}
\bB_j^{\cT_{m}}(x) = A_j^{\cT_{m}} B_j^{\cT_{m}}(x), \label{FusedBA} 
\end{equation}
are solutions of the fused projected reflection equation 
\begin{equation} \label{fREA}
\bY_{j,-+}^{\cT'_n,\cT_m}(x/z)\,\bB_j^{\cT_{m}}(x)\,\bY_{j,++}^{\cT_m,\cT'_n}(xz)\,\bB_j^{\cT'_{n}}(z) = \bB_j^{\cT'_{n}}(z)\,\bY_{j,++}^{\cT'_n,\cT_m}(xz)\,\bB_j^{\cT_{m}}(x,\xi)\,\bY_{j,+-}^{\cT_m,\cT'_n}(x/z) \,,
\end{equation}
where
\begin{align}
 \bY_{j,-+}^{\cT_m,\cT'_n}(x/z) &=  A_{j}^{\cT_m} A_{j+m}^{\cT'_n}\, Y_{j,-+}^{\cT_m,\cT'_n}(x/z) \,,\label{bY1} \\
 \bY_{j,++}^{\cT_m,\cT'_n}(xz) &=  A_{j+m}^{\cT'_n} \, Y_{j,++}^{\cT_m,\cT'_n}(xz) \,, \label{bY2}\\
 \bY_{j,+-}^{\cT_m,\cT'_n}(x/z) &=  Y_{j,+-}^{\cT_m,\cT'_n}(x/z) \,. \label{bY3}
\end{align}
\end{thrm}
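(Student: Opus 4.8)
The plan is to obtain the projected equation \eqref{fREA} from the generic fused reflection equation \eqref{fRE} of Theorem \ref{th51} by specialising the spectral parameters to quantum contents and then dressing the identity with primitive idempotents — the very mechanism by which Proposition \ref{prop32} was deduced from Proposition \ref{prop31} using Lemma \ref{lemma31}. Concretely, I would set $\vec u=\vec c\in\cT_m$ and $\vec v=\vec c\,'\in\cT'_n$ in \eqref{fRE}, multiply both sides on the left by $A_j^{\cT'_n}A_{j+n}^{\cT_m}$ (which is already the leading factor of the left-hand side of \eqref{fREA} by \eqref{bY1}), and then transport these two idempotents rightwards through the string $Y_{j,-+}^{\cT'_n,\cT_m}(x/z)\,B_j^{\cT_m}(x)\,Y_{j,++}^{\cT_m,\cT'_n}(xz)\,B_j^{\cT'_n}(z)$ and through the mirror string on the right, so that at every interface only the idempotent prescribed by \eqref{FusedBA} and \eqref{bY1}--\eqref{bY3} survives and every other copy is absorbed.

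The combinatorial engine would be a boundary analogue of Lemma \ref{lemma31}, which I would establish first in two parts. (i) \emph{Fused Yang--Baxter operators at contents.} Expanding the $\vX$- and $\cX$-product forms \eqref{Ypp}--\eqref{Ymm} and slipping idempotents between consecutive factors by means of \eqref{lemma31:1}--\eqref{lemma31:2} exhibits each $\bY_{j,\pm\pm}$ as an ordered product of the fused projected operators $X^{\cT}_{\cdot,\pm}$ of Proposition \ref{prop32}; in particular the bare $Y_{j,\pm\pm}$ at contents intertwine the $\cT_m$- and $\cT'_n$-idempotents between their input- and output-strand positions (with exactly the $j$-shifts occurring in \eqref{bY1}--\eqref{bY3}) and swallow any idempotent sitting on the side where it is redundant, just as in Lemma \ref{lemma31}. (ii) \emph{Fused boundary operator at contents.} Since by \eqref{FusedB} the operator $B_j^{\cT_m}(x)$ is supported on the strands $j,\ldots,j+m-1$, idempotents carried by disjoint strands commute through it freely, and the crucial identity $A_j^{\cT_m}B_j^{\cT_m}(x)A_j^{\cT_m}=A_j^{\cT_m}B_j^{\cT_m}(x)$ holds; I would prove this by induction on $m$ using the factorisation $B_j^{\cT_m}(x)=B_j^{\cT_{m-1}}(x)\,\cX_{m-1,j}(x^2 c_m \vec{c}_\circ)\,\tau_j(x\,c_m)$ from \eqref{RE:fact1}, the recursion \eqref{A_IMO_rec} for $A_j^{\cT_m}$, Lemma \ref{lemma31} applied to the internal $\cX$-factor, and the locality of $\tau_j(x)$ with respect to the $\sigma_i$ with $i\neq j-1,j$.

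Granting (i)--(ii), the remainder is bookkeeping. On the right-hand side one commutes $A_{j+n}^{\cT_m}$ past $B_j^{\cT'_n}(z)$ into $Y_{j,++}^{\cT'_n,\cT_m}(xz)$, where it is recognised as the outer idempotent of \eqref{bY2} and, by absorption, regenerates the $A_j^{\cT_m}$ demanded by $\bB_j^{\cT_m}(x)$, so that $\bY_{j,+-}^{\cT_m,\cT'_n}(x/z)$ emerges undecorated as in \eqref{bY3}; on the left-hand side one likewise produces the interior $A_j^{\cT_m}$, $A_{j+m}^{\cT'_n}$ and $A_j^{\cT'_n}$ and checks, using (ii), that the two boundary operators absorb the superfluous duplicates, leaving precisely the decoration of \eqref{FusedBA}. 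The truly delicate point is matching all the index shifts across the four sign patterns so as to reproduce \eqref{bY1}--\eqref{bY3} exactly — in particular verifying that $\bY_{j,+-}$ needs no idempotent while $\bY_{j,-+}$ carries two — and the main obstacle is the boundary idempotence identity $A_j^{\cT_m}B_j^{\cT_m}(x)A_j^{\cT_m}=A_j^{\cT_m}B_j^{\cT_m}(x)$: unlike in the bulk, $B_j^{\cT_m}(x)$ does not commute with $A_j^{\cT_m}$, so this absorption must be extracted from the explicit baxterised structure of both operators, and that is where the substance of the argument lies.
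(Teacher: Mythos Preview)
Your plan is essentially the paper's own argument: specialise the generic fused RE \eqref{fRE} at quantum contents, left-multiply by $A_j^{\cT'_n}A_{j+n}^{\cT_m}$, and use your ingredient (i) to propagate idempotents through the $Y$-operators. Your part (i) is exactly the paper's Lemma~\ref{lemma41}, and together with the disjoint-support commutativity you already invoke (e.g.\ $[A_{j+m}^{\cT'_n},B_j^{\cT_m}(x)]=0$) it is all that is needed.

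The one point to correct is your assessment of where the difficulty lies. Your ingredient (ii), the boundary absorption identity $A_j^{\cT_m}B_j^{\cT_m}(x)A_j^{\cT_m}=A_j^{\cT_m}B_j^{\cT_m}(x)$, is \emph{not} required for this theorem, and the paper does not use it. On the left-hand side, once Lemma~\ref{lemma41} (part~1, with labels swapped) regenerates $A_j^{\cT_m}A_{j+m}^{\cT'_n}$ to the right of $Y_{j,-+}^{\cT'_n,\cT_m}$, the factor $A_{j+m}^{\cT'_n}$ commutes past $B_j^{\cT_m}(x)$ by disjoint support and lands in front of $Y_{j,++}^{\cT_m,\cT'_n}$, where part~2 of the lemma regenerates the final $A_j^{\cT'_n}$. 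No idempotent ever has to be absorbed by a $B$-operator on the same strands; the only $B$--idempotent interaction is the trivial disjoint-support commutation. So what you flag as ``the main obstacle'' and ``where the substance of the argument lies'' is in fact a red herring: the substance is entirely in (i), and once Lemma~\ref{lemma41} is in hand the remainder is a two-line bookkeeping exercise, considerably shorter than you anticipate.
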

\noindent For further reference we define the following operator
\begin{equation}
\bY_{j,--}^{\cT'_n,\cT_m}(b/z) = A^{\cT_m}_j \, Y_{j,--}^{\cT'_n,\cT_m}(b/z) \,.\label{bY4}
\end{equation}
Next we give a preparation Lemma required for the proof of the Theorem above.
\begin{lemma} \label{lemma41} 
In the Hecke algebra the following identities hold $(\forall x,\;\forall\cT_m,\cT'_n)$
\begin{align}
A_{j}^{\cT_m} A_{j+m}^{\cT'_n} \, Y_{j,-+}^{\cT_m,\cT'_n}(x) \, A_{j}^{\cT'_{n}} A_{j+n}^{\cT_m} &= A_{j}^{\cT_m} A_{j+m}^{\cT'_n}\, Y_{j,-+}^{\cT_m,\cT'_n}(x) \,, \label{lemma41:1} \\
A_{j+m}^{\cT'_n} \, Y_{j,++}^{\cT_m,\cT'_n}(x) \, A_{j}^{\cT'_{n}} &= A_{j+m}^{\cT'_n} \, Y_{j,++}^{\cT_m,\cT'_n}(x) \,, \label{lemma41:2} \\
A_{j}^{\cT'_n} \, Y_{j,--}^{\cT_m,\cT'_n}(x) \, A_{j+m}^{\cT'_{n}} &= A_{j}^{\cT'_n} \, Y_{j,--}^{\cT_m,\cT'_n}(x) \,. \label{lemma41:3}
\end{align}
\end{lemma}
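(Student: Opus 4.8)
The plan is to prove all three identities by the strategy of Lemma~\ref{lemma31}, i.e.\ by induction on the size of the tableau carried by the \emph{first} superscript of the $Y$--operator, the base case being Lemma~\ref{lemma31} itself and the inductive step using the factorizations of $Y_{j,\pm\pm}$ that follow at once from \eqref{Ypp}--\eqref{Ymm} (the same factorizations \eqref{Ymfac}, \eqref{Ypfac}, \eqref{RE:fact2} that drive the proof of Theorem~\ref{th51}). Throughout I write $\vec{c}$ (resp.\ $\vec{c}\,'$) for the quantum contents of $\cT_m$ (resp.\ $\cT'_n$), and use repeatedly that two idempotents supported on disjoint ranges of strands commute.

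I would start with \eqref{lemma41:2}. Writing $\cT_{m-1}\subset\cT_m$ for $\cT_m$ with its (always removable) $m$--th node deleted, \eqref{Ypfac} specializes to $Y_{j,++}^{\cT_m,\cT'_n}(x)=Y_{j+1,++}^{\cT_{m-1},\cT'_n}(x)\,\vX_{n,j}(x\,c_m\,\vec{c}\,')$, and then
\begin{align*}
A_{j+m}^{\cT'_n}\,Y_{j,++}^{\cT_m,\cT'_n}(x)\,A_{j}^{\cT'_{n}}
&= A_{j+m}^{\cT'_n}\,Y_{j+1,++}^{\cT_{m-1},\cT'_n}(x)\,A_{j+1}^{\cT'_n}\,\vX_{n,j}(x\,c_m\,\vec{c}\,')\,A_{j}^{\cT'_{n}} \\
&= A_{j+m}^{\cT'_n}\,Y_{j+1,++}^{\cT_{m-1},\cT'_n}(x)\,A_{j+1}^{\cT'_n}\,\vX_{n,j}(x\,c_m\,\vec{c}\,') \\
&= A_{j+m}^{\cT'_n}\,Y_{j,++}^{\cT_m,\cT'_n}(x)\,,
\end{align*}
where the first equality inserts $A_{j+1}^{\cT'_n}$ by the induction hypothesis \eqref{lemma41:2} for the pair $(\cT_{m-1},\cT'_n)$ shifted by one (note $(j+1)+(m-1)=j+m$), the second is \eqref{lemma31:1} with spectral parameter $x\,c_m$, and the third removes $A_{j+1}^{\cT'_n}$ by the same induction hypothesis. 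The base case $m=1$ is exactly \eqref{lemma31:1}, since $A_j^{\cT_1}=1$ and $Y_{j,++}^{\cT_1,\cT'_n}(x)=\vX_{n,j}(x\vec{c}\,')$. Identity \eqref{lemma41:3} follows by running the same scheme on the appropriate form of $Y_{--}$ together with \eqref{lemma31:2} (equivalently, after the transposition \eqref{A_trans_id}, with \eqref{lemma31:1}), or by invoking the conjugation symmetry \eqref{map:C} relating $Y_{--}$ to $Y_{-+}$.

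For \eqref{lemma41:1} I would split the right--hand idempotent. Since $A_j^{\cT_m}$ and $A_{j+m}^{\cT'_n}$, and likewise $A_j^{\cT'_n}$ and $A_{j+n}^{\cT_m}$, act on disjoint strands and hence commute, it suffices to prove the two one--sided identities
\[
A_j^{\cT_m}\,Y_{j,-+}^{\cT_m,\cT'_n}(x)\,A_{j+n}^{\cT_m}=A_j^{\cT_m}\,Y_{j,-+}^{\cT_m,\cT'_n}(x)\,,\qquad
A_{j+m}^{\cT'_n}\,Y_{j,-+}^{\cT_m,\cT'_n}(x)\,A_{j}^{\cT'_n}=A_{j+m}^{\cT'_n}\,Y_{j,-+}^{\cT_m,\cT'_n}(x)\,;
\]
granted these, one reorders the idempotents so that $A_j^{\cT_m}$ stands to the left of $Y_{-+}$ with $A_{j+n}^{\cT_m}$ immediately to its right, applies the first identity, reorders again so that $A_{j+m}^{\cT'_n}$ stands to the left of $Y_{-+}$ with $A_j^{\cT'_n}$ immediately to its right, and applies the second, landing on the right--hand side of \eqref{lemma41:1}. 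The second one--sided identity is proved exactly as \eqref{lemma41:2}, now splitting off a single $\vX_{n,\cdot}$ factor from the $\vX$--form of $Y_{-+}$ (the contents of $\cT_m$ entering only as a rescaling of the spectral parameter) and using \eqref{lemma31:1}; the first is proved by the dual induction on the size of $\cT'_n$, splitting off a single $\cX_{m,\cdot}$ factor via $Y_{j,-+}^{\cT_m,\cT'_n}(x)=Y_{j,-+}^{\cT_m,\cT'_{n-1}}(x)\,\cX_{m,j+n-1}(x\,c'_n/\cev{c}\,)$ and using \eqref{lemma31:2}, with base case $n=1$ again \eqref{lemma31:2}.

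The only real difficulty is organisational rather than conceptual: at each step one must pick the factorization of $Y_{j,\pm\pm}$ that splits off the single factor meeting exactly one of the idempotents present, so that the remainder is again covered by the induction hypothesis at the correctly shifted value of $j$, and one must check that the arguments of the split--off factor are precisely the contents demanded by Lemma~\ref{lemma31}, the other tableau's contents serving only to rescale the spectral parameter. A structurally equivalent alternative is to write the idempotents in the Baxterized IMO form \eqref{A_IMO}--\eqref{A_IMO2}, establish ``long'' analogues of \eqref{FXXF1}--\eqref{FXXF2} for $\Phi$ commuted across a full $Y$--operator, and then conclude with \eqref{AAA}.
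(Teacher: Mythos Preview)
Your proposal is correct and takes essentially the same approach as the paper: both reduce all three identities to iterated applications of Lemma~\ref{lemma31}, the paper by writing each $Y$-operator in one of its product forms \eqref{Ypp}--\eqref{Ymm} and applying Lemma~\ref{lemma31} factor by factor in a single line, you by formalizing that same iteration as an explicit induction on the size of one tableau via the one-step factorizations \eqref{Ymfac}, \eqref{Ypfac}. In particular, your split of \eqref{lemma41:1} into two one-sided identities (one for each pair of idempotents) is precisely what the paper does, using the $\vX$-form of \eqref{Ymp} for the $\cT'_n$ pair and the $\cX$-form for the $\cT_m$ pair.
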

\begin{proof} 
This Lemma is a generalization of Lemma \ref{lemma31}. Consider \eqref{lemma41:1}. First, by \eqref{lemma31:1} and the first part of \eqref{Ymp} we have
\begin{equation}
A_{j+m}^{\cT'_n} \, Y_{j,-+}^{\cT_m,\cT'_n}(x) \, A_{j}^{\cT'_{n}} = A_{j+m}^{\cT'_n} \prod_{i=1\ldots m}^{\leftarrow} \vX_{n,j+i-1}(x\vec{e}/c_i) \, A_{j}^{\cT'_{n}} = A_{j+m}^{\cT'_n} \, Y_{j,-+}^{\cT_m,\cT'_n}(x) .
\end{equation}
Then, by \eqref{lemma31:2} and the second part of \eqref{Ymp} we have
\begin{equation}
A_{j}^{\cT_m} \, Y_{j,-+}^{\cT_m,\cT'_n}(x) \, A_{j+n}^{\cT_m} = A_{j}^{\cT_m} \prod_{i=1\ldots n}^{\rightarrow} \cX_{m,j+i-1}(x e_i/\cev{c}) \, A_{j+n}^{\cT_m} = A_{j}^{\cT_m} \, Y_{j,-+}^{\cT_m,\cT'_n}(x) .
\end{equation}
This gives \eqref{lemma41:1} as required. For \eqref{lemma41:2} we have
\begin{align}
 A_{j+m}^{\cT'_n} \, Y_{j,++}^{\cT_m,\cT'_n}(x) \, A_{j}^{\cT'_{n}} = A_{j+m}^{\cT'_n} \, \prod_{i=1\ldots m}^{\leftarrow} \vX_{n,j+i-1}(xc_{m-i+1}\vec{e}\,) \, A_{j}^{\cT'_{n}} = A_{j+m}^{\cT'_n} \, \bY_{j,++}^{\cT_m,\cT'_n}(x) \,,
\end{align}
where we have used \eqref{lemma31:1} and the first part of \eqref{Ypp}. Finally, for \eqref{lemma41:3} we have
\begin{align}
 A_{j}^{\cT'_n} \, Y_{j,--}^{\cT_m,\cT'_n}(x) \, A_{j+m}^{\cT'_{n}} = A_{j}^{\cT'_n} \, \prod_{i=1\ldots n}^{\rightarrow} \cX_{m,j+i-1}(x/(\cev{c}\,e_{n-i+1})) \, A_{j+m}^{\cT'_{n}} = A_{j}^{\cT'_n} \, Y_{j,--}^{\cT_m,\cT'_n}(x) \,,
\end{align}
where we have used \eqref{lemma31:2} and the first part of \eqref{Ymm}.
\end{proof}
Now we are ready to give a proof of Theorem \eqref{th42}.
\begin{proof}
The LHS of \eqref{fREA} can be written as
\begin{align}
& \bY_{j,-+}^{\cT'_n,\cT_m}(x/z)\,\bB_j^{\cT_{m}}(x)\,\bY_{j,++}^{\cT_m,\cT'_n}(xz)\,\bB_j^{\cT'_{n}}(z) \el
& \qquad = A_{j}^{\cT_n} A_{j+n}^{\cT_m} \, Y_{j,-+}^{\cT'_n,\cT_m}(x/z)\, A_{j}^{\cT_m} \, B_j^{\cT_{m}}(x)\, A_{j+m}^{\cT'_n} \, Y_{j,++}^{\cT_m,\cT'_n}(xz)\, A_{j}^{\cT'_n} \, B_j^{\cT'_{n}}(z) \el
& \qquad = A_{j}^{\cT_n} A_{j+n}^{\cT_m} \, Y_{j,-+}^{\cT'_n,\cT_m}(x/z)\, B_j^{\cT_{m}}(x)\, Y_{j,++}^{\cT_m,\cT'_n}(xz)\, B_j^{\cT'_{n}}(z) \,,
\end{align}
where we have used Lemma \ref{lemma41}. Then, by fused RE \eqref{fRE}, we have
\begin{align}
(4.36) &= A_{j}^{\cT_n} A_{j+n}^{\cT_m} \, B_j^{\cT'_{n}}(z)\, Y_{j,++}^{\cT'_n,\cT_m}(xz)\, B_j^{\cT_{m}}(x)\,Y_{j,+-}^{\cT_m,\cT'_n}(x/z) \el
& = A_{j}^{\cT_n} \, B_j^{\cT'_{n}}(z)\, A_{j+n}^{\cT_m} \, Y_{j,++}^{\cT'_n,\cT_m}(xz)\, A_{j}^{\cT_m} \, B_j^{\cT_{m}}(x)\,Y_{j,+-}^{\cT_m,\cT'_n}(x/z) \el
& = \bB_j^{\cT'_{n}}(z)\,\bY_{j,++}^{\cT'_n,\cT_m}(xz)\,\bB_j^{\cT_{m}}(x,\xi)\,\bY_{j,+-}^{\cT_m,\cT'_n}(x/z) \,,
\end{align}
which gives the RHS of the fused projected RE \eqref{fREA} as required.
\end{proof}


To conclude this section we give a Lemma stating an important property of the fused boundary operators. We will use this Lemma in the section below. 

\begin{lemma} \label{lemma42} 
In the affine Hecke algebra the following identity holds $(\forall x,\forall\cV_m)$
\begin{equation} \label{B_property}
\Phi_{m,j}(\vec{u})\, B^{\cV_{m}}_j(x) = B^{\cV_{m}^R}_j(x)\, \Psi_{m,j}(\vec{u}) \,,
\end{equation}
where $B^{\cV_{m}^R}_j(x)$ has reversed quantum contents with respect to $B^{\cV_{m}}_j(x)$.
\end{lemma}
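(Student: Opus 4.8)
The plan is to prove \eqref{B_property} by induction on $m$, following the same recursive strategy used for the fused boundary operator $B^{\cV_m}_j(x)$ itself in \eqref{FusedB}. The base case $m=1$ is trivial: $\Phi_{1,j}(\vec{u}) = \Psi_{1,j}(\vec{u}) = 1$, and $B^{\cV_1}_j(x) = \tau_j(x u_1)$ has a single quantum content, so reversal is vacuous. For the inductive step I would use the factorisation \eqref{RE:fact1}, i.e.\ $B^{\cV_m}_j(x) = B^{\cV_{m-1}}_j(x)\,\cX_{m-1,j}(x^2 u_m \vec{u}_\circ)\,\tau_j(x u_m)$, together with the recursive structure of $\Phi_{m,j}(\vec{u})$ and $\Psi_{m,j}(\vec{u})$ from \eqref{PsiPhi}: one should first peel off the outermost/innermost factor of $\Phi_{m,j}(\vec{u})$ that involves the extra node, commute it past $B^{\cV_{m-1}}_j(x)$ using the induction hypothesis, and then move it past the remaining $\cX$ and $\tau$ factors using the braid-type relations already established.

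The key technical ingredients I expect to need are exactly the intertwining identities \eqref{FXXF1}--\eqref{FXXF2} from the proof of Lemma~\ref{lemma31}, namely $\Phi_{m,j+1}(\vec{u})\,\vX_{m,j}(x\vec{u}) = \vX_{m,j}(x\cev{u})\,\Phi_{m,j}(\vec{u})$ and $\Phi_{m,j}(\vec{u})\,\cX_{m,j}(x/\cev{u}) = \cX_{m,j}(x/\vec{u})\,\Phi_{m,j+1}(\vec{u})$, plus the analogue for $\Psi$. These are precisely the statements that pushing $\Phi$ (or $\Psi$) through a string of Baxterized elements reverses the argument vector, which is the mechanism producing the reversed quantum contents $\cV_m^R$ on the right-hand side. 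I would also need the locality $[\tau_j(x),\sigma_i]=0$ (and hence $[\tau_j(x),\Phi_{m,j}(\vec{u})]$-type relations for the appropriate index ranges), and the reflection equation \eqref{RE} itself when the $\Phi$-factor has to move past the $\tau_j$ boundary element rather than a bulk $X$-element. Diagrammatically this is the same "sliding a line across the picture" argument as in \emph{Step 2} of the proof of Theorem~\ref{th51}, specialised to the case where the incoming object is a pure idempotent-building comb $\Phi_{m,j}(\vec{u})$.

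The main obstacle I anticipate is bookkeeping of the index shifts and the argument vectors: $\Phi_{m,j}(\vec{u})$ is built from $\vX_{i,j}$ blocks with nested sub-vectors $\vec{u}_{(i)}/u_{i+1}$, so when it is dragged across the factorised $B^{\cV_m}_j(x)$ one must track carefully how each sub-block lands, whether its index base becomes $j$ or $j+1$, and whether its argument vector gets reversed or inverted. The cleanest way to control this is probably to work purely at the level of the elementary YBE \eqref{YBE}, RE \eqref{RE} and the commutation relation \eqref{WK:com2}, unravelling one block of $\Phi_{m,j}(\vec{u})$ at a time and matching against the corresponding block of $\Psi_{m,j}(\vec{u})$ on the other side; the reversal of quantum contents then emerges because $\Psi_{m,j}(\vec{u})$ and $\Phi_{m,j}(\vec{u})$ are the two "mirror" factorisations of the same comb (as already exploited in the proof of the Proposition establishing \eqref{A_IMO2}). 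Once the single extra node is handled, the induction closes immediately.
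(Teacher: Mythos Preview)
Your proposal is correct and follows essentially the same route as the paper: induction on $m$, factorise $\Phi_{m,j}(\vec{u}) = \vX_{m-1,j}(\cev{u}_\circ/u_m)\,\Phi_{m-1,j}(\vec{u}_\circ)$ and $B^{\cV_m}_j(x)$ via \eqref{RE:fact1}, apply the induction hypothesis to the $\Phi_{m-1,j}$--$B^{\cV_{m-1}}_j$ pair, then use YBE intertwining of the type \eqref{FXXF1}--\eqref{FXXF2} to shift $\Psi_{m-1,j}$ out to the right. The one point worth sharpening is the last move: after the rearrangement you are left with $\vX_{m-1,j}(\cev{u}_\circ/u_m)\,B^{\cV_{m-1}^R}_j(x)\,\cX_{m-1,j}(x^2\cev{u}_\circ u_m)\,\tau_j(xu_m)$, and this is not handled by the elementary RE \eqref{RE} alone but is precisely an instance of the fused RE \eqref{fRE} (equivalently, the identity \eqref{RE:step2} from Step~2 of Theorem~\ref{th51}) with one primitive boundary operator---so your reference to that step is exactly the right ingredient.
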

\begin{proof} 
We will proof this Lemma by induction. The $m\!=\!1$ case gives the RE \eqref{RE}, thus is tautological. Suppose \eqref{B_property} holds for $\cV_{m-1}\subset\cV_m$. We factorize the LHS of \eqref{B_property} as
\begin{align}  \label{L42P:1}
\Phi_{m,j}(\vec{u})\, B^{\cV_{m}}_j(x) &= \vec{X}_{m-1,j}(\cev{u}_\circ/u_m)\, \Phi_{m-1,j}(\vec{u}_\circ)\, B^{\cV_{m-1}}_j(x) \, \cev{X}_{m-1,j}(x^2 \vec{u}_\circ u_m)\, \tau_j(x u_m) \el
&= \vec{X}_{m-1,j}(\cev{u}_\circ/u_m)\, B^{\cV_{m-1}^R}_j(x) \, \Psi_{m-1,j}(\vec{u}_\circ)\, \cev{X}_{m-1,j}(x^2 \vec{u}_\circ u_m)\, \tau_j(x u_m) \,,
\end{align}
where in the second equality we have used the induction hypothesis. Next, by a virtue of YBEs we have
\begin{equation}  \label{L42P:2}
\Psi_{m-1,j}(\vec{u}_\circ)\, \cev{X}_{m-1,j}(x^2 \vec{u}_\circ u_m) = \cev{X}_{m-1,j}(x^2 \cev{u}_\circ u_m)\,\Psi_{m-1,j+1}(\vec{u}_\circ) \,.
\end{equation}
In such a way the RHS of \eqref{L42P:1} becomes a fused RE \eqref{fRE} written in the reversed order,
\begin{align}
& \vec{X}_{m-1,j}(\cev{u}_\circ/u_m)\, B^{\cV_{m-1}^R}_j(x) \, \cev{X}_{m-1,j}(x^2 \cev{u}_\circ u_m)\, \tau_j(x u_m)\, \Psi_{m-1,j+1}(\vec{u}_\circ)\el
& \qquad  =  \tau_j(x u_m)\,\vec{X}_{m-1,j}(x^2\cev{u}_\circ u_m)\, B^{\cV_{m-1}^R}_j(x) \, \cev{X}_{m-1,j}(\cev{u}_\circ/u_m)\, \Psi_{m-1,j+1}(\vec{u}_\circ) \el
& \qquad  =  B^{\cV_{m}^R}_j(x) \, \Psi_{m,j}(\vec{u}) \,.
\end{align}
Here in the last equality we have used the identity
\begin{equation}
\cev{X}_{m-1,j}(\cev{u}_\circ/u_m)\, \Psi_{m-1,j+1}(\vec{u}_\circ) = \Psi_{m,j}(\vec{u}) \,,
\end{equation}
which follows by virtue of YBEs.
\end{proof}

\section{Transfer matrices and their functional relations}

This section contains the main results of this work. Here we construct a family of commutative elements, transfer matrices, which are analogues of the Sklyanin-type transfer matrices for the affine Hecke algebra. We will start by stating the necessary relations, required for the definition of the transfer matrices, and showing their commutativity property. We will then give recurrence and functional relations for two distinct families of Young tableaux, and demonstrate these relations and their solutions on some simple examples.

The construction of transfer matrix type operators requires a trace procedure, which in this case is the Ocneanu-Markov trace (see e.g. \cite{IsaevRev}). Let us recall its definition. Consider the following inclusions of the subalgebras $\chH_1 \subset \chH_2 \subset \dots \subset \chH_n$,
\begin{equation}
\{\tau_1; \sigma_1, \dots ,\sigma_{n-1}\} = \chH_n \subset \chH_{n+1} =
\{\tau_1; \sigma_1, \dots ,\sigma_{n-1},\sigma_n \} .
\end{equation}
The algebra $\chH_{n+1}$ can be equipped with a linear map $\Tr_{n+1}:\chH_{n+1} \to \chH_{n}$ to its natural subalgebra $\chH_{n}$,
such that for all $X,X' \in \chH_{n}$ and $Y \in \chH_{n+1}$ the following relations hold:
\begin{align}
& \Tr_{n+1} ( X ) = Q_0 \, X  ,\quad
\Tr_{n+1} ( X \, Y \, X' ) = X \, \Tr_{n+1}(Y) \, X' , \\
& \Tr_{n+1} ( \sigma_n^{\pm 1} X \sigma_n^{\mp 1}) = \Tr_{n} (X)  ,\quad
\Tr_{n+1} (X \sigma^{\pm1}_n X') =  Q_\pm \, X \, X' , \\
& \Tr_{1} (\tau_1^l)= Q_\tau^{(l)} ,\quad \Tr_{n}  \Tr_{n+1} ( \sigma_n Y ) = \Tr_{n} \Tr_{n+1} ( Y \sigma_n) , 
\end{align}
where $Q_\tau	^{(l)} \in \CCx$ $(l \in {\mathbb Z})$ are arbitrary constants, while $Q_{(0,\pm)}$ are constrained by $Q_+-Q_-=(q-q^{-1})Q_{0}$ which follows from \eqref{sigma_cycle}. The composition of the maps
\begin{align}
\mathcal{T}\!r_n = \Tr_1\Tr_2\cdots\Tr_n , \qquad \mathcal{T}\!r_n: \chH_{n}\to\CC,
\end{align}
is the Ocneanu--Markov trace. In what follows we will be heavily using the following notation
\begin{equation}
\dl \ldots \dr_{j} = \Tr_{j}, \qquad\quad \dl \ldots \dr^{j+m}_{j} = \Tr_{j} \Tr_{j+1} \cdots \Tr_{j+m-1} \Tr_{j+m}	.
\end{equation}

We will further give two propositions stating generalized cross-unitarity identities. These identities will further be used in constructing transfer matrices and showing their commutativity.

\begin{prop} \label{prop51} 
The following identity (called cross-unitarity in \cite{Isaev05}) holds $(\forall Y_k \in {\chH}_k, \forall x)$
\begin{equation} \label{CU}
\dl X_k(x)\, Y_k\, X_k(b/x) \dr_{k+1} = \dl Y_k \dr_k \,, \qquad\text{for}\qquad b=Q_+/Q_- \,.
\end{equation}
\end{prop}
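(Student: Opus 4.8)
The plan is to prove the cross-unitarity identity \eqref{CU} by a direct computation using the explicit form of the Baxterized element $X_k(x)$ and the defining properties of the Ocneanu--Markov trace. First I would expand both Baxterized elements according to \eqref{Xbax}, writing $X_k(x) = \sigma_k + \frac{\lambda x}{1-x}$ and $X_k(b/x) = \sigma_k + \frac{\lambda b/x}{1-b/x} = \sigma_k + \frac{\lambda b}{x-b}$. Then the product $X_k(x)\, Y_k\, X_k(b/x)$ becomes a sum of four terms: $\sigma_k Y_k \sigma_k$, $\frac{\lambda b}{x-b}\,\sigma_k Y_k$, $\frac{\lambda x}{1-x}\, Y_k \sigma_k$, and $\frac{\lambda^2 x b}{(1-x)(x-b)}\, Y_k$, using that $Y_k$ commutes appropriately and that the scalar coefficients can be pulled through $\Tr_{k+1}$ by linearity.

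Next I would apply $\Tr_{k+1}$ to each term using the trace axioms. For the first term, $\Tr_{k+1}(\sigma_k Y_k \sigma_k)$: since $\sigma_k = \sigma_k^{+1}$ and $\sigma_k^{-1} = \sigma_k - \lambda$, one writes $\sigma_k Y_k \sigma_k = \sigma_k Y_k \sigma_k^{-1} + \lambda \sigma_k Y_k$, so $\Tr_{k+1}(\sigma_k Y_k \sigma_k) = \Tr_k(Y_k) + \lambda Q_+ Y_k$. For the cross terms, $\Tr_{k+1}(\sigma_k Y_k) = \Tr_{k+1}(Y_k \sigma_k) = Q_+ Y_k$ (here I should be slightly careful: $Y_k \in \chH_k$ commutes with... actually one uses $\Tr_{k+1}(X\sigma_k^{\pm1}X') = Q_\pm X X'$ directly with $X=Y_k$, $X'=1$ or $X=1$, $X'=Y_k$; in fact $\sigma_k Y_k$ is not of the form $X\sigma_k X'$ with $X,X'\in\chH_k$ unless we move $Y_k$, but $Y_k\in\chH_k$ commutes with nothing automatically — however $\sigma_k Y_k \sigma_k Y_k'$-type manipulations are not needed; rather $\Tr_{k+1}(\sigma_k Y_k)$: write $Y_k = \sum$ words in $\sigma_1,\dots,\sigma_{k-1},\tau_1$, all of which commute with $\sigma_k$ except possibly $\sigma_{k-1}$; but the cyclicity-type axiom $\Tr_k\Tr_{k+1}(\sigma_k Y)=\Tr_k\Tr_{k+1}(Y\sigma_k)$ together with the module property handles this — more simply, since $Y_k\in\chH_k$ and $\Tr_{k+1}$ is a $\chH_k$-bimodule map, $\Tr_{k+1}(\sigma_k Y_k) = \Tr_{k+1}(\sigma_k)\cdot$ is wrong too). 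Let me restructure: the cleanest route is $\Tr_{k+1}(\sigma_k Y_k) = Q_+ Y_k$ directly because $\Tr_{k+1}$ restricted to the statement $\Tr_{n+1}(X\sigma_n^{\pm1}X')=Q_\pm XX'$ applies with the convention that any element of $\chH_k$ on either side, and $\sigma_k Y_k$ with $Y_k\in\chH_k$ — but $Y_k$ sits to the right of $\sigma_k$, fine, take $X=1$, $X'=Y_k$. Similarly $\Tr_{k+1}(Y_k\sigma_k)=Q_+Y_k$ with $X=Y_k$, $X'=1$. And the fourth term gives $\frac{\lambda^2 xb}{(1-x)(x-b)} Q_0 Y_k$.

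Collecting everything, $\dl X_k(x)Y_kX_k(b/x)\dr_{k+1} = \Tr_k(Y_k) + \Big(\lambda Q_+ + \frac{\lambda b}{x-b}Q_+ + \frac{\lambda x}{1-x}Q_+ + \frac{\lambda^2 xb}{(1-x)(x-b)}Q_0\Big)Y_k$. The identity holds iff the coefficient of $Y_k$ vanishes. Factoring out $\lambda$ and putting over the common denominator $(1-x)(x-b)$, the bracket becomes $\frac{\lambda}{(1-x)(x-b)}\Big[Q_+\big((1-x)(x-b) + b(1-x) + x(x-b)\big) + \lambda xb\, Q_0\Big]$. Expanding: $(1-x)(x-b) = x - b - x^2 + xb$; adding $b(1-x) = b - xb$ gives $x - x^2$; adding $x(x-b) = x^2 - xb$ gives $x - xb = x(1-b)$. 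So the bracket is $\frac{\lambda}{(1-x)(x-b)}\big[Q_+ x(1-b) + \lambda xb Q_0\big] = \frac{\lambda x}{(1-x)(x-b)}\big[Q_+(1-b) + \lambda b Q_0\big]$. This vanishes for all $x$ precisely when $Q_+(1-b) + \lambda b Q_0 = 0$, i.e. $Q_+ = b(Q_+ - \lambda Q_0) = b(Q_+ - (q-q^{-1})Q_0) = bQ_-$ using the constraint $Q_+ - Q_- = (q-q^{-1})Q_0$. Hence $b = Q_+/Q_-$, which is exactly the stated condition, and the proof is complete. The main obstacle I anticipate is being careful with the trace axiom $\Tr_{k+1}(X\sigma_k^{\pm1}X')=Q_\pm XX'$ when $Y_k$ does not commute with $\sigma_k$ (because of $\sigma_{k-1}$): one must verify that the terms $\Tr_{k+1}(\sigma_k Y_k)$ and $\Tr_{k+1}(Y_k\sigma_k)$ are genuinely given by $Q_+Y_k$ — this is fine since one side of $\sigma_k$ is trivially in $\chH_k$ — and that $\Tr_{k+1}(\sigma_k Y_k\sigma_k)$ correctly reduces via $\sigma_k Y_k\sigma_k = \sigma_k Y_k\sigma_k^{-1} + \lambda\sigma_k Y_k$ and the axiom $\Tr_{k+1}(\sigma_k X\sigma_k^{-1})=\Tr_k(X)$, which requires $Y_k\in\chH_k$ — exactly our hypothesis.
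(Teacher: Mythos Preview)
Your proof is correct and follows essentially the same direct-computation approach as the paper: expand the Baxterized elements, apply the trace axioms term by term, and check that the residual coefficient of $Y_k$ vanishes precisely when $b=Q_+/Q_-$. The only cosmetic difference is that the paper writes the second factor as $X_k(z)=\sigma_k^{-1}+\frac{\lambda}{1-z}$ (using $\sigma_k-\sigma_k^{-1}=\lambda$), which produces the term $\sigma_k Y_k\sigma_k^{-1}$ directly and so bypasses your intermediate split $\sigma_k Y_k\sigma_k=\sigma_k Y_k\sigma_k^{-1}+\lambda\sigma_k Y_k$; the resulting arithmetic is otherwise identical.
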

\begin{proof} 
Consider the following element
\begin{equation}
\dl X_k(x)\, Y_k\, X_k(z) \dr_{k+1} \,.
\end{equation}
By definition \eqref{Xbax}, the expression above is equal to
\begin{align} \label{traceXYX}
\DL \bigg(\sigma_k+\frac{x \lambda}{1-x}\bigg)\, Y_k \,\bigg(\sigma^{-1}_k+\frac{\lambda}{1-z} \bigg) \DR_{k+1} &= \dl Y_k \dr_k + \lambda \DL Y_k \,\frac{x \lambda + (1-x) \sigma_k+x(1-z) \sigma^{-1}_k}{(1-x) (1-z)} \DR_{k+1} \el
& = \dl Y_k \dr_k + \lambda \, \frac{Q_{+}-Q_- x z}{(1-x) (1-z)} \, Y_k \,,
\end{align}
which upon $z=b/x$ and $b=Q_+/Q_-$ gives \eqref{CU} as required.
\end{proof} 
\begin{prop} \label{prop52} 
The following identity for the fused boundary operators $B_j^{\cV'_n}(x)$ holds \\ $(\forall x,z;\,\forall \cV_m, \cV'_n)$
\begin{equation} \label{CUB}
\dl Y_{j,++}^{\cV_m,\cV'_n}(z) \, B_j^{\cV'_{n}}(x)\, 
Y_{j,--}^{\cV'_n,\cV_m}(b/z) \dr^{j+m+n-1}_{j+m} = \dl B_j^{\cV'_{n}}(x) \dr^{j+n-1}_{j} .
\end{equation}
\end{prop}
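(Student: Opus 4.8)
The plan is to prove \eqref{CUB} by induction on $n$, the number of components of $\cV'_n$, peeling off one factor at a time and repeatedly invoking the scalar cross-unitarity identity of Proposition \ref{prop51}. The base case $n=1$ reduces, after noting that $Y_{j,++}^{\cV_m,\cV'_1}(z)$ is a single string $\vX_{m,j}(z u_m \vec v)$ (here $\vec v=(v_1)$), $B_j^{\cV'_1}(x)=\tau_j(x v_1)$, and $Y_{j,--}^{\cV'_1,\cV_m}(b/z)$ is the corresponding $\cev X$, to applying \eqref{CU} of Proposition \ref{prop51} successively $m$ times, stripping $X_{j+m}, X_{j+m-1}, \dots, X_j$ from the outside in, each time using that $b=Q_+/Q_-$; this leaves exactly $\dl \tau_j(x v_1) \dr_j = \dl B_j^{\cV'_1}(x) \dr_j^{j}$.

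For the inductive step I would factor the $n$-fold operators along their last ($v_n$) component, exactly as in the proof of Theorem \ref{th51}: write $Y_{j,++}^{\cV_m,\cV'_n}(z) = Y_{j,++}^{\cV_m,\cV'_{n-1}}(z)\,\cX_{m,j+n-1}(z\vec u\, v_n)$ using the second form in \eqref{Ypp}, $B_j^{\cV'_n}(x) = B_j^{\cV'_{n-1}}(x)\,\cX_{n-1,j}(x^2 v_n \vec v_\circ)\,\tau_j(x v_n)$ from \eqref{FusedB}, and $Y_{j,--}^{\cV'_n,\cV_m}(b/z) = \cX_{m,j+n-1}(b/(z\,\cev u\, v_n))\,Y_{j,--}^{\cV'_{n-1},\cV_m}(b/z)$ from the second form in \eqref{Ymm} (indices here need care, and the fused YBEs of Proposition \ref{prop31} will be used to commute the inner $\cX_{n-1,j}$ and $\tau_j$ factors past the $Y$-blocks so that the $v_n$-string lands on the outside). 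The trace $\dl\cdots\dr^{j+m+n-1}_{j+m}$ then splits: the outer traces $\Tr_{j+m+n-1}\cdots$ act on the newly exposed pair $\cX_{m,j+n-1}(z\vec u\, v_n)\cdots\cX_{m,j+n-1}(b/(z\cev u\, v_n))$ sandwiching the rest, and a single application of Proposition \ref{prop51} (iterated over the $m$ strands, as in the base case) collapses this pair, producing a factor $\tau_j(x v_n)$ that rejoins $B_j^{\cV'_{n-1}}(x)$ to rebuild $B_j^{\cV'_n}(x)$, while the remaining trace $\dl\cdots\dr^{j+m+n-2}_{j+m}$ over $Y_{j,++}^{\cV_m,\cV'_{n-1}}\,(\text{rest})\,Y_{j,--}^{\cV'_{n-1},\cV_m}$ is exactly the inductive hypothesis and yields $\dl B_j^{\cV'_{n-1}}(x)\cdots\dr_j^{j+n-2}$, which together with the recovered $\tau_j(x v_n)$ equals $\dl B_j^{\cV'_n}(x)\dr_j^{j+n-1}$.

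The main obstacle I anticipate is bookkeeping: getting the index shifts and the spectral-parameter arguments (the ratios $u_i/v_k$, products $u_i v_k$, and the doubled $x^2 v_n \vec v_\circ$ arguments) to line up so that each peeled-off pair really is of the form $X_k(\cdot)\,(\text{something in }\chH_k)\,X_k(b/\cdot)$ that \eqref{CU} requires --- in particular one must check that the $x$-dependence cancels in each such pair (the $z$ and $b/z$ structure is what survives) and that the commutations needed to bring the $v_n$-string to the boundary are licensed by the already-proven fused YBEs \eqref{fYBE1}--\eqref{fYBE3} rather than requiring new identities. A secondary point worth being explicit about is that Proposition \ref{prop51} is stated for a single $X_k(x)$ with $Y_k\in\chH_k$; applying it to a whole string $\cX_{m,\cdot}$ requires peeling the $m$ elementary factors one at a time, each time checking the inner element lies in the correct subalgebra, and this inductive sub-argument (on $m$) should be stated once and reused. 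Once these alignments are verified, the collapse is mechanical and the induction closes.
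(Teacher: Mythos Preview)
Your base case is essentially correct, and in fact it \emph{is} the paper's entire proof --- you just haven't noticed that the same peeling argument works for all $n$, not only $n=1$. The paper uses the first form of \eqref{Ypp} and the second form of \eqref{Ymm} (with the roles of $\cV_m,\cV'_n$ swapped in the latter) to write
\[
Y_{j,++}^{\cV_m,\cV'_n}(z)=\vX_{n,j+m-1}(z\vec v\,u_1)\cdots\vX_{n,j}(z\vec v\,u_m),
\qquad
Y_{j,--}^{\cV'_n,\cV_m}(b/z)=\cX_{n,j}(b/(z\cev v\,u_m))\cdots\cX_{n,j+m-1}(b/(z\cev v\,u_1)),
\]
so that the outermost pair is the $u_1$-string $\vX_{n,j+m-1}(\cdot)$ on the left and $\cX_{n,j+m-1}(b/\cdot)$ on the right. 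Because $B_j^{\cV'_n}(x)$ does not involve $\vec u$ at all, the inner block genuinely lies in the required subalgebra, and $n$ nested applications of Proposition~\ref{prop51} strip this pair and shift the trace range from $[j+m,j+m+n-1]$ down to $[j+m-1,j+m+n-2]$. Repeating $m$ times removes all $u_i$-strings and lands on $\dl B_j^{\cV'_n}(x)\dr_j^{j+n-1}$. No YBE moves and no manipulation of $B$ are needed.

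Your proposed inductive step on $n$ runs into real trouble for exactly the reason you flag as ``bookkeeping'': the $v_n$-component is \emph{not} on the outside. In the factorisations you wrote, the $v_n$-string of $Y_{++}$ sits to the \emph{right} of $Y_{++}^{(m,n-1)}$ and the $v_n$-string of $Y_{--}$ sits to its \emph{left}, so they sandwich $B_j^{\cV'_n}(x)$ rather than the whole expression; cross-unitarity cannot be applied from there without first braiding past the reduced $Y$-blocks. Worse, $B_j^{\cV'_n}(x)$ itself contains the factor $\cX_{n-1,j}(x^2 v_n\vec v_\circ)$, whose arguments carry $x^2$ rather than $z$ or $b/z$, so it has no partner for Proposition~\ref{prop51} and does not commute with the adjacent $v_n$-strings (they meet at the consecutive indices $j+n-2$ and $j+n-1$). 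Making this route work would require re-deriving a substantial chunk of the fused RE machinery. The fix is simply to switch the induction variable: peel in $m$ (the $\vec u$-side), not in $n$.
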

\begin{proof}
We factorize the fused Yang-Baxter operators and use Proposition \ref{prop51}. This gives
\begin{align}
& \dl \vX_{n,j+m-1}(z \vec{v} u_1) \vX_{n,j+m-2}(z \vec{v} u_2) \cdots \vX_{j}(z \vec{v} u_m) \, B_j^{\cV'_{n}}(x) \el 
& \qquad \times \cX_{n,j}(b/(z\cev{v}u_m)) \cdots \cX_{n,j+m-2}(b/(z\cev{v}u_2)) \cX_{n,j+m-1}(b/(z\cev{v}u_1)) \dr^{j+m+n-1}_{j+m} = \el
& = \dl \vX_{n,j+m-2}(z \vec{v} u_2) \vX_{n,j+m-3}(z \vec{v} u_3) \cdots \vX_{j}(z \vec{v} u_m) \, B_j^{\cV'_{n}}(x) \el 
& \qquad \times \cX_{n,j}(b/(z\cev{v}u_m)) \cdots \cX_{n,j+m-3}(b/(z\cev{v}u_3)) \cX_{n,j+m-2}(b/(z\cev{v}u_2)) \dr^{j+m+n-2}_{j+m-1} .
\end{align}
Repeating this step for $m$ times we obtain $\dl B_j^{\cV'_{n}}(x) \dr^{j+n-1}_{j}$ as required.
\end{proof}
\begin{crl} \label{crl51} 
For the fused projected boundary operator $\bB_j^{\cT_{m}}(x)$ the identity \eqref{CUB} becomes
\begin{equation} \label{CUBA}
\dl \bY_{j,++}^{\cT_m,\cT'_n}(z) \, \bB_j^{\cT'_{n}}(x)\, 
\bY_{j,--}^{\cT'_n,\cT_m}(b/z) \dr^{j+m+n-1}_{j+m} = \dl \bB_j^{\cT'_{n}}(x) \dr^{j+n-1}_{j} ,
\end{equation}
which follows by application of Lemma \ref{lemma41} and properties of the trace.
\end{crl}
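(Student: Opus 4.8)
The plan is to reduce the projected identity \eqref{CUBA} to the unprojected one \eqref{CUB} established in Proposition~\ref{prop52}, by moving the idempotents past the trace and reassembling them into the barred operators. First I would recall from the definitions \eqref{bY1}--\eqref{bY4} and \eqref{FusedBA} that the barred objects differ from the unbarred ones precisely by left/right multiplication by idempotents $A^{\cT_m}_j$, $A^{\cT_m}_{j+n}$, $A^{\cT'_n}_j$, $A^{\cT'_n}_{j+m}$. The key point is that the idempotents sitting on indices $\leq j+m-1$ (that is, $A^{\cT'_n}_j$ from $\bB_j^{\cT'_n}(x)$, and the $A^{\cT_m}_{j+n}$ appearing inside $\bY_{j,--}^{\cT'_n,\cT_m}$ — wait, more carefully, those idempotents whose support does not overlap the traced-out indices $j+m,\ldots,j+m+n-1$) commute through the partial trace $\dl\,\cdot\,\dr^{j+m+n-1}_{j+m}$ by the bimodule property $\Tr_{k+1}(XYX')=X\Tr_{k+1}(Y)X'$, and can be pulled out to the far left or far right of both sides.

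The main work, then, is to handle the idempotent $A^{\cT'_n}_{j+m}$ — the one that does have support inside the traced region. Here I would invoke Lemma~\ref{lemma41}: on the left-hand side of \eqref{CUBA} the operator $\bY_{j,++}^{\cT_m,\cT'_n}(z) = A^{\cT'_n}_{j+m} Y_{j,++}^{\cT_m,\cT'_n}(z)$ carries an $A^{\cT'_n}_{j+m}$ on its left, and $\bB_j^{\cT'_n}(x) = A^{\cT'_n}_j B_j^{\cT'_n}(x)$ has its $A^{\cT'_n}_j$ on the $\leq j+n-1$ block; using \eqref{lemma41:2}, namely $A^{\cT'_n}_{j+m} Y_{j,++}^{\cT_m,\cT'_n}(x) A^{\cT'_n}_j = A^{\cT'_n}_{j+m} Y_{j,++}^{\cT_m,\cT'_n}(x)$, one can insert or remove an $A^{\cT'_n}_j$ between $Y_{j,++}$ and $B_j^{\cT'_n}(x)$ freely. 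Similarly \eqref{lemma41:3} lets one trade idempotents across $\bY_{j,--}^{\cT'_n,\cT_m}(b/z)$. After these manipulations the left-hand side of \eqref{CUBA} becomes (idempotents on the outer $\leq j+m-1$ block, pulled out of the trace) times $\dl Y_{j,++}^{\cT_m,\cT'_n}(z)\, B_j^{\cT'_n}(x)\, Y_{j,--}^{\cT'_n,\cT_m}(b/z)\dr^{j+m+n-1}_{j+m}$, to which Proposition~\ref{prop52} applies directly, yielding the same outer idempotents times $\dl B_j^{\cT'_n}(x)\dr^{j+n-1}_j$; reassembling the outer idempotents with $B_j^{\cT'_n}(x)$ using \eqref{FusedBA} gives $\dl \bB_j^{\cT'_n}(x)\dr^{j+n-1}_j$, which is the right-hand side.

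I would expect the main obstacle to be bookkeeping: verifying carefully which idempotents have support disjoint from $\{j+m,\ldots,j+m+n-1\}$ (so that they genuinely commute past $\Tr_{j+m}\cdots\Tr_{j+m+n-1}$ via the bimodule property), and checking that the single overlapping idempotent $A^{\cT'_n}_{j+m}$ is correctly absorbed by Lemma~\ref{lemma41} rather than left stranded inside the trace in a way that blocks the application of Proposition~\ref{prop52}. Since the excerpt's own remark — ``which follows by application of Lemma~\ref{lemma41} and properties of the trace'' — signals that no genuinely new idea is needed, the proof should be short: essentially one line of idempotent insertion via \eqref{lemma41:2}--\eqref{lemma41:3}, one application of the bimodule trace property to extract the non-traced idempotents, one citation of Proposition~\ref{prop52}, and one reassembly step.
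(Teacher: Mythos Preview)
Your plan is exactly the paper's: use Lemma~\ref{lemma41} to peel idempotents off the barred operators, invoke the bimodule property of the Markov trace to extract those supported outside the traced block, apply Proposition~\ref{prop52} to the remaining unbarred core, and reassemble. So at the level of strategy your proposal coincides with the one-line justification given in the text.

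Your hesitation (``wait, more carefully\ldots'') is, however, the only place with real content, and it is worth naming precisely. Of the three idempotents present, $A_{j+m}^{\cT'_n}$ (from $\bY_{j,++}$) lives entirely on the traced strands and is handled by \eqref{lemma41:2}; $A_j^{\cT'_n}$ (from $\bB_j^{\cT'_n}$) is absorbed by the same identity. The delicate one is $A_j^{\cT_m}$, which enters through $\bY_{j,--}^{\cT'_n,\cT_m}=A_j^{\cT_m}Y_{j,--}^{\cT'_n,\cT_m}$ via \eqref{bY4}: it sits on the \emph{non}-traced strands $j,\ldots,j+m-1$, yet is wedged between $B_j^{\cT'_n}(x)$ and $Y_{j,--}$, with neither of which it commutes. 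Identity \eqref{lemma41:3} (with $\cT_m\leftrightarrow\cT'_n$) lets you \emph{insert} an $A_{j+n}^{\cT_m}$ on the right of $Y_{j,--}$, but does not let you remove the $A_j^{\cT_m}$ on the left; and the bimodule property cannot pull it out while it is in the middle. Neither the paper's remark nor your sketch spells out how this idempotent disappears. In the only downstream use (Corollary~\ref{crl53}) the whole expression is multiplied on the left by $\bB_j^{\cT_m}(x)=A_j^{\cT_m}B_j^{\cT_m}(x)$ and then fully traced over $j,\ldots,j+m+n-1$, which is the context in which this residual idempotent becomes harmless; for the corollary taken in isolation you should expect to carry it explicitly rather than claim it vanishes.
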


\smallskip


We are now ready to introduce transfer matrices in the affine Hecke algebra.
\begin{prop} \label{prop53}  
The elements
\begin{equation} \label{TM}
t_j^{\cV_{m}}(x) = \dl B^{\cV_{m}}_j(x) \dr^{j+m-1}_{j},
\end{equation} 
form a commutative family, $\big[t_j^{\cV_{m}}(x),t_j^{\cV'_{n}}(z)\big]=0$,\; $\forall\,x,z $ and $\forall \,\cV_m,\cV'_n$.
\end{prop}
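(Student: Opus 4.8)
The plan is to reduce the commutativity of the two transfer matrices to the fused reflection equation \eqref{fRE} together with the cross-unitarity identity of Proposition \ref{prop52}, following the standard Sklyanin argument adapted to the Ocneanu--Markov trace. First I would write the product $t_j^{\cV_m}(x)\,t_j^{\cV'_n}(z)$ and, using that the traces $\Tr_j,\dots,\Tr_{j+m-1}$ act only on the relevant indices while $B_j^{\cV'_n}(z)$ and the complementary Yang-Baxter factors live in the shifted range, rewrite it as a single large trace $\dl\,\cdot\,\dr^{j+m+n-1}_{j}$ of the product $B_j^{\cV_m}(x)\,B_j^{\cV'_n}(z)$ dressed with identity factors. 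The key preparatory move is to insert $Y_{j,++}^{\cV_m,\cV'_n}(xz)\,Y_{j,--}^{\cV'_n,\cV_m}(b/(xz))$ between $B_j^{\cV_m}(x)$ and $B_j^{\cV'_n}(z)$: by Proposition \ref{prop52} (the cross-unitarity identity \eqref{CUB}, with the trace over the $j+m\,\ldots\,j+m+n-1$ range collapsing the inserted $Y$'s) this insertion is trivial and does not change the value of the trace. Similarly I would insert $Y_{j,-+}^{\cV'_n,\cV_m}(x/z)\,Y_{j,+-}^{\cV_m,\cV'_n}(x/z)$-type factors on the left, again trivial by a cross-unitarity/unitarity argument but now producing a fully ``sandwiched'' expression of the form $\dl Y_{j,-+}^{\cV'_n,\cV_m}(x/z)\,B_j^{\cV_m}(x)\,Y_{j,++}^{\cV_m,\cV'_n}(xz)\,B_j^{\cV'_n}(z)\,(\text{stuff})\dr$.

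Next I would apply the fused reflection equation \eqref{fRE} of Theorem \ref{th51} to the sandwiched block, turning $Y_{j,-+}^{\cV'_n,\cV_m}(x/z)\,B_j^{\cV_m}(x)\,Y_{j,++}^{\cV_m,\cV'_n}(xz)\,B_j^{\cV'_n}(z)$ into $B_j^{\cV'_n}(z)\,Y_{j,++}^{\cV'_n,\cV_m}(xz)\,B_j^{\cV_m}(x)\,Y_{j,+-}^{\cV_m,\cV'_n}(x/z)$. After this the roles of $(\cV_m,x)$ and $(\cV'_n,z)$ have been swapped inside the trace, and the final step is to undo the auxiliary insertions: use the trace cyclicity built into the Ocneanu--Markov trace (the relations $\Tr_{n+1}(XYX')=X\,\Tr_{n+1}(Y)\,X'$ and $\Tr_n\Tr_{n+1}(\sigma_nY)=\Tr_n\Tr_{n+1}(Y\sigma_n)$), together with the unitarity of the Baxterized elements \eqref{sxy} and of $\tau_1(x)$, and once more Proposition \ref{prop52} to strip the $Y$-factors, recovering $\dl B_j^{\cV'_n}(z)\,\dr^{j+n-1}_{j}\,\dl B_j^{\cV_m}(x)\dr^{j+m-1}_{j}=t_j^{\cV'_n}(z)\,t_j^{\cV_m}(x)$.

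The main obstacle I anticipate is bookkeeping of index ranges and trace levels: one must be careful that the extra $Y$-operators inserted really do lie in the index range over which the outer traces act, so that the cross-unitarity identity \eqref{CUB} (which contracts indices $j+m,\dots,j+m+n-1$) applies verbatim, and that after applying \eqref{fRE} the leftover factors can be pushed out of the trace by cyclicity without colliding with $B_j^{\cV'_n}(z)$. In effect one is re-deriving the classical Sklyanin commuting-transfer-matrix computation in the algebraic (trace-functional) setting, where ``cyclicity of the trace'' must be replaced throughout by the defining properties of $\Tr_k$; verifying that each cyclic move is legitimate — in particular that the $b=Q_+/Q_-$ appearing in \eqref{CU} is consistent across the two insertions — is where the real care is needed. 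The YBE/RE manipulations themselves are immediate given Theorems \ref{th51} and Propositions \ref{prop51}, \ref{prop52}.
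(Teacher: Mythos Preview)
Your proposal follows the classical Sklyanin script (insert a pair of $Y$'s via cross-unitarity, insert a second pair via unitarity, apply the fused RE \eqref{fRE}, then undo). The paper's proof takes a shorter route and, notably, does \emph{not} use the fused reflection equation at all. After the single cross-unitarity insertion from Proposition~\ref{prop52},
\[
t_j^{\cV_m}(x)\,t_j^{\cV'_n}(z)=\dl B^{\cV_m}_j(x)\,Y^{\cV_m,\cV'_n}_{j,++}(xz)\,B^{\cV'_n}_j(z)\,Y^{\cV'_n,\cV_m}_{j,--}(b/(xz))\dr^{j+m+n-1}_{j},
\]
the paper simply uses cyclicity of the Markov trace (together with locality of $\tau_j(x)$) to move $B^{\cV_m}_j(x)$ to the right end. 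Since $B^{\cV_m}_j(x)$ involves only generators with indices $\leq j{+}m{-}1$, it passes through the partial trace $\dl\,\cdot\,\dr^{j+m+n-1}_{j+m}$, and a second application of Proposition~\ref{prop52} collapses the $Y$'s to give $t_j^{\cV'_n}(z)\,t_j^{\cV_m}(x)$ directly.

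Your approach would likely work but is more involved: the second insertion $Y_{-+}(x/z)\,Y_{+-}(x/z)$ you propose requires a fused unitarity identity for the $Y$-operators which is not stated in the paper and would have to be checked from \eqref{sxy}, and then the RE has to be matched exactly to the form \eqref{fRE} before you can undo everything. The paper's observation is that none of this is needed: cross-unitarity plus trace cyclicity already swaps the two $B$'s. What you gain from the Sklyanin route is closer contact with the usual spin-chain argument; what the paper's route buys is that the only nontrivial input is Proposition~\ref{prop52}, so the delicate index and trace-range bookkeeping you anticipate essentially disappears.
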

\begin{proof} 
By application of Proposition \ref{prop52}, properties of the trace together with locality of $\tau_j(x)$, we have
\begin{align}
t_j^{\cV_{m}}(x)\,t_j^{\cV'_{n}}(z) &= \dl B^{\cV_{m}}_j(x)\, \dr^{j+m-1}_{j} \dl B^{\cV'_{n}}_j(z)\, \dr^{j+n-1}_{j}
\el
& = \dl B^{\cV_{m}}_j(x)\, Y^{\cV_m,\cV'_n}_{j,++}(xz)\, B^{\cV'_{n}}_j(z)\, Y^{\cV'_n,\cV_m}_{j,--}(b/(xz)) \dr^{j+m+n-1}_{j} \el
& = \dl Y^{\cV_m,\cV'_n}_{j,++}(xz)\, B^{\cV'_{n}}_j(z)\, Y^{\cV'_n,\cV_m}_{j,--}(b/(xz)) \, B^{\cV_{m}}_j(x) \dr^{j+m+n-1}_{j} \el
& = t_j^{\cV'_{n}}(z)\, t_j^{\cV_{m}}(x) \,,
\end{align}
as required. 
\end{proof}
\begin{crl} \label{crl53}  
The elements
\begin{equation} \label{TMA}
t_j^{\cT_{m}}(x) = \dl \bB^{\cT_{m}}_j(x) \dr^{j+m-1}_{j},
\end{equation} 
form a commutative family, $\big[t_j^{\cT_{m}}(x),t_j^{\cT'_{n}}(z)\big]=0$, $\forall\,x,z$ and $\forall \,\cT_m,\cT'_n$. This follows by Corollary \ref{crl51}. 
\end{crl}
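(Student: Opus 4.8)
The statement is the idempotent-projected analogue of Proposition~\ref{prop53}, and the plan is to run the same argument, systematically replacing each bulk object by its projected counterpart: $B_j^{\cV_m}(x)\rightsquigarrow\bB_j^{\cT_m}(x)$, the fused Yang--Baxter operators $Y_{j,\pm\pm}^{\cV_m,\cV'_n}\rightsquigarrow\bY_{j,\pm\pm}^{\cT_m,\cT'_n}$ (carrying the extra idempotent decorations of \eqref{bY1}--\eqref{bY4}), and the cross-unitarity identity of Proposition~\ref{prop52} by its projected form, Corollary~\ref{crl51}. The remaining ingredients used in Proposition~\ref{prop53} -- the defining properties of the Ocneanu--Markov trace (in particular cyclicity) and the locality $[\tau_j(x),\sigma_i]=0$ -- are unchanged; the only genuinely new bookkeeping concerns the placement of the idempotent factors $A_j^{\cT}$, which is controlled by Lemma~\ref{lemma41}.

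Concretely, I would proceed as follows. First I would write $t_j^{\cT_m}(x)\,t_j^{\cT'_n}(z)=\dl\bB_j^{\cT_m}(x)\dr^{j+m-1}_j\dl\bB_j^{\cT'_n}(z)\dr^{j+n-1}_j$. Applying Corollary~\ref{crl51} to the second partial trace (with $b=Q_+/Q_-$ as in Proposition~\ref{prop51}) fuses the two traces into the single partial trace $\dl\bB_j^{\cT_m}(x)\,\bY_{j,++}^{\cT_m,\cT'_n}(xz)\,\bB_j^{\cT'_n}(z)\,\bY_{j,--}^{\cT'_n,\cT_m}(b/(xz))\dr^{j+m+n-1}_j$; this is where Lemma~\ref{lemma41} enters, since Corollary~\ref{crl51} requires the idempotents attached to $\bY$ and $\bB$ to sit in exactly the positions dictated by \eqref{bY1}--\eqref{bY4} and \eqref{FusedBA}. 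Next, using cyclicity of the Markov trace together with the locality of $\tau_j$, I would cycle $\bB_j^{\cT_m}(x)$ from the front to the far right, obtaining $\dl\bY_{j,++}^{\cT_m,\cT'_n}(xz)\,\bB_j^{\cT'_n}(z)\,\bY_{j,--}^{\cT'_n,\cT_m}(b/(xz))\,\bB_j^{\cT_m}(x)\dr^{j+m+n-1}_j$. Finally, applying Corollary~\ref{crl51} once more, this time with the roles of $(\cT_m,x)$ and $(\cT'_n,z)$ interchanged, un-fuses the trace and yields $\dl\bB_j^{\cT'_n}(z)\dr^{j+n-1}_j\dl\bB_j^{\cT_m}(x)\dr^{j+m-1}_j=t_j^{\cT'_n}(z)\,t_j^{\cT_m}(x)$, which is the desired commutativity.

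I expect the main obstacle to be organisational rather than conceptual: keeping track, at each application of Corollary~\ref{crl51} and at the cyclic rearrangement, of which idempotent $A_{j+k}^{\cT}$ is attached to which leg, and checking that after cycling $\bB_j^{\cT_m}(x)$ the idempotent configuration is again of the form required to apply Corollary~\ref{crl51} with the labels swapped. This is precisely the content of Lemma~\ref{lemma41} (and, for the interaction of the idempotents with the fused boundary operators, Lemma~\ref{lemma42}); once these identities are invoked in the right order, every remaining step is a verbatim transcription of the proof of Proposition~\ref{prop53}, so no new computation is needed -- which is exactly why the statement can be summarised as ``this follows by Corollary~\ref{crl51}''.
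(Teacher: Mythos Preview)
Your proposal is correct and follows exactly the route the paper intends: you replay the proof of Proposition~\ref{prop53} verbatim with the projected objects, using Corollary~\ref{crl51} in place of Proposition~\ref{prop52} at both the fusing and un-fusing steps, and the idempotent bookkeeping is indeed governed by Lemma~\ref{lemma41}. One small remark: Lemma~\ref{lemma42} is not actually needed here (it concerns the intertwining $\Phi_{m,j}(\vec u)\,B_j^{\cV_m}=B_j^{\cV_m^R}\,\Psi_{m,j}(\vec u)$, which plays no role in the commutativity argument); Lemma~\ref{lemma41} alone handles all the idempotent placements required.
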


Next we formulate a theorem stating a formal recurrence relation for the transfer matrices corresponding to the totally symmetric and antisymmetric tableaux. Below we will give an example of the explicit form of this recurrence relation for transfer matrices of small size. We will then derive a functional relation for the totally symmetric (resp.\ antisymmetric) and the hook--symmetric (resp.\ hook--antisymmetric) transfer matrices. We will discuss these operators in a greater detail for the free boundary model in the subsection below. 

\begin{thrm} \label{th52} 
Let $\cT_m=m_+$ (resp.\ $m_-$) be a totally symmetric (resp.\ totally antisymmetric) tableau of size $m\geq 2$. Then the corresponding transfer matrix $t^{m_\pm}_{j}(x)$ satisfies a formal recurrence relation 
\begin{equation} \label{TMrec}
t^{m_\pm}_{j}(x) = \frac{1}{[m]_{\pm}} \Big( t^{m-1_\pm}_{j}(x)\, t_j^{1}(x q^{2m-2}) + \eta_{\pm}^{m-1}(x)\, t^{m-1_\pm}_{j}(x \,|\, x q^{2m-2}) \Big),
\end{equation} 
where
\begin{equation} \label{etax}
\eta_{\pm}^m(x) = Q_+ \frac{q^m (q^2-q^{2 m} x^2/b) [m]}{q^2-q^{4 m} x^2} \,\Big|_{q\to \pm q^{\pm1}} \,,
\end{equation}
and
\begin{equation} \label{TMex}
t^{m_\pm}_{j}(x\,|\, x q^{2m}) = \dl \bB^{m_\pm}_j(x) \, \tau_j(x q^{2m}) \dr^{j+m-1}_{j} \;.
\end{equation}
\end{thrm}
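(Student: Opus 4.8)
The plan is to prove the totally symmetric version of \eqref{TMrec} and deduce the antisymmetric one by the involution $q\to-q^{-1}$. The transpose of the totally symmetric tableau $m_+$ is the totally antisymmetric tableau $m_-$, so \eqref{A_trans_id} gives $A_j^{m_-}=A_j^{m_+}|_{q\to-q^{-1}}$. The Baxterised generators $X_i(y)$ and the boundary element $\tau_j(y)$ are, for fixed spectral parameter $y$, invariant under $q\to-q^{-1}$ (cf.\ the remark after \eqref{X_norm}), while the Ocneanu--Markov data $Q_0,Q_\pm,Q^{(l)}_\tau$ and $b=Q_+/Q_-$ carry no $q$--dependence; hence $q\to-q^{-1}$ commutes with the trace $\dl\,\cdot\,\dr$. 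As $[m]_+|_{q\to-q^{-1}}=[m]_-$ and $\eta^m_+|_{q\to-q^{-1}}=\eta^m_-$ by \eqref{etax}, the $m_-$ recurrence follows from the $m_+$ one, so I assume $\cT_m=m_+$, with quantum contents $c_i=q^{2(i-1)}$, so that $c_m=q^{2m-2}$.

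\textbf{Fusion of a single box.} From the definition \eqref{PsiPhi} one reads off $\Phi_{m,j}(\vec c)=\vX_{m-1,j}(\cev{c}_\circ/c_m)\,\Phi_{m-1,j}(\vec{c}_\circ)$, whence by \eqref{A_IMO_sa} one has $A_j^{m_+}=\tfrac1{[m]_+}\vX_{m-1,j}(\cev{c}_\circ/c_m)\,A_j^{(m-1)_+}$. Inserting this and the factorisation \eqref{RE:fact1} (with $u_m=c_m$, $\vec u_\circ=\vec{c}_\circ$) into \eqref{TMA}, and using $A_j^{(m-1)_+}B^{(m-1)_+}_j(x)=\bB^{(m-1)_+}_j(x)$, gives
\begin{equation*}
[m]_+\,t^{m_+}_j(x)=\dl\,\vX_{m-1,j}(\cev{c}_\circ/c_m)\,\bB^{(m-1)_+}_j(x)\,\cX_{m-1,j}(x^2c_m\vec{c}_\circ)\,\tau_j(xc_m)\,\dr^{j+m-1}_{j}.
\end{equation*}
By the first equality in \eqref{Ypp} (with $\cV_m\to(m-1)_+$, $\cV'_n\to1$) the factor $\cX_{m-1,j}(x^2c_m\vec{c}_\circ)$ equals $Y_{j,++}^{(m-1)_+,1}(x^2c_m)$, and $\tau_j(xc_m)=B^1_j(xc_m)$; so the three rightmost factors are precisely $\bB^{(m-1)_+}_j(x)\,\bY_{j,++}^{(m-1)_+,1}(x^2c_m)\,\bB^1_j(xc_m)$. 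On the other hand Corollary \ref{crl53} (in the form supplied by the proof of Proposition \ref{prop53}, with $\cT_m\to(m-1)_+$, $\cT'_n\to1$, $z\to xc_m$) reads
\begin{equation*}
t^{m-1_+}_j(x)\,t^1_j(xc_m)=\dl\,\bB^{(m-1)_+}_j(x)\,\bY_{j,++}^{(m-1)_+,1}(x^2c_m)\,\bB^1_j(xc_m)\,\bY_{j,--}^{1,(m-1)_+}(b/(x^2c_m))\,\dr^{j+m-1}_{j}.
\end{equation*}
Thus \eqref{TMrec} is equivalent to the assertion that, inside the closed trace, the leftmost factor $\vX_{m-1,j}(\cev{c}_\circ/c_m)$ may be traded for $\bY_{j,--}^{1,(m-1)_+}(b/(x^2c_m))$ at the far right, at the cost of the additive term $\eta^{m-1}_+(x)\,t^{m-1_+}_j(x\,|\,xc_m)$.

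\textbf{Transport of $\vX_{m-1,j}(\cev{c}_\circ/c_m)$ and the residual term.} One transports $\vX_{m-1,j}(\cev{c}_\circ/c_m)$ one Baxterised generator at a time from the far left of the trace across $\bB^{(m-1)_+}_j(x)$, $\cX_{m-1,j}(x^2c_m\vec{c}_\circ)$ and $\tau_j(xc_m)$, using the Yang--Baxter equation \eqref{YBE} and the reflection equation \eqref{RE} (equivalently, the fused relations of Theorem \ref{th51})---precisely the moves underlying the proof of Proposition \ref{prop53}. At the step where the transported generator $X_{j+m-2}$ meets its partner $X_{j+m-2}(x^2c_m)$ coming from $\cX_{m-1,j}(x^2c_m\vec{c}_\circ)$, one applies Proposition \ref{prop51}: the two crossing parameters are $1/c_m=q^{-2(m-1)}$ and $x^2c_m=x^2q^{2(m-1)}$, whose product is $x^2\neq b$ in general, so in place of the clean cancellation \eqref{CU} the identity \eqref{traceXYX} produces a residual contribution of scalar weight $\lambda\,(Q_+-Q_-x^2)\big/\big((1-q^{-2(m-1)})(1-x^2q^{2(m-1)})\big)$. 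The remaining Baxterised generators of the transported $\vX$ then abut $A_j^{(m-1)_+}$, on whose image $X_i(q^{-2k})$ acts by the scalar $[k+1]/[k]$; the resulting product telescopes---the part $L$ of $\vX_{m-1,j}(\cev{c}_\circ/c_m)$ not carrying the index $j+m-2$ satisfies $L\,A_j^{(m-1)_+}=[m-1]\,A_j^{(m-1)_+}$---which both reduces the surviving trace to $t^{m-1_+}_j(x\,|\,xc_m)$ (by \eqref{TMex}) and fixes the scalar of the residual term. Simplifying with $\lambda/(1-q^{-2(m-1)})=q^{m-1}/[m-1]$ and $Q_+=bQ_-$, this scalar is $\eta^{m-1}_+(x)$ of \eqref{etax}; dividing by $[m]_+$ gives \eqref{TMrec}, and the $m_-$ case follows as above.

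\textbf{Main obstacle.} The crux is the transport of $\vX_{m-1,j}(\cev{c}_\circ/c_m)$ around the Ocneanu--Markov trace, which is only quasi-cyclic, together with the accounting of the scalar weights accumulated at the $m-1$ strands: one must verify that these combine to exactly $\eta^{m-1}_\pm(x)$ and that the accompanying combinatorial prefactors collapse to $1/[m]_\pm$. The primitive relations of Appendix A (the cases $m=2,3$) are precisely the checks that pin down this identification.
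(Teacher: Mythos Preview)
Your opening reductions are correct: the factorisation
$A_j^{m_+}=\tfrac1{[m]_+}\vX_{m-1,j}(\cev c_\circ/c_m)\,A_j^{(m-1)_+}$
and the passage from the $m_+$ case to $m_-$ via $q\to-q^{-1}$ are exactly the right moves, and your displayed expression for $[m]_+t^{m_+}_j(x)$ coincides with the paper's starting point.

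The gap is in the ``transport'' paragraph. The Ocneanu--Markov trace is not cyclic, so one cannot literally push $\vX_{m-1,j}(\cev c_\circ/c_m)$ around the trace via Yang--Baxter and reflection moves; the mechanism that actually reduces the trace is the identity \eqref{traceXYX}, applied not once but $m-1$ times, at each of the levels $k=j+m-2,\,j+m-3,\ldots,j$ in turn. At level $j+m-1-k$ the pair $X_{j+m-1-k}(c_k/c_m)$ on the left and $X_{j+m-1-k}(x^2c_kc_m)$ on the right is removed by \eqref{traceXYX}, producing a ``clean'' part (which feeds into the next level) and a residual with weight
\[
\gamma^{m,k}_+(x)\;=\;\lambda\,\frac{Q_+-Q_-x^2c_k^2}{(1-c_k/c_m)(1-x^2c_kc_m)}\,.
\]
In each residual the surviving left factors $X_j(c_{m-1}/c_m)\cdots X_{j+m-2-k}(c_{k+1}/c_m)$ \emph{and} the surviving right factors $X_{j+m-2-k}(x^2c_{k+1}c_m)\cdots X_j(x^2c_{m-1}c_m)$ must both be absorbed into $A^{(m-1)_+}_j$ via Proposition~\ref{prop54}; this produces, besides the $\xi^{m-1-k}_+=[m{-}k]_+$ you found, an $x$--dependent factor $\zeta^{m,m-1-k}_+(x)$ from the right $X$'s that your argument omits. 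The coefficient of the extended transfer matrix is therefore the sum
\[
\eta^{m-1}_+(x)\;=\;\sum_{k=1}^{m-1}\gamma^{m,k}_+(x)\,\xi^{m-1-k}_+\,\zeta^{m,m-1-k}_+(x),
\]
and only the full sum collapses to the closed form \eqref{etax}. Your single term $[m-1]\cdot\lambda(Q_+-Q_-x^2)\big/\big((1-q^{-2(m-1)})(1-x^2q^{2(m-1)})\big)$ equals $\eta^{m-1}_+(x)$ for $m=2$ (where the sum has one term and $\zeta=1$) but already fails for $m=3$: there $\eta^2_+(x)=Q_+\,q^2[2]\,(1-q^2x^2/b)/(1-q^6x^2)$, whereas your expression gives $Q_+q^2(1-x^2/b)/(1-q^4x^2)$. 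The Appendix~A computation of $t^{3_\pm}_j(x)$ in fact shows two $\gamma$--contributions being combined, which is the first instance where your one-step accounting breaks down.
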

We will give a proof of this Theorem a little bit further. For now we want to state some properties of the extended transfer matrix $t^{m_\pm}_{j}(x\,|\, x q^{2m})$. Let us introduce the following short-hand notation,
\begin{equation} \label{TMex2}
t^{\cT_m}_{j}(x \,|\, x c_{m+1} \,|\,\ldots\,|\,xc_{m+k}) = \dl \bB^{\cT_m}_j(x) \,\tau_j(x c_{m+1} )\, \tau_j(x c_{m+2} )\, \cdots \tau_j(x c_{m+k} ) \dr^{j+m-1}_j ,
\end{equation}
and
\begin{equation} \label{TM1ex}
t^{{1}}_{j}(x \,|\, x c_{2} \,|\,\ldots\,|\,xc_{k}) = \dl \tau_j(x c_1)\, \tau_j(x c_2)\cdots \tau_j(x c_{k} ) \dr_j \;.
\end{equation}
The extended operators \eqref{TMex2} respect the following formal recurrence relation, which follows by a direct application of Theorem \ref{th52}:
\begin{crl}
The extended transfer matrices satisfy the following formal recurrence relation 
\begin{align} \label{TMrec_ex}
t^{m_\pm}_{j}(x \,|\, x c_{m+1} \,|\,\ldots\,|\,x c_{m+k}) &= \frac{1}{[m]_{\pm}} \Big( t^{m_\pm}_{j}(x)\, t_j^{{1}}(x c_m \,|\, x c_{m+1} \,|\,\ldots\,|\,x c_{m+k}) \el 
& \hspace{2cm} + \eta_{\pm}^{m-1}(x)\, t^{m_\pm}_{j}(x \,|\, x c_{m}\,|\, x c_{m+1} \,|\,\ldots\,|\,xc_{m+k}) \Big) \,,
\end{align} 
where $c_i = q^{\pm(2i-2)}$.
\end{crl}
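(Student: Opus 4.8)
The plan is to derive the extended recurrence \eqref{TMrec_ex} as a direct corollary of Theorem \ref{th52}, by applying the recurrence \eqref{TMrec} inside a trace that already carries the extra factors $\tau_j(xc_{m+1})\cdots\tau_j(xc_{m+k})$ to the right of $\bB^{m_\pm}_j(x)$. First I would recall the definitions \eqref{TMex2} and \eqref{TM1ex} and note that, since the $\tau_j(xc_i)$ appearing to the right all act at the site $j$ (i.e.\ the bottom line in the diagrammatic picture), they are untouched by the steps that establish \eqref{TMrec}; the recurrence \eqref{TMrec} is an identity at the level of the operator $\bB^{m_\pm}_j(x)$ itself, or more precisely at the level of the partial trace $\dl \bB^{m_\pm}_j(x)\,(\cdots)\dr^{j+m-1}_j$, so one may legitimately sandwich the extra $\tau_j$'s inside before tracing out.

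The key steps, in order, are: (i) write $t^{m_\pm}_{j}(x \,|\, xc_{m+1}\,|\,\ldots\,|\,xc_{m+k}) = \dl \bB^{m_\pm}_j(x)\,\tau_j(xc_{m+1})\cdots\tau_j(xc_{m+k})\dr^{j+m-1}_j$; (ii) invoke Theorem \ref{th52} to replace the operator producing $t^{m_\pm}_j(x)$ by the two-term expression on the RHS of \eqref{TMrec}, i.e.\ write schematically $t^{m_\pm}_j(x)\leftrightarrow \tfrac{1}{[m]_\pm}\big(\,t^{m-1_\pm}_j(x)\,\tau_j(xq^{2m-2}) + \eta^{m-1}_\pm(x)\,[\text{the }(x\,|\,xq^{2m-2})\text{ insertion}]\,\big)$ but now applied under the larger trace so that the trailing $\tau_j(xc_{m+1})\cdots\tau_j(xc_{m+k})$ ride along untouched; (iii) in the first term the freshly produced $\tau_j(xc_m)=\tau_j(xq^{2m-2})$ combines with the trailing $\tau_j$'s to give exactly $t^{1}_j(xc_m\,|\,xc_{m+1}\,|\,\ldots\,|\,xc_{m+k})$ by \eqref{TM1ex}, after using that $\dl B^{\cV_m}_j(x)\dr$ factorizes as a product of a piece traced over sites $j,\ldots,j+m-1$ and a piece at site $j$ — here one uses the multiplicativity of the Ocneanu--Markov trace over the inclusion $\chH_{j}\subset\chH_{j+1}\subset\cdots$ together with locality $[\tau_j(x),\sigma_i]=0$; and (iv) in the second term the insertion $\tau_j(xq^{2m-2})$ simply merges into the list of trailing $\tau_j$'s, giving $t^{m-1_\pm}_j(x\,|\,xc_m\,|\,xc_{m+1}\,|\,\ldots\,|\,xc_{m+k})$ by \eqref{TMex2}. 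Collecting the two terms, with the prefactor $1/[m]_\pm$ and the coefficient $\eta^{m-1}_\pm(x)$ inherited verbatim from \eqref{TMrec}, yields \eqref{TMrec_ex}, and finally I would note $c_i=q^{\pm(2i-2)}$ for the symmetric/antisymmetric cases, consistent with the quantum contents of $m_\pm$.

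The main obstacle is the bookkeeping in step (iii): one must be careful that the recurrence \eqref{TMrec} — which is proved by manipulating $\bB^{m_\pm}_j(x)$ via the fusion relations of Sections 3 and 4 and then tracing — remains valid when an arbitrary product $\tau_j(xc_{m+1})\cdots\tau_j(xc_{m+k})$ is glued to its right before the (partial) trace is taken. This requires checking that every manipulation used in the proof of Theorem \ref{th52} is either (a) an identity taking place strictly to the left of the inserted $\tau_j$'s (involving only the fused Yang--Baxter operators and $B^{m-1_\pm}_j$), or (b) a trace step $\Tr_{j+m-1},\ldots,\Tr_j$ that never touches the boundary site in a way obstructed by the extra $\tau_j$'s — indeed the extra factors commute with all $\sigma_i$ for $i\geq j$ appearing in those trace manipulations by \eqref{def:t:3}, so they pass through harmlessly. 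Once this commutation/locality argument is spelled out, the rest is the purely formal substitution described above; accordingly I expect the write-up to be short, essentially ``apply Theorem \ref{th52} under the trace, then reassemble the $\tau_j$-strings using \eqref{TMex2} and \eqref{TM1ex}.''
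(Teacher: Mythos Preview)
Your approach is correct and is precisely what the paper means by ``follows by a direct application of Theorem \ref{th52}'': one reruns the proof of \eqref{TMrec} with the passive string $\tau_j(xc_{m+1})\cdots\tau_j(xc_{m+k})$ appended on the right, noting that every step of that proof (the factorisations of $\bB^{m_\pm}_j(x)$, the applications of Proposition \ref{prop55}, and the partial traces $\Tr_{j+m-1},\ldots,\Tr_{j+1}$) acts on indices $\geq j+1$ or produces only scalars at site $j$, so the extra $\tau_j$'s commute through by locality and survive untouched to form the extended $t^1_j$ and $t^{m-1_\pm}_j$ objects via \eqref{TM1ex} and \eqref{TMex2}.

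One remark on bookkeeping: your steps (iii)--(iv) correctly produce $t^{m-1_\pm}_j(x)\,t^1_j(xc_m\,|\,\ldots)$ and $t^{m-1_\pm}_j(x\,|\,xc_m\,|\,\ldots)$ on the right-hand side, in agreement with the explicit instances \eqref{tm21} and the $t^{3_\pm}_j(x\,|\,xq^{\pm6})$ computation in Example \ref{ex51}. The superscripts $m_\pm$ printed on the right-hand side of \eqref{TMrec_ex} are evidently a typo for $m\!-\!1_\pm$; your derivation yields the version consistent with both \eqref{TMrec} and the worked examples.
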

We will now state some properties that we will use in the proof of Theorem \ref{th52} which follows afterwards.
\begin{prop} \label{prop54} 
Let $\vec{c}=(1,q^{\pm2},q^{\pm4},\ldots,q^{\pm(2m-2)})$ denote the quantum contents of a totally symmetric (for $+$) or antisymmetric (for $-$) $m$-tableau. Then the following identities hold $(i<m)$:
\begin{align} 
X_{j+i-1}(x)\, \Phi_{m,j}(\vec{c}) = \Phi_{m,j}(\vec{c})\, X_{j+i-1}(x) &= \mathcal{X}_{\mp}(x)\, \Phi_{m,j}(\vec{c}) \,, \label{prop54:1} \\
X_{j+i-1}(x)\, \Psi_{m,j}(\vec{c}) = \Psi_{m,j}(\vec{c}) \,X_{j+i-1}(x) &= \mathcal{X}_{\mp}(x)\, \Psi_{m,j}(\vec{c}) \,, \label{prop54:2}
\end{align}
where
\begin{equation} \label{calX}
\mathcal{X}_\mp(x)= \pm\frac{q^{\pm1}-xq^{\mp1}}{1-x} \,.
\end{equation}
\end{prop}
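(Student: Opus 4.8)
The plan is to establish \eqref{prop54:1} first and then obtain \eqref{prop54:2} by an entirely analogous argument, using that $\Phi_{m,j}(\vec{c})$ and $\Psi_{m,j}(\vec{c})$ are both proportional to the primitive idempotent $A_j^{\cT_m}$ (for $\cT_m$ totally symmetric/antisymmetric) by \eqref{A_IMO_sa}. Since $A_j^{\cT_m}$ is the projector onto the one-dimensional representation of the shifted subalgebra $\cH_n$ generated by $\sigma_j,\ldots,\sigma_{j+m-2}$, on which each such $\sigma_{j+i-1}$ acts by a single scalar, it suffices to identify that scalar and to show the Baxterized element $X_{j+i-1}(x)$ acts through it.

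First I would record the scalar relations $\sigma_{j+i-1}\,A_j^{\cT_m} = A_j^{\cT_m}\,\sigma_{j+i-1} = s_\mp\, A_j^{\cT_m}$ for $1\le i<m$, where $s_+ = q$ in the symmetric case and $s_- = -q^{-1}$ in the antisymmetric case. This is standard: in the totally symmetric representation the long element acts by the eigenvalue $q$ of $\sigma$ (the $(\sigma-q)$ factor of \eqref{sigma_cycle} is killed), and the antisymmetric case follows from the transposition symmetry \eqref{A_trans_id}, $q\to -q^{-1}$. Alternatively one derives it directly from the recursive definition \eqref{AT} together with $J_{m,j}A_j^{\cT_m}=c_m A_j^{\cT_m}$ and the Jucys--Murphy relation $J_{i+1,j}=\sigma_{j+i-1}J_{i,j}\sigma_{j+i-1}$. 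The centrality of $A_j^{\cT_m}$ inside that shifted subalgebra gives both the left and right versions simultaneously, and in particular shows $\sigma_{j+i-1}$ commutes with $A_j^{\cT_m}$, hence with $\Phi_{m,j}(\vec{c})$ and $\Psi_{m,j}(\vec{c})$.

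Next I would substitute into the Baxterized form \eqref{Xbax}:
\begin{equation*}
X_{j+i-1}(x)\,\Phi_{m,j}(\vec{c}) = \Big(\sigma_{j+i-1}+\tfrac{\lambda x}{1-x}\Big)\Phi_{m,j}(\vec{c}) = \Big(s_\mp + \tfrac{\lambda x}{1-x}\Big)\Phi_{m,j}(\vec{c}),
\end{equation*}
and then check that $s_\mp + \lambda x/(1-x)$ equals $\mathcal{X}_\mp(x)$ from \eqref{calX}. In the symmetric case ($s_+ = q$, upper sign $-$ in the subscript of $\mathcal X$, i.e.\ $\mathcal X_-$), $q + (q-q^{-1})x/(1-x) = (q - q^{-1}x)/(1-x)$, matching $\mathcal X_-(x)$ with the $+$ choice of top sign; in the antisymmetric case ($s_- = -q^{-1}$), $-q^{-1}+(q-q^{-1})x/(1-x) = -(q^{-1}-qx)/(1-x)$, matching $\mathcal X_+(x)$. (I would double-check the sign/label bookkeeping against the statement, since $\mathcal X_\mp$ with $\mp$ tied to the symmetric/antisymmetric choice is the only subtle point here.) The same computation with $\Phi$ replaced by $\Psi$ gives \eqref{prop54:2}, and the two-sided form in each case is immediate from centrality of the idempotent.

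There is essentially no serious obstacle: the only thing requiring care is the clean identification of the eigenvalue $s_\mp$ of an interior generator on the totally (anti)symmetric idempotent and the matching of the resulting rational function with $\mathcal X_\mp$. If one prefers not to invoke the representation-theoretic fact, the eigenvalue can instead be extracted algebraically from \eqref{A_IMO_sa} by noting that $X_{j+i-1}(c_i/c_{i+1})$ appears as a factor of $\Phi_{m,j}(\vec c)$ with $c_i/c_{i+1}=q^{\mp2}$, so that $X_{j+i-1}(q^{\mp2})$ has eigenvalue $q^{\pm1}$-type behaviour on it; combined with the unitarity relation \eqref{sxy} this pins down $\sigma_{j+i-1}$'s action. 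I expect the representation-theoretic route to be the shortest to write.
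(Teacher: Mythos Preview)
Your argument is correct. Your primary route---identifying $\Phi_{m,j}(\vec c)$ and $\Psi_{m,j}(\vec c)$ with $[m]_\pm!\,A_j^{\cT_m}$ via \eqref{A_IMO_sa}, invoking that $\sigma_{j+i-1}$ acts on the (anti)symmetrizer by the scalar $q$ (resp.\ $-q^{-1}$), and then evaluating $X_{j+i-1}(x)=\sigma_{j+i-1}+\lambda x/(1-x)$ on it---differs from the paper's, which is the purely algebraic argument you sketch as your alternative. The paper first checks the two-site identity $X_k(x)X_k(q^{\mp2})=X_k(q^{\mp2})X_k(x)=\mathcal X_\mp(x)X_k(q^{\mp2})$ directly from \eqref{sigma_cycle}, and then observes that by repeated use of the YBE one can always bring the factor $X_{j+i-1}(c_{m-i}/c_{m-i+1})=X_{j+i-1}(q^{\mp2})$ to the far left of $\Phi_{m,j}(\vec c)$ (or $X_{j+i-1}(c_i/c_{i+1})$ to the far right), and symmetrically for $\Psi$. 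Your approach is slightly more conceptual and arguably shorter, but it imports the representation-theoretic fact about the eigenvalue of $\sigma$ on the one-dimensional idempotent; the paper's approach stays entirely within manipulations of Baxterized elements and never needs to name the idempotent, which keeps the argument self-contained at the level of the $X_k(x)$'s and YBE already set up in Sections~2--3. Either way the content is the same: both reduce to $\sigma_{j+i-1}(\sigma_{j+i-1}+q^{\mp1})=\pm q^{\pm1}(\sigma_{j+i-1}+q^{\mp1})$.
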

\begin{proof}
Note that
\begin{equation} \label{prop51:3}
 X_j(x) X_j(q^{\mp2}) = X_j(q^{\mp2}) X_j(x) = \mathcal{X_\mp}(x) X_j(q^{\mp2}) \,.  
\end{equation}
The operator $\Phi_{m,j}(\vec{c})$ can always be written in a such form that the element $X_{j+i-1}(c_{m-i}/c_{m-i+1})$ for any $i=1\ldots m-1$ would be standing at the very left side of $\Phi_{m,j}(\vec{c})$, or equivalently the element $X_{j+i-1}(c_{i}/c_{i+1})$ would be standing at the very right side. The same property in the reversed order holds for $\Psi_{m,j}(\vec{c})$. Now $c_{i}=q^{\pm(2i-2)}$. Hence $c_{i}/c_{i+1}=c_{m-i}/c_{m-i+1}=q^{\mp2}$. This, by \eqref{prop51:3}, gives \eqref{prop54:1} and \eqref{prop54:2} as required.
\end{proof}
\begin{prop} \label{prop55} 
Let $c_i=q^{\pm(2i-2)}$. Then the following identity holds $(\forall Y_k \in \chH_k, \forall x)$
\begin{equation}
\dl X_k(c_i/c_j)\, Y_k\, X_k(x^2 c_i c_j) \dr_{k+1} = \dl Y_k \dr_k + \gamma_\pm^{j,i}(x)\, Y_k \,,
\end{equation}
where
\begin{equation}
 \gamma_\pm^{j,i}(x) =  \frac{Q_+ \lambda\, q^{2 j} (q^4-q^{4 i} x^2/b)}{(q^{2 j}-q^{2 i})(q^4-q^{2 (i+j)} x^2)}\, \Big|_{q\to q^{\pm1}} \;.
\end{equation}

\end{prop}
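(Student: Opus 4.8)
The plan is to read Proposition~\ref{prop55} as a mere specialisation of the cross‑unitarity computation already carried out in the proof of Proposition~\ref{prop51}. Recall that that proof establishes, for \emph{arbitrary} parameters $v,w$ and \emph{arbitrary} $Y_k\in\chH_k$, the identity
\begin{equation*}
\dl X_k(v)\, Y_k\, X_k(w) \dr_{k+1} = \dl Y_k \dr_k + \lambda\,\frac{Q_+ - Q_- v w}{(1-v)(1-w)}\, Y_k \,,
\end{equation*}
the specialisation $w=b/v$ in \eqref{CU} being just the choice that kills the second term. As recalled in \eqref{traceXYX}, this is obtained by inserting $X_k(v)=\sigma_k+\tfrac{\lambda v}{1-v}$ and $X_k(w)=\sigma_k^{-1}+\tfrac{\lambda}{1-w}$, expanding the product, and applying the Ocneanu--Markov axioms $\Tr_{k+1}(\sigma_k Y_k\sigma_k^{-1})=\Tr_k Y_k$, $\Tr_{k+1}(\sigma_k Y_k)=\Tr_{k+1}(Y_k\sigma_k)=Q_+ Y_k$, $\Tr_{k+1}(Y_k\sigma_k^{-1})=Q_- Y_k$, $\Tr_{k+1}(Y_k)=Q_0 Y_k$, together with $Q_+-Q_-=\lambda Q_0$; nothing in this derivation uses any property of $v,w$.

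I would then simply set $v=c_i/c_j$ and $w=x^2 c_i c_j$, so that $vw=x^2c_i^2$, giving
\begin{equation*}
\dl X_k(c_i/c_j)\, Y_k\, X_k(x^2 c_i c_j) \dr_{k+1} = \dl Y_k \dr_k + \lambda\,\frac{Q_+ - Q_- x^2 c_i^2}{(1-c_i/c_j)(1-x^2 c_i c_j)}\, Y_k \,.
\end{equation*}
Writing $Q_-=Q_+/b$ turns the numerator into $Q_+(1-x^2c_i^2/b)$, so the coefficient of $Y_k$ is $\lambda Q_+(1-x^2c_i^2/b)\big/\big((1-c_i/c_j)(1-x^2c_ic_j)\big)$. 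It remains to put in $c_i=q^{2i-2}$ for the totally symmetric tableau and tidy up: multiplying numerator and denominator by $q^{2j}q^4$, the factor $q^{2j}$ turns $1-c_i/c_j$ into $q^{2j}-q^{2i}$ while the factor $q^4$ turns $1-x^2c_ic_j$ into $q^4-q^{2(i+j)}x^2$ and $1-x^2c_i^2/b$ into $q^4-q^{4i}x^2/b$; this reproduces exactly $\gamma_+^{j,i}(x)$. For the totally antisymmetric tableau the contents $c_i=q^{-(2i-2)}$ are the reciprocals of the symmetric ones, so the resulting coefficient is obtained from the symmetric one by $q\to q^{-1}$ — equivalently $q\to -q^{-1}$, which has the cosmetic virtue of leaving $\lambda=q-q^{-1}$ manifestly invariant — the external quantities $\lambda,Q_\pm,b$ being treated as inert under it; this is precisely the meaning of the symbol $\big|_{q\to q^{\pm1}}$ in the stated formula for $\gamma_\pm^{j,i}(x)$, and matching the two expressions is the same one‑line manipulation as in the symmetric case.

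There is no genuine obstacle here: the statement is a corollary of an already‑proved identity plus elementary algebra of rational functions in $q$ and $x$. The only two points calling for a little care are (i) that one must invoke the \emph{intermediate} relation inside the proof of Proposition~\ref{prop51} rather than its final displayed form \eqref{CU}, in particular the trace identity $\Tr_{k+1}(\sigma_k Y_k)=Q_+ Y_k$ (valid because $\sigma_k Y_k=1\cdot\sigma_k\cdot Y_k$ is of the form $X\sigma_k X'$ with $X,X'\in\chH_k$); and (ii) bookkeeping of the $\pm$ conventions when specialising the quantum contents and of how $\big|_{q\to q^{\pm1}}$ is meant to act on $\lambda$, so that the sign in $\gamma_-^{j,i}(x)$ emerges as stated.
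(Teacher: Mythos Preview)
Your proof is correct and is exactly the paper's approach: the paper's own proof is a single line saying ``this follows straightforwardly by \eqref{traceXYX} with $x=c_i/c_j$ and $z=x^2 c_i c_j$'', and you have simply spelled out the ensuing rational-function bookkeeping. Your remark about the $\pm$ convention (that $\big|_{q\to q^{\pm1}}$ must treat $\lambda,Q_\pm,b$ as inert, or equivalently be read as $q\to\pm q^{\pm1}$ so that $\lambda$ is genuinely invariant) is a correct and useful clarification of the paper's notation.
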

\begin{proof} 
This property follows straightforwardly by \eqref{traceXYX} with $x=c_i/c_j$ and $y=x^2 c_i c_j$
\end{proof} 
We are now ready to give a proof of Theorem \ref{th52}.
\begin{proof} 
First we factorize the fused boundary operator in the following way,
\begin{align}
\bB^{m_\pm}_j(x) &= 1/[m]_{\pm}!\, X_j(c_{m-1}/c_m)\cdots X_{j+m-3}(c_{2}/c_m)  \el
& \hspace{1.3cm} \times X_{j+m-2}(c_1/c_m) \, \Phi_{m-1,j}(\vec{c}_\circ)\, B^{m-1_\pm}_j(x)\, X_{j+m-2}(x^2 c_1 c_m) \el
& \hspace{4cm} \times X_{j+m-3}(x^2 c_2 c_m) \cdots X_{j}(x^2 c_{m-1} c_m) \, \tau_j(x c_m)  \,.
\end{align}
Here we have assumed $\vec{c}=(1,q^{\pm2},q^{\pm4},\ldots,q^{\pm(2m-2)})$. Then, using Proposition \ref{prop55}, we have
\begin{align} \label{t_proof_1}
& [m]_{\pm}!\; t^{m_\pm}_{j}(x) = \dl \dl X_j(c_{m-1}/c_m)\cdots \dl X_{j+m-3}(c_{2}/c_m) \el
& \hspace{3cm} \times \dl X_{j+m-2}(c_1/c_m) \Phi_{m-1,j}(\vec{c}_\circ) B^{m-1_\pm}_j(x) X_{j+m-2}(x^2  c_1 c_m) \dr_{j+m-1} \el
& \hspace{5cm} \times X_{j+m-3}(x^2 c_2 c_m) \dr_{j+m-2} \cdots X_{j}(x^2 c_{m-1} c_m) \dr_{j+1} \tau_j(x c_m) \dr_{j}  \el
& = \dl \dl X_j(c_{m-1}/c_m)\cdots \dl X_{j+m-3}(c_{2}/c_m) \dl \Phi_{m-1,j}(\vec{c}_\circ) B^{m-1_\pm}_j(x) \dr_{j+m-2} \el
& \hspace{4cm} \times X_{j+m-3}(x^2 c_2 c_m) \dr_{j+m-2} \cdots X_{j}(x^2 c_{m-1} c_m) \dr_{j+1} \tau_j(x c_m) \dr_{j}  \el
& \quad + \gamma_\pm^{m,1}(x) \dl \dl X_j(c_{m-1}/c_m)\cdots \dl X_{j+m-3}(c_{2}/c_m) \, \Phi_{m-1,j}(\vec{c}_\circ) B^{m-1_\pm}_j(x) \el
& \hspace{4cm} \times X_{j+m-3}(x^2 c_2 c_m) \dr_{j+m-2} \cdots X_{j}(x^2 c_{m-1} c_m) \dr_{j+1} \tau_j(x c_m) \dr_{j}  \;.
\end{align}
The second term in the last equality evaluates to
\begin{align}
\gamma_\pm^{m,1}(x) \, \xi_\pm^{m-2} \, \zeta_\pm^{m,m-2}(x)\, \dl \Phi_{m-1,j}(\vec{c}_\circ)\, B^{m-1_\pm}_j(x) \, \tau_j(x c_m) \dr^{j+m-2}_{j}  \;.
\end{align}
where, by Proposition \ref{prop54},
\begin{align}
\xi_\pm^n &= \mathcal{X}_\pm(c_{m-1}/c_m)\mathcal{X}_\pm(c_{m-2}/c_m) \cdots \mathcal{X}_\pm(c_{m-n}/c_m) = [n\!+\!1]_{\pm} \,, \\
\zeta_{\pm}^{m,n}(x) &= \mathcal{X}_\pm(x^2 c_{m-n} c_m) \cdots \mathcal{X}_\pm(x^2 c_{m-2}c_m)\mathcal{X}_\pm(x^2 c_{m-1}c_m) = \frac{q^{n+6}-q^{4 m-n} x^2}{q^6-q^{4 m} x^2} \;\big|_{q\to\pm q^{\pm1}} \;.
\end{align}
The first term in the last equality of \eqref{t_proof_1}, by Proposition \ref{prop55}, gives
\begin{align} \label{t_proof_2}
& \dl X_j(c_{m-1}/c_m)\cdots \dl X_{j+m-4}(c_{3}/c_m) \dl \Phi_{m-1,j}(\vec{c}_\circ) B^{m-1_\pm}_j(x) \dr^{j+m-2}_{j+m-3}\el
& \hspace{4cm} \times X_{j+m-4}(x^2 c_3 c_m) \dr_{j+m-3} \cdots X_{j}(x^2 c_{m-1} c_m) \dr_{j+1} \tau_j(x c_m) \dr_{j}  \el
& \quad + \gamma_\pm^{m,2}(x) \,\dl X_j(c_{m-1}/c_m)\cdots \dl X_{j+m-4}(c_{3}/c_m) \dl \Phi_{m-1,j}(\vec{c}_\circ) B^{m-1_\pm}_j(x) \dr_{j+m-2}\el
& \hspace{4cm} \times X_{j+m-4}(x^2 c_3 c_m) \dr_{j+m-3} \cdots X_{j}(x^2 c_{m-1} c_m) \dr_{j+1} \tau_j(x c_m) \dr_{j}  \;.
\end{align}
The second term in the last equality evaluates to
\begin{align}
\gamma_\pm^{m,2}(x) \, \xi_\pm^{m-3} \, \zeta_\pm^{m,m-3}(x)\, \dl \Phi_{m-1,j}(\vec{c}_\circ) B^{\cT_{m-1}}_j(x) \, \tau_j(x c_m) \dr^{j+m-2}_{j}  \;.
\end{align}
In the same way as in the previous step the first term in the last equality of \eqref{t_proof_2} gives
\begin{align} \label{t_proof_3}
& \dl X_j(c_{m-1}/c_m)\cdots \dl X_{j+m-5}(c_{4}/c_m) \dl \Phi_{m-1,j}(\vec{c}_\circ) B^{m-1_\pm}_j(x) \dr^{j+m-2}_{j+m-4}\el
& \hspace{1.5cm} \times X_{j+m-5}(x^2 c_4 c_m) \dr_{j+m-4} \cdots X_{j}(x^2 c_{m-1} c_m) \dr_{j+1} \tau_j(x c_m) \dr_{j}  \el
& \quad + \gamma_\pm^{m,3}(x) \, \xi_\pm^{m-4} \, \zeta_\pm^{m,m-4}(x)\, \dl \Phi_{m-1,j}(\vec{c}_\circ) B^{\cT_{m-1}}_j(x) \, \tau_j(x c_m) \dr^{j+m-2}_{j}  \; .
\end{align}
Repeating these steps for total $m-1$ times we get
\begin{align}
[m]_{\pm}!\;t^{m_\pm}_{j}(x) &= \dl \dl \Phi_{m-1,j}(\vec{c}_\circ) \, B^{m-1_\pm}_j(x) \dr^{j+m-2}_{j} \,\tau_j(x c_m) \dr_{j}  \el
 & \quad + \Bigg( \sum_{k=2\ldots m} \gamma_\pm^{m,k-1}(x) \, \xi_\pm^{m-k} \, \zeta_\pm^{m,m-k}(x) \Bigg) \dl \Phi_{m-1,j}(\vec{c}_\circ) \, B^{m-1_\pm}_j(x) \, \tau_j(x c_m) \dr^{j+m-2}_{j}  \; .
\end{align}
The sum above can be evaluated explicitly giving \eqref{etax}. Note that $\xi_\pm^{0}=\zeta_\pm^{m,0}(x)=1$. In such a way we obtain \eqref{TMrec} as required. 
\end{proof}
%


The transfer matrix for the primitive tableau of size $m=1$ has been found in \cite{Isaev10},
\begin{align}
t^{1}_j(x) &= \dl \tau_j(x)\dr_j = t_{j-1}^{{1}}(x) + \lambda\, Q_+\frac{1-x^2/b}{(1-x)^2} \,\tau_{j-1}(x) . \label{Isaev218}
\end{align}
This expression is a recurrence equation which can be easily solved, giving ($j\geq2$)
\begin{equation}
t^{1}_j(x) = t^{{1}}_1(x) + \lambda\, Q_+\frac{1-x^2/b}{(1-x)^2} \sum_{k=1\ldots j-1} \tau_k(x) .
\end{equation}
Here $\tau_k(x)$ are the rational functions \eqref{RE_sol}. Transfer matrices for small $m>1$ are given in the example bellow. 

\begin{example} \label{ex51} 
We give the explicit form of the transfer matrices $t^{m_\pm}_{j}(x)$ for some simple tableaux of the sequence $2_\pm \subset 3_\pm \subset 4_\pm \subset \ldots \subset m_\pm$ in step-by-step derivation. For $m=2$ we have
\begin{align} \label{tm2}
t^{2_\pm}_{j}(x) &= 1/[2]_\pm\,\dl X_j(c_1/c_2)\tau_j(x)X_j(x^2 c_1 c_2)\tau_j(x c_2)\dr^{j+1}_j \el 
&= 1/[2]_\pm \Big( \dl \dl \tau_j(x) \dr_{j} \tau_j(x c_2)\dr_j + \gamma^{2,1}_{\pm}(x)\, \dl \tau_j(x)\, \tau_j(x q^{\pm2}) \dr_j \Big) \el
&= 1/[2]_\pm \Big( t^{{1}}_j(x)\, t^{{1}}_j(x q^{\pm2}) + \eta^{1}_{\pm}(x)\, t^1_{j}(x\,|\,xq^{\pm2}) \Big) \,. \hspace{5.48cm}
\end{align}
For $m=3$ we have
\begin{align} \label{tm3}
t^{3_\pm}_{j}(x) &= 1/[3]_\pm!\,\dl X_j(c_2/c_3)X_{j+1}(c_1/c_3) X_{j}(c_1/c_2) B^{2_\pm}_j(x) X_{j+1}(x^2 c_1 c_3)X_j(x^2 c_2 c_3)\tau_j(x c_3)\dr^{j+2}_j \el
& = 1/[3]_\pm! \Big( \dl X_j(c_2/c_3) \dl X_{j}(c_1/c_2) B^{2_\pm}_j(x) \dr_{j+1} X_j(x^2 c_2 c_3)\tau_j(x c_3)\dr^{j+1}_j \el
& \hspace{2cm} + \gamma^{3,1}_{\pm}(x)\, \xi^{1}_{\pm}\, \zeta^{3,1}_{\pm}(x) \dl X_{j}(c_1/c_2) B^{\cT_{2}}_j(x) \tau_j(x c_3)\dr^{j+1}_j \Big) \el
& = 1/[3]_\pm! \Big( \dl \dl X_{j}(c_1/c_2) B^{2_\pm}_j(x) \dr^{j+1}_j \tau_j(x c_3)\dr_j \el
& \hspace{2cm} + \big( \gamma^{3,2}_{\pm}(x) + \gamma^{3,1}_{\pm}(x)\, \xi^{1}_{\pm}\, \zeta^{3,1}_{\pm}(x) \big) \dl X_{j}(c_1/c_2) B^{2_\pm}_j(x) \tau_j(x c_3)\dr^{j+1}_j \Big) \el
& = 1/[3]_\pm \Big( t^{{2}}_{j,\pm}(x)\,t^{{1}}_j(x q^{\pm4}) + \eta^2_{\pm}(x)\, t^2_{j,\pm}(x\,|\,xq^{\pm4}) \Big) \,.
\end{align}
The last operator by \eqref{tm2} evaluates to
\begin{align} \label{tm21}
t^{2_\pm}_{j}(x\,|\,xq^{\pm4}) = 1/[2]_\pm \Big( t^{1}_j(x)\, t^{1}_j(x q^{\pm2} \,|\, x q^{\pm4}) + \eta^{1}_{\pm}(x)\, t^1_{j}(x\,|\,xq^{\pm2}\,|\, x q^{\pm4}) \Big).
\end{align}
For $m=4$ we have (here we have omitted the explicit derivation)
\begin{align} \label{tm4}
t^{4_\pm}_{j}(x) &= 1/[4]_\pm \Big( t^{3_\pm}_{j}(x)\,t^{1}_j(x q^{\pm6}) + \eta^3_{\pm}(x)\, t^{3_\pm}_{j}(x\,|\,xq^{\pm6}) \Big) .\quad\;
\end{align}
The last operator by \eqref{tm3} and \eqref{tm21} evaluates to
\begin{align}
t^{3_\pm}_{j}(x\,|\,xq^{\pm6}) &= 1/[3]_\pm \Big( t^{2_\pm}_{j}(x)\,t^{1}_j(x q^{\pm4}\,|\,x q^{\pm6}) + \eta^2_{\pm}(x)\, t^{2_\pm}_{j}(x\,|\,xq^{\pm4}\,|\,x q^{\pm6}) \Big) \el 
& = 1/[3]_\pm! \Big( [2]_\pm\, t^{2_\pm}_{j}(x)\,t^{1}_j(x q^{\pm4}\,|\,x q^{\pm6}) + \eta^2_{\pm}(x) \big( t^{1}_j(x)\, t^{1}_j(x q^{\pm2} \,|\, x q^{\pm4}\,|\, x q^{\pm6}) \el 
& \hspace{7cm} + \eta^{1}_{\pm}(x)\, t^1_{j}(x\,|\,xq^{\pm2}\,|\, x q^{\pm4}\,|\, x q^{\pm6}) \big) \Big) . 
\end{align}
with $t^{2_\pm}_{j}(x)$ given by \eqref{tm2}.
\end{example}

\smallskip


Let $\cT_{m+1}=\{1,2,\ldots,m;m+1\}$. We will call such a tableaux the $(m,1)$--hook symmetric tableau. Similarly, we will call $\cT_{m+1}=\{1,m+1;2,3,\ldots,m\}$ the $(m,1)$--hook antisymmetric tableau. We denote the corresponding transfer matrices by $t^{m;1_\pm}_j(x)$, where the subscript $+$ (resp.~$-$) denotes totally symmetric (resp.~antisymmetric) case. 


\begin{thrm} \label{th53} 
Transfer matrices for $m\geq2$ satisfy the the following identity
\begin{equation} \label{TMfunc}
t_{j}^{m_\pm}(xq^{\pm1})\,t^1_j(xq^{\mp1}) = \phi^{m}_\pm(x)\, t^{m+1_\pm}_{j}(xq^{\mp1}) + \psi^{m}_\pm(x)\, t^{m;1_\pm}_{j}(xq^{\pm1}) ,
\end{equation} 
where
\begin{equation} \label{pfipsi}
\phi^{m}_\pm(x) =  \frac{1-q^{2m} x^2/b}{q^m(1-x^2/b)} \,\Big|_{q\to\pm q^{\pm1}} , \qquad \psi^m_\pm(x) = -\frac{(q^2-x^2/b)[m]}{q(1-x^2/b)} \,\Big|_{q\to\pm q^{\pm1}}.
\end{equation}
\end{thrm}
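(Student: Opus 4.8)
The statement is the transfer-matrix shadow of the Hecke-algebra branching $m_\pm\otimes\square=(m{+}1)_\pm\oplus(m;1)_\pm$: multiplying the size-$m$ (anti)symmetric boundary operator by a single affine generator produces the size-$(m{+}1)$ (anti)symmetric operator together with the $(m,1)$--hook operator, and the coefficients $\phi^m_\pm,\psi^m_\pm$ carry the resulting normalization factors. I will carry out the symmetric ($+$) case; the antisymmetric one then follows by the substitution $q\to-q^{-1}$ via \eqref{A_trans_id} and \eqref{Y_trans_id}, since the transpose of the $(m,1)$--hook symmetric tableau is the $(m,1)$--hook antisymmetric one. Write $\vec c=(1,q^{2},\ldots,q^{2m-2})$ for the contents of $m_+$ and $\cT_{m+1}=\{1,\ldots,m;m{+}1\}$ for the hook tableau, whose last quantum content is $q^{-2}$. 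First I would collect the left-hand side into one Ocneanu--Markov trace, running the argument of Proposition \ref{prop53} (which rests on cross-unitarity, Proposition \ref{prop52}, in its projected form Corollary \ref{crl51}) with $\cV_m=m_+$, $\cV'_n=\cV_1$, $v_1=1$, $xz=x^2$:
\[
t^{m_+}_j(xq)\,t^1_j(xq^{-1})=\dl \bB^{m_+}_j(xq)\,\bY^{m_+,1}_{j,++}(x^2)\,\tau_j(xq^{-1})\,\bY^{1,m_+}_{j,--}(b/x^2)\dr^{j+m}_j .
\]
Unfolding the second form of \eqref{Ypp} gives $\bY^{m_+,1}_{j,++}(x^2)=\cX_{m,j}(x^2\vec c)$, which is precisely the connector in the factorization \eqref{RE:fact1} of $B^{\cT_{m+1}}_j(xq)$ --- here one uses the content matching $(xq)^2q^{-2}=x^2$. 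Hence $\bB^{m_+}_j(xq)\,\bY^{m_+,1}_{j,++}(x^2)\,\tau_j(xq^{-1})=A^{m_+}_j\,B^{\cT_{m+1}}_j(xq)$.

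The second step is the idempotent split $A^{m_+}_j=A^{(m+1)_+}_j+A^{\cT_{m+1}}_j$ (valid because $m_+$ has exactly two addable nodes). Substituting it above splits $t^{m_+}_j(xq)\,t^1_j(xq^{-1})$ into a hook term $\dl\bB^{\cT_{m+1}}_j(xq)\,\bY^{1,m_+}_{j,--}(b/x^2)\dr^{j+m}_j$ and a symmetric term $\dl A^{(m+1)_+}_j B^{\cT_{m+1}}_j(xq)\,\bY^{1,m_+}_{j,--}(b/x^2)\dr^{j+m}_j$. For the hook term I would show that the trailing bridge collapses to a scalar under the trace: writing $\bY^{1,m_+}_{j,--}(b/x^2)=A^{m_+}_j\,\vX_{m,j}(b/(x^2\cev c))$, Proposition \ref{prop54} absorbs its first $m{-}1$ factors into $A^{m_+}_j$, and the last one absorbs after the interior $\tau_j$'s are moved aside (locality of $\tau_j$) and one more application of cross-unitarity (Propositions \ref{prop51} and \ref{prop52}) together with the absorption identities of Lemma \ref{lemma41}; the result is $\psi^m_+(x)\,t^{m;1_+}_j(xq)$.

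For the symmetric term I would multiply $B^{\cT_{m+1}}_j(xq)$ by the full symmetrizer $A^{(m+1)_+}_j=\Phi_{m+1,j}(\vec c_+)/[m{+}1]_+!$ with $\vec c_+=(1,q^2,\ldots,q^{2m})$ (cf.\ \eqref{A_IMO_sa}): moving every interior $X$-operator next to $\Phi_{m+1,j}(\vec c_+)$ --- using only locality of $\tau_j$, never reordering the $X$'s among themselves --- each is absorbed into a scalar by Proposition \ref{prop54}, leaving $(\text{scalar})\cdot\Phi_{m+1,j}(\vec c_+)\cdot\prod_i\tau_j(\,\cdot\,)$ with $\tau_j$-arguments $\{xq^{-1},xq,xq^3,\ldots,xq^{2m-1}\}$. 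The same reduction applied to $B^{(m+1)_+}_j(xq^{-1})$ produces the identical $\Phi_{m+1,j}(\vec c_+)\prod_i\tau_j(\,\cdot\,)$ --- the $\tau_j$-argument set is the same --- times a different scalar, whence $A^{(m+1)_+}_j B^{\cT_{m+1}}_j(xq)=\mu(x)\,\bB^{(m+1)_+}_j(xq^{-1})$ for an explicit $\mu(x)$ built from $\mathcal X_-$-factors (cf.\ \eqref{calX}). Collapsing the trailing bridge as above then yields $\phi^m_+(x)\,t^{m+1_+}_j(xq^{-1})$, and it remains to verify that the accumulated constants --- the $\mathcal X_-$-products from Proposition \ref{prop54}, the $q$-factorials $[m]_+!$, $[m{+}1]_+!$, the normalization $f((m,1))$ of the hook idempotent, and the trace data $Q_+$, $b=Q_+/Q_-$ --- collapse to the functions $\phi^m_+(x),\psi^m_+(x)$ of \eqref{pfipsi}; with $c_l=q^{2l-2}$ this is a direct computation.

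I expect the main obstacle to be the identification $A^{(m+1)_+}_j B^{\cT_{m+1}}_j(xq)=\mu(x)\,\bB^{(m+1)_+}_j(xq^{-1})$ and the analogous collapse of the hook term's trailing bridge, where the relevant idempotent is no longer a full symmetrizer: one must isolate every $X$-operator against the appropriate idempotent without disturbing the ordering, and then track each resulting $\mathcal X_-$-factor and normalization precisely enough to arrive at the stated $\phi^m_\pm,\psi^m_\pm$ rather than merely the correct structure.
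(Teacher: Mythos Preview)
Your global strategy---combine the two transfer matrices into a single Markov trace via cross-unitarity, recognise the integrand as $A^{m_+}_j B^{\cV_{m+1}}_j(xq)$ times a trailing bridge, and then split $A^{m_+}_j=A^{(m+1)_+}_j+A^{m;1_+}_j$---is exactly the paper's. The difficulty is in the two ``collapse'' steps, and there your proposed mechanism does not work.

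The flaw is the locality claim. You want to move every interior $X$-factor of $B^{\cV_{m+1}}_j(xq)$ leftwards past the $\tau_j$'s and absorb it into $\Phi_{m+1,j}(\vec c_+)$ via Proposition~\ref{prop54}. But each block $\cX_{i,j}(\,\cdot\,)$ in \eqref{FusedB} ends with an $X_j(\,\cdot\,)$, and $\tau_j(x)$ does \emph{not} commute with $X_j$: its locality (cf.\ \eqref{def:t:3}) is only $[\tau_j(x),\sigma_i]=0$ for $i\neq j-1,j$, and the nontriviality of the reflection equation \eqref{RE} itself shows $\tau_j$ and $X_j$ genuinely fail to commute. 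Hence the $X_j$-factors are trapped between $\tau_j$'s and cannot reach the symmetrizer; your reduction of $A^{(m+1)_+}_j B^{\cT_{m+1}}_j(xq)$ to $(\text{scalar})\,\Phi_{m+1,j}(\vec c_+)\prod_i\tau_j(\,\cdot\,)$ breaks down at the very first $X_j$. The same obstruction blocks your ``trailing bridge'' argument for the hook term.

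The paper circumvents this not by absorbing interior $X$'s but by working under the trace. For the $(m{+}1)_+$ piece it first cycles the trailing $\vX_{m,j}(b/(x^2\vec c))$ round to the front (where Proposition~\ref{prop54} \emph{does} apply, producing $\phi^m_+(x)$), then inserts a harmless factor $g_+(\vec c)\vX_{m,j}(q^{-2}/\cev c)$ and invokes the fused reflection equation \eqref{fRE} to reorder $B^{m_+}_j(xq)\,\cX\,\tau_j(xq^{-1})$ into $\tau_j(xq^{-1})\,\vX\,B^{m_+}_j(xq)$---this, not absorption, is what turns the hook-content boundary operator into the symmetric one at the shifted argument. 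For the hook piece the key tool is Lemma~\ref{lemma42} ($\Phi_{m+1,j}(\vec c)\,B=B^R\,\Psi_{m+1,j}(\vec c)$), which lets one sandwich the trailing bridge between two $\Psi$'s and extract $\psi^m_+(x)$ by an analog of Proposition~\ref{prop54}. Neither the fused RE nor Lemma~\ref{lemma42} appears in your sketch, and without them the argument cannot close.
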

In terms of Young tableaux the identity above can be represented as ($k=2m-1$)
\begin{equation*}
\ytableausetup{boxsize=1.2em,aligntableaux=center}
\ytableaushort{{{\text{\scriptsize $q$}}}{{\text{\scriptsize $q^3$}}}{{\text{\scriptsize $...$}}}{{\text{\scriptsize $q^{k}$}}}}
\;\times\;
\ytableaushort{{{\text{\scriptsize $q^{\text{-}1}$}}}}
\;=\;
\ytableaushort{{{\text{\scriptsize $q^{\text{-}1}$}}}{{\text{\scriptsize $q$}}}{{\text{\scriptsize $q^3$}}}{{\text{\scriptsize $...$}}}{{\text{\scriptsize $q^k$}}}}
\;+\;
\ytableaushort{{{\text{\scriptsize $q$}}}{{\text{\scriptsize $q^3$}}}{{\text{\scriptsize $...$}}}{{\text{\scriptsize $q^k$}}},{{\text{\scriptsize $q^{\text{-}1}$}}}}
\end{equation*}
for the `$+$' case, and 
\begin{equation*}
\ytableausetup{boxsize=1.2em,aligntableaux=center}
\ytableaushort{{{\text{\scriptsize $q^{\text{-}1}$}}},{{\text{\scriptsize $q^{\text{-}3}$}}},{{\text{\scriptsize $...$}}},{{\text{\scriptsize $q^{\text{-}k}$}}}}
\;\times\;
\ytableaushort{{{\text{\scriptsize $q$}}}}
\;=\;
\ytableaushort{{{\text{\scriptsize $q$}}},{{\text{\scriptsize $q^{\text{-}1}$}}},{{\text{\scriptsize $q^{\text{-}3}$}}},{{\text{\scriptsize $...$}}},{{\text{\scriptsize $q^{\text{-}k}$}}}}
\;+\;
\ytableaushort{{{\text{\scriptsize $q^{\text{-}1}$}}}{{\text{\scriptsize $q$}}},{{\text{\scriptsize $q^{\text{-}3}$}}},{{\text{\scriptsize $...$}}},{{\text{\scriptsize $q^{\text{-}k}$}}}}
\end{equation*}
for the `$-$' case.
\begin{proof} 
First, by Proposition \ref{prop52}, we have
\begin{align}
t^{m_\pm}_{j}(xq^{\pm1})\, t^1_j(xq^{\mp1}) &= \frac{1}{[m]_{\pm}!} \dl \Phi_{m,j}(\vec{c})\,B^{m_\pm}_j(xq^{\pm1}) \dr^{j+m-1}_j \dl \tau_j (xq^{\mp1})\dr_j \el
&= \frac{1}{[m]_{\pm}!} \dl \Phi_{m,j}(\vec{c})\,B^{m_\pm}_j(xq^{\pm1}) \cev{X}_{m,j}(x^2\vec{c}) \, \tau_j (xq^{\mp1}) \vec{X}_{m,j}(b/(x^2 \vec{c})) \dr^{j+m}_j ,
\end{align}
where we have assumed $\vec{c}=(1,q^{\pm2},q^{\pm4},\ldots,q^{\pm(2m-2)})$.
Next, we insert a partition of unity into the front of the trace above,
\begin{equation}
1 = A^{m+1_+}_j + A^{m;1_+}_j + \ldots + A^{m;1_-}_j + A^{m+1_-}_j.
\end{equation}
%
%
Due to orthogonality with $A^{m_\pm}_j$ only the first two ($+$ case) or the last two ($-$ case) elements contribute. The insertion of the element $A^{m+1_\pm}_j$, by \eqref{A_IMO_sa} and Proposition \ref{prop54}, gives a factor 
\begin{equation}
\phi^{m}_\pm(x) = \mathcal{X}_\mp(b/x^2)\mathcal{X}_\mp(b/(x^2q^{\pm2}))\cdots\mathcal{X}_\mp(b/(x^2q^{\pm(2m-2)})) = \frac{1-q^{2m} x^2/b}{q^m(1-x^2/b)} \,\Big|_{q\to\pm q^{\pm1}} \,,
\end{equation}
times an element which, upon insertion of $g_\pm(\vec{c})\,\vec{X}_{m,j}(q^{\mp2}/\cev{c})$, evaluates to the required transfer matrix 
\begin{align}
& \dl A^{m+1_\pm}_j\, B^{m_\pm}_j(xq^{\pm1}) \cev{X}_{m,j}(x^2\vec{c}) \, \tau_j (xq^{\mp1}) \,g_\pm(\vec{c})\,\vec{X}_{m,j}(q^{\mp2}/\cev{c}) \dr^{j+m}_j \el
& \qquad = \dl A^{m+1_\pm}_j\,g_\pm(\vec{c})\,\cev{X}_{m,j}(q^{\mp2}/\cev{c}) \, \tau_j (xq^{\mp1})\,\vec{X}_{m,j}(x^2\vec{c})\, B^{m_\pm}_j(xq^{\pm1})  \, \cev{X}_{m,j}(q^{\pm2}\vec{c})) \dr^{j+m}_j \el
& \qquad = \dl A^{m+1_\pm}_j B^{m+1_\pm}_j(xq^{\mp1}) \dr^{j+m}_j = t^{m+1_\pm}_{j}(xq^{\mp1}) \,.
\end{align}
Here we have employed the fused RE, and $g_\pm(\vec{c})$ is a polynomial in $q$ such that the following identity (by Proposition \ref{prop54}) holds,
\begin{equation}
g_\pm(\vec{c})\,A^{m+1_\pm}_j\,\cev{X}_{m,j}(q^{\mp2}/\cev{c}) = g_\pm(\vec{c})\,\vec{X}_{m,j}(q^{\mp2}/\cev{c}) \,A^{m+1_\pm}_j= A^{m+1_\pm}_j \,.
\end{equation}
%

%
%
Next, we write $A^{m;1_\pm}_j = F(\lambda)^2 \, \Psi_{m+1,j}(\vec{c})\, X_{w_{m+1,j}}^{-2} \Phi_{m+1,j}(\vec{c})$, where $\vec{c}=(1,q^{\pm2},\ldots,q^{\pm(2m-2)},q^{\mp2})$ and $\lambda$ is the shape of the tableau $m+1_\pm$. Then, by inserting this element, we obtain the second required transfer matrix,
\begin{align}
& \dl A^{m;1_\pm}_j B^{m;1_\pm}_j(xq^{\pm1}) \vX_{m,j}(b/(x^2 \vec{c})) \dr^{j+m}_j \el
& \quad = F(\lambda)^2 \dl \Psi_{m+1,j}(\vec{c})\, X_{w_{m+1,j}}^{-2} \Phi_{m+1,j}(\vec{c})\, B^{m;1_\pm}_j(xq^{\pm1})  \vX_{m,j}(b/(x^2 \vec{c})) \dr^{j+m}_j \el
& \quad = F(\lambda)^2 \dl X_{w_{m+1,j}}^{-2} B^{m;1_\pm}_j(xq^{\pm1})^R \, \Psi_{m+1,j}(\vec{c})\,  \vX_{m,j}(b/(x^2 \vec{c})) \, \Psi_{m+1,j}(\vec{c}) \dr^{j+m}_j \el
& \quad = \psi^{m}_\pm(x) \, \dl A^{m;1_\pm}_j B^{m;1_\pm}_j(xq^{\pm1}) \dr^{j+m}_j = \psi^{m}_\pm(x) \, t^{m;1_\pm}_{j}(xq^{\pm1}) \,,
\end{align}
where, by similar considerations as in Proposition \ref{prop54},
\begin{equation}
\psi^{m}_\pm(x) = \mathcal{X}_\mp(q^{\mp(2m-2)}) \cdots \mathcal{X}_\mp(q^{\mp4})\mathcal{X}_\mp(q^{\mp2})\mathcal{X}_\pm(b/x^2) = -\frac{(q^2-x^2/b)[m]}{q(1-x^2/b)}\,\Big|_{q\to\pm q^{\pm1}} \,.
\end{equation}
Finally, by combining all the terms together we obtain \eqref{TMfunc} as required.
\end{proof}

\begin{remark}
The Hamiltonian of an open Hecke chain can be obtained (up to an overall normalization factor and a constant) by differentiating the corresponding transfer matrix with respect to the spectral parameter $x$ and evaluating at $x=1$ (see Section 3 in \cite{Isaev05} for details).
\end{remark}

\subsection{Some low-rank applications}

In this section we will briefly consider the free boundary model. Here we will specialize the methods and expressions derived above to this particular case, and state the explicit form of the extended primitive transfer matrices for some low-rank cases.

The free boundary model is defined by
\begin{equation}
\tau_j(x)= X_{j-1}(x) \cdots X_{1}(x)\,\tau_1(x)\,X_{1}(x) \cdots X_{j-1}(x), \qquad \tau_1(x)=1.
\end{equation}
This model is particularly elegant due to the simplicity of the $m=j=1$ transfer matrices, namely $t^1_1(x \,|\, x q^{\pm2} \,|\, \ldots ) =  t^1_1(x) = Q_0$. 
In such a way $t^{m_\pm}_{1}(x)$ and $t^{m,1_\pm}_{1}(x)$ are rather simple polynomials in $Q_0$, 
\begin{equation} \label{TM_free}
t^{m_\pm}_{1}(x) = \frac{Q_0}{[m]_\pm!} \prod_{k=1\ldots m-1} \big( Q_0 + \eta_\pm^k(x) \big) ,
\end{equation} 
and
\begin{equation} \label{TM_hook_free}
t^{m,1_\pm}_{1}(x) = 1/\psi^m_\pm(x) \Big( t^{m_\pm}_1(xq^{\pm1})\, Q_0 + \phi^m_\pm(x)\, t^{m+1_\pm}_1(xq^{\mp1}) \Big) .
\end{equation}
The explicit form of these elements for some small $m$ are given in the example below. 
%
%
\begin{example} 
Set $j=1$. Then
\begin{align}
t^1_{1}(x) \;\; &= Q_0 \,, \\
t^{2_\pm}_{1}(x) &= \frac{Q^2_0}{[2]!}\, \frac{1-x^2}{1-b} \,\frac{1-b q^2}{1-q^2 x^2} \;\Big|_{q\to\pm q^{\pm1}}\,, \\
t^{3_\pm}_{1}(x) &= \frac{Q^3_0}{[3]!} \, \frac{1-x^2}{(1-b)^2} \, \frac{(1-b q^2) (1-b q^4)}{1-q^6 x^2} \;\Big|_{q\to\pm q^{\pm1}}\,, \\
t^{4_\pm}_{1}(x) &= \frac{Q^4_0}{[4]!} \, \frac{1-x^2}{(1-b)^3} \, \frac{(1-b q^2) (1-b q^4)(1-b q^6)}{(1-q^6 x^2)(1-q^{10}x^2)} \;\Big|_{q\to\pm q^{\pm1}}\,, 
\end{align}
and
\begin{align}
t^{2,1_\pm}_1(x) &= \frac{Q^3_0}{[3]!} \,\frac{b-(1+b^2) q^2+b q^4}{q^2(1-b)^2} \;\Big|_{q\to\pm q^{\pm1}}\,, \label{tm21free}\\
t^{3,1_\pm}_1(x) &= \frac{Q^4_0}{[4]!} \,\frac{(b-q^2)(1-b q^2)(1-b q^4)(1-x^2)(1-q^4x^2)}{q^2(1-b)^3 (1-q^2 x^2)(1-q^6 x^2)} \;\Big|_{q\to\pm q^{\pm1}}\,. \label{tm31free}
\end{align}
\end{example}

The (extended) primitive transfer matrices for $j=2$ are polynomial functions in the spectral parameter $x$, namely
\begin{equation}
	t^1_2(x \,|\, x c_2\,|\,\ldots\,|\, x c_k) = Q_0\, (1+\lambda_2(x,\vec{c})) .
\end{equation}
\begin{example}
Functions $\lambda_2(x,\vec{c})$ for $\vec{c}_{(1)}=1$ and $\vec{c}_{(2)}=(1,c_2)$ are
\begin{align}
\lambda_2(x,\vec{c}_{(1)}) &= \frac{\lambda^2 (x^2-b)}{(1-b)(1-x)^2} \,,\\
\lambda_2(x,\vec{c}_{(2)}) &= \frac{\lambda ^2 (x^3 (x ((1-b) q^2+q^4+1)c_2^2/q^2 - 2 c_2^2)+2 c_2 x (b-x^2)-b ([3]-2 x)+1)}{(1-b) (1-x)^2(1-c_2 x)^2} \,. 
\end{align}
Here the first case gives the primitive transfer matrix $t^1_2(x)$.
\end{example} 

For $j\geq3$ primitive operator $t^1_j(x)$ has the following rather simple form (see also \cite{Isaev10})
\begin{equation}
t^1_j(x) = Q_0 + \lambda\, Q_+ \frac{1-x^2/b}{(1-x)^2}\, J_j(x) \,,
\end{equation}
where 
\begin{equation}
J_j(x) = 1 + \sum_{k=1\ldots(j-2)} \big( X_k(x) \cdots X^2_1(x) \cdots X_k(x) \big) = f_j(x) +\frac{1+x}{x} \sum_{i=1\ldots(2j-5)} \left(\frac{\lambda\,x}{1-x}\right)^i J_i \,.
\end{equation}
Here $\{J_i\}$ is the set of commuting elements in $\mathcal{H}_{j-1}$. These elements can be obtained by considering the power series expansion of $J_j(x)$ in the neighbourhood of $x=0$. For example, for $j=3$ it is $J_1=\sigma_1$ only, for $j=4$ it is $J_1=\sigma_1 + \sigma_2$, $J_2=\sigma_1\sigma_2 + \sigma_1\sigma_2$, $J_3=\sigma_1 + \sigma_2+\sigma_1\sigma_2\sigma_1$.
Functions $f_j(x)$ are scalar polynomials in $x$,
\begin{equation}
f_j(x) = 1 + \sum_{i=1\ldots(j-2)} {j\!-\!1 \choose i\!+\!1} \left(\frac{\lambda\,x}{1-x}\right)^{2i} \,,
\end{equation}
where ${j-1 \choose i+1}$ are binomial coefficients. In such a way the primitive transfer matrices have the generic form
\begin{equation}
t^1_j(x) = Q_0 \sum_{i=0\ldots (2j-5)} \lambda_{i,j}(x) \, J_i \,,
\end{equation}
where $J_0=1$ and $\lambda_{i,j}(x)$ are scalar polynomial functions in $x$. The extended primitive operators for $j\geq3$ have the same generic form,
\begin{equation}
t^1_j(x \,|\, x c_2\,|\,\ldots\,|\, x c_k) = Q_0 \sum_{i=0\ldots(2j-5)} \lambda_{i,j}(x,\vec{c})\, J_i \,,
\end{equation}
where $\lambda_{i,j}(x,\vec{c})$ are again scalar polynomial functions in $x$. It is a straightforward calculation to find these functions for a given $\vec{c}$, however the expressions are very bulky. 

\newpage

\section{Conclusions}

In this work we have demonstrated that the fusion procedure can be extended to arbitrary representations of the Hecke algebra such that fused baxterized elements satisfy fused Yang-Baxter and Reflection equations. The general rational solution (known as the baxterized boundary operator) of the baxterized affine Hecke algebra for the fundamental representation was found in \cite{Isaev05} and further generalized to totally (anti-) symmetric representations in \cite{Isaev10}. Operators of this type have also been considered in \cite{IsMoOg11B}. Here we have constructed generalized baxterized boundary operators for arbitrary representations and proven them to be solutions of the generalized Sklyanin reflection equation. These operators could further be linked to known and new quantum integrable systems defined on a half-line.  

The main objective of this work was to construct transfer matrices of the Sklyanin type for arbitrary representations of the affine Hecke algebra, and derive fusion type functional relations for these operators for selected families of representations. The results obtained generalize the constructions and relations found in \cite{Isaev10} and could further be extended to a wider class of representations. These functional relations will serve as a good starting point for an investigation of the spectrum of Hamiltonian systems associated with the affine Hecke algebra. The Hamiltonians (up to an overall normalization factor and a constant) can be obtained by differentiating the corresponding transfer matrices with respect to the spectral parameter $x$ and evaluating at $x = 1$.
 
Functional relations deserve further investigation and generalizations. An obvious extensions of this work is to find a class of Young diagrams and tableaux for which functional relations form a closed set of equations and can be solved exactly. Another topic for future research is the explicit form of the associated Hamiltonians and the complete set of conserved charges corresponding to the new transfer matrices. A challenging problem is to understand similarities and differences between the structure of functional relations for the affine (boundary) Hecke algebra and its much more studied Lie algebraic (bulk) analogues \cite{KNS11}. 

\bigskip

\noindent {\bf Acknowledgements.} 
The authors are thankful to  N.~J.~MacKay for useful discussions and to A.~I.~Molev for valuable comments. We also thank the UK EPSRC for funding under grants EP/H000054/1 and EP/K031805/1. A.~B. is also grateful to the Hebrew University where this work has been completed.


\appendix

\section{Some low-rank functional relations}

In this appendix we give an explicit derivation of some simple functional relations that serve both as examples of the techniques developed in this paper and also as simple checks of the recurrence relations and fusion rules given by Theorems \ref{th52} and \ref{th53}.  

\begin{example} 
Let us explicitly derive a functional relation for the primitive transfer matrices,
\begin{equation}
t^1_j(x q)\,t^1_j(x/q) = \phi^1_+(x)\, t^{2_+}_{j}(x/q) + \psi^1_+(x)\, t^{2_-}_{j}(x q) \,,
\end{equation}
where (see \eqref{prop54:2} and \eqref{pfipsi})
\begin{equation}
\phi^1_+(x) = \mathcal{X}_+(x) = -\frac{q^{-1}-xq}{1-x} \,, \qquad\quad
\psi^1_+(x) = \mathcal{X}_-(x) = \frac{q-xq^{-1}}{1-x} \,.
\end{equation}
In terms of Young tableaux this functional relation reads as
\begin{equation*}
\ytableausetup{boxsize=1.2em,aligntableaux=center}
\ytableaushort{{{\text{\scriptsize $q$}}}}
\;\times\;
\ytableaushort{{{\text{\scriptsize $q^{\text{-}1}$}}}}
\;=\;
\ytableaushort{{{\text{\scriptsize $q^{\text{-}1}$}}}{{\text{\scriptsize $q$}}}}
\;+\;
\ytableaushort{{{\text{\scriptsize $q$}}},{{\text{\scriptsize $q^{\text{-}1}$}}}}
\end{equation*}
First note that $X_j(q^{-2})-X_j(q^2)=[2]$. Then, by \eqref{CUB},
\begin{align}
& [2]\, t^1_j(xq)\, t^1_j(x/q) = [2]\,\dl \tau_j (x q)\dr_j \dl \tau_j (x/q)\dr_j \el
& \quad = \dl (X_j(q^{-2})-X_j(q^{2}))\,\tau_j (x q) X_j(x^2) \tau_j (x/q) X_j(b/x^2)\dr^{j+1}_j \el
& \quad = \mathcal{X}_-(b/x^2) \dl X_j(q^{-2})\tau_j (x/q) X_j(x^2) \tau_j (xq) \dr^{j+1}_j - \mathcal{X}_+(b/x^2) \dl X_j(q^{2})\tau_j (xq) X_j(x^2) \tau_j (x/q) \dr^{j+1}_j \el
& \quad = [2]\,\big( \mathcal{X}_-(b/x^2)\,t^{2_+}_{j}(x/q) + \mathcal{X}_+(b/x^2)\, t^{2_-}_{j}(x q) \big) ,
\end{align}
where we have used cyclicity of the trace together with RE to obtain $t^{2_+}_{j}(x/q)$.

This relation can be easily checked explicitly for the free boundary model. Setting $j=1$ we obtain the identity
\begin{equation}
Q_0^2 = \frac{Q_0}{\,[2]} \Big( \mathcal{X}_-(b/x^2)\, (Q_0+ \eta^2_+(x/q)) - \mathcal{X}_+(b/x^2)\, (Q_0+\eta^2_-(x q)) \Big) ,
\end{equation}
where, by Theorem \ref{th52},
\begin{equation}
\eta^2_+(x/q) = Q_+\frac{q^2- x^2/b}{q(1-x^2)} \,, \qquad 
\eta^2_-(x q) = -Q_+\frac{1-q^2 x^2/b}{q(1-x^2)} \,.
\end{equation}
\end{example}

\smallskip

\begin{example} 
Let us explicitly derive a functional relation for the primitive hook transfer matrix,
\begin{equation} \label{FR_adjoint}
t^{2_+}_j(x q)\,t^{1}_j(x/q) = \phi^2_+(x)\, t^{3_+}_{j}(x/q) + \psi^2_+(x)\, t^{2,1_+}_{j}(x q) ,
\end{equation}
where (see \eqref{pfipsi})
\begin{equation} \label{phipsi2}
\phi^2_+(x) = \frac{1-q^4x^2/b}{q^2(1-x^2/b)} \,, \qquad
\psi^2_+(x) = -\frac{(q^2 - x^2/b)[2]}{q(1-x^2/b)} \,.
\end{equation}
In terms of Young tableaux this functional relation reads as
\begin{equation*}
\ytableausetup{boxsize=1.2em,aligntableaux=center}
\ytableaushort{{{\text{\scriptsize $q$}}}{{\text{\scriptsize $q^3$}}}}
\;\times\;
\ytableaushort{{{\text{\scriptsize $q^{\text{-}1}$}}}}
\;=\;
\ytableaushort{{{\text{\scriptsize $q^{\text{-}1}$}}}{{\text{\scriptsize $q$}}}{{\text{\scriptsize $q^3$}}}}
\;+\;
\ytableaushort{{{\text{\scriptsize $q$}}}{{\text{\scriptsize $q^3$}}},{{\text{\scriptsize $q^{\text{-}1}$}}}}
\end{equation*}
First, by \eqref{CUB}, we have
\begin{align}
& t^{2_+}_j(xq)\, t^1_j(x/q) = 1/[2] \, \dl X_j(q^{-2}) B^{2_+}_j(xq) \dr^{j+1}_j \dl \tau_j (x/q)\dr_j \el
& \quad = 1/[2]\, \dl X_j(q^{-2})B^{2_+}_j(xq) X_{j+1}(x^2) X_j(x^2q^2) \tau_j (x/q) X_{j}(b/(x^2q^2))X_{j+1}(b/x^2) \dr^{j+2}_j .
\end{align}
Next, we insert a partition of unity into the front of the trace above,
\begin{equation}
1 = A^{(1,2,3)}_j + A^{(1,2;3)}_j + A^{(1,3;2)}_j + A^{(1;2;3)}_j .
\end{equation}
Due to orthogonality condition only the first two elements contribute. The insertion of the idempotent $A^{(1,2,3)}_j=1/[3]!\,\Phi_{3,j}(\vec{c})$ where $\vec{c}=(1,q^2,q^4)$, by \eqref{A_IMO_sa} and \eqref{prop54:1}, gives the required factor \mbox{$\phi^2_+(x) = \mathcal{X}_-(b/(x^2q^2))\mathcal{X}_-(b/x^2)$} times an element which, by application of Proposition \ref{prop54} and fused RE \eqref{fRE}, evaluates to the required transfer matrix,
\begin{align}
& 1/[3]!\, \dl \Phi_{3,j}(\vec{c}) \, B^{2_+}_j(xq) X_{j+1}(x^2) X_j(x^2q^2) \tau_j (x/q) \dr^{j+2}_j \el
& \quad = 1/[3]!\, \dl \Phi_{3,j}(\vec{c}) \, B^{2_+}_j(xq) X_{j+1}(x^2) X_j(x^2q^2) \tau_j (x/q)  X_j(q^{-4}) X_{j+1}(q^{-2}) \dr^{j+2}_j / (\mathcal{X}_-(q^{-2})\mathcal{X}_-(q^{-4})) \el
& \quad = 1/[3]!\, \dl \Phi_{3,j}(\vec{c}) \, \tau_j (x/q) X_{j}(x^2) X_{j+1}(x^2q^2) \, B^{2_+}_j(xq)  \dr^{j+2}_j \el
& \quad = 1/[3]!\, \dl \Phi_{3,j}(\vec{c}) \, B^{3_+}_j(x/q) \dr^{j+2}_j = t^{3_+}_j(x/q) \,.
\end{align}
The insertion of the idempotent $A^{2,1_+}_j= 1/[3]\,X^{-1}_{w_{3,j}} \Phi_{3,j}(\vec{c})$ where $\vec{c}=(1,q^2,q^{-2})$ gives
\begin{align}
& 1/[3]\, \dl X^{-1}_{w_{3,j}} \Phi_{3,j}(\vec{c})\, B^{2_+}_j(xq) X_{j+1}(x^2) X_j(x^2q^2) \tau_j (x/q) X_{j}(b/(x^2q^2)) X_{j+1}(b/x^2) \dr^{j+2}_j \el
& \quad = 1/[3]^2\, \dl \Psi_{3,j}(\vec{c}) X^{-2}_{w_{3,j}} \Phi_{3,j}(\vec{c})\, B^{2,1_+}_j(xq) X_{j}(b/(x^2q^2)) X_{j+1}(b/x^2) \dr^{j+2}_j \el
& \quad = 1/[3]^2\, \dl X^{-2}_{w_{3,j}} \, B^{2,1_+^R}_j(xq)\, \Psi_{3,j}(\vec{c}) X_{j}(b/(x^2q^2)) X_{j+1}(b/x^2) \Psi_{3,j}(\vec{c}) \dr^{j+2}_j \el
& \quad = \psi^2_+(x)\,\dl A^{2,1_+}_j \, B^{2,1_+}_j(xq) \dr^{j+2}_j = \psi^2_+(x)\,t^{2,1_+}_j(xq) \,.
\end{align}
with $\psi^2_+(x)$ given by \eqref{phipsi2}. Here in the first equality we have used the idempotence property and in the second equality we have used Lemma \ref{lemma42}. The last step is a direct calculation. In such a way we have obtained \eqref{FR_adjoint} as required.

The element $t^{2,1_+}_j(xq)$ may be evaluated explicitly. The idempotent $A^{2,1_+}_j$ can be decomposed in the following way,
\begin{equation}
A^{2,1_+}_j = -\frac{1}{[2][3]}X_j(q^{4})X_{j+1}(q^{2})X_j(q^{-2})-\frac{1}{[2]^2}X_{j+1}(q^{2})X_{j}(q^{-2}) \,.
\end{equation}
Then
\begin{equation}
t^{2,1_+}_j(xq) = - \frac{1}{[2][3]} \dl X_j(q^{4})X_{j+1}(q^{2})X_j(q^{-2}) B^{2,1_+}_j(xq) \dr^{j+2}_j -\frac{1}{[2]^2} \dl X_{j+1}(q^{2})X_{j}(q^{-2}) B^{2,1_+}_j(xq) \dr^{j+2}_j .
\end{equation}
The first trace evaluates to
\begin{align}
& \dl X_j(q^{4})\dl X_j(q^{-2}) B^{2_+}_j(xq) \dr_{j+1} X_j(x^2q^2) \tau(x/q) \dr^{j+1}_j \el
& \qquad - Q_+\frac{1-q^2 x^2/b}{q(1-x^2)}
\dl X_j(q^{4}) X_j(q^{-2}) B^{2_+}_j(xq) X_j(x^2q^2) \tau(x/q) \dr^{j+1}_j \el
& = [2]\,t^{2_+}_j(xq) \,t^1(x/q) - \bigg( Q_+ \frac{1-q^6 x^2/b}{q^2(1-q^2 x^2)} + {Q_+}\frac{1-q^2 x^2/b}{1-q^2x^2} \bigg) \,t^{2_+}_j(xq\,|\,x/q) \el
& = [2]\,\Big( t^{2_+}_j(xq) \,t^1(x/q) + \eta^1_-(xq^2) \,t^{2_+}_j(xq\,|\,x/q) \Big) ,
\end{align}
where (see \eqref{etax})
\begin{equation}
\eta^1_-(xq^2)= - Q_+ \frac{1-q^4 x^2/b}{q(1-q^2 x^2)} \,.
\end{equation}
The second trace vanishes,
\begin{align}
& \dl \dl X_j(q^{-2}) B^{2_+}_j(xq) \dr_{j+1} X_j(x^2q^2) \tau(x/q) \dr^{j+1}_j \el
& \qquad - Q_+\frac{1-q^2 x^2/b}{q(1-x^2)}
\dl X_j(q^{-2}) B^{2_+}_j(xq) X_j(x^2q^2) \tau(x/q) \dr^{j+1}_j \el
& = \bigg( Q_+\frac{1-q^2 x^2/b}{1-q^2 x^2} - Q_+\frac{1-q^2 x^2/b}{1-q^2x^2}\,\bigg) [2]\,t^{2_+}_j(xq\,|\,x/q) = 0\,.
\end{align}
In such a way we find
\begin{equation}
t^{2,1_+}_j(xq) = -\frac{1}{[3]}\Big( t^{2_+}_j(xq) \,t^1(x/q) + \eta^1_-(xq^2) \,t^{2_+}_j(xq\,|\,x/q) \Big) \,.
\end{equation}
For the free boundary model with $j=1$ this agrees with \eqref{tm21free} as required.
\end{example}

\newpage
\section{Notation summary}

\noindent For readers' convenience we give a brief summary of the notation used in this paper.

\noindent{\it Section 2}. Operators in the Hecke algebra:
\begin{align}  
\vX_{m,j} &= \sigma_j\sigma_{j+1}\cdots\sigma_{j+m-1}, \qquad \cX_{m,j}= \sigma_{j+m-1}\sigma_{j+m-2}\cdots\sigma_{j} , \tag{\ref{Xvec}}
\\
X_{w_{m,j}} &= \cX_{1,j}\, \cX_{2,j} \cdots \cX_{m,j} = \vX_{m,j} \, \vX_{m-1,j} \cdots  \vX_{1,j} , \tag{\ref{WK2}}
\\
\vX_{m,j}(x \vec{u}) &= X_j(x u_{1})\, X_{j+1}(x u_{2}) \cdots X_{j+m-1}(x u_m) , \tag{\ref{Xvec2}} \\
\cX_{m,j}(x \vec{u}) &= X_{j+m-1}(x u_{1})\, X_{j+m-2}(x u_{2}) \cdots X_{j}(x u_m) , \tag{\ref{Xvec3}}
\\ 
\Psi_{m,j}(\vec{u}) &= \prod_{i=1 \ldots m-1}^{\rightarrow} \! \cX_{i,j}(\vec{u}_{(i)}/u_{i+1}) \,, \qquad \Phi_{m,j}(\vec{u}) = \prod_{i=1\ldots m-1}^{\leftarrow} \!\vX_{i,j}(\cev{u}_{(i)}/u_{i+1}) \,. \tag{\ref{PsiPhi}}
\end{align}
Here $\vec{u}=(u_1,u_2,\ldots,u_m)$, $\cev{u}=(u_m,\ldots,u_2,u_1)$ and $\vec{u}_{(i)}=(u_1,u_2,\ldots,u_i)$, $\cev{u}_{(i)}=(u_i,\ldots,u_2,u_1)$.
 
\noindent {\it Section 4}. Operators of the fused reflection equation:
\begin{align}
B^{\cV_m}_j(x) &= \prod_{i=1\ldots m}^{\rightarrow} \cX_{i-1,j} (x^2 u_{i}\, \vec{u}_{(i-1)}) \, \tau_j(x\,u_{i}) \,, \qquad \tag{\ref{FusedB}} 
\\
Y_{j,++}^{\cV_m,\cV'_n}(x) &= \prod_{i=1\ldots m}^{\leftarrow} \vX_{n,j+i-1}(xu_{m-i+1}\vec{v}\,) \;= \prod_{i=1\ldots n}^{\rightarrow} \cX_{m,j+i-1}(x\vec{u}\,v_i) \,, \tag{\ref{Ypp}}\\
Y_{j,+-}^{\cV_m,\cV'_n}(x) &= \prod_{i=1\ldots m}^{\leftarrow} \vX_{n,j+i-1}(xu_{m-i+1}/\cev{v}) = \prod_{i=1\ldots n}^{\rightarrow} \cX_{m,j+i-1}(x\vec{u}/v_{n-i+1}) \,, \tag{\ref{Ypm}} \\
Y_{j,-+}^{\cV_m,\cV'_n}(x) &= \prod_{i=1\ldots m}^{\leftarrow} \vX_{n,j+i-1}(x\vec{v}/u_i) \hspace{.87cm} = \prod_{i=1\ldots n}^{\rightarrow} \cX_{m,j+i-1}(x v_i/\cev{u}) \,, \tag{\ref{Ymp}} \\
Y_{j,--}^{\cV_m,\cV'_n}(x) &= \prod_{i=1\ldots m}^{\leftarrow} \vX_{n,j+i-1}(x/(u_i\cev{v}\,)) \hspace{.51cm} = \prod_{i=1\ldots n}^{\rightarrow} \cX_{m,j+i-1}(x/(\cev{u}\,v_{n-i+1})) \,. \tag{\ref{Ymm}}
\end{align}
Operators of the fused projected reflection equation:
\begin{align}
\bB_j^{\cT_{m}}(x) &= A_j^{\cT_{m}} B_j^{\cT_{m}}(x), \tag{\ref{FusedBA}}\\
 \bY_{j,-+}^{\cT_m,\cT'_n}(x) &=  A_{j}^{\cT_m} A_{j+m}^{\cT'_n}\, Y_{j,-+}^{\cT_m,\cT'_n}(x) \,, &
 \bY_{j,++}^{\cT_m,\cT'_n}(x) &=  A_{j+m}^{\cT'_n} \, Y_{j,++}^{\cT_m,\cT'_n}(x) \,, \tag{{\ref{bY1}},\ref{bY2}}\\
 \bY_{j,+-}^{\cT_m,\cT'_n}(x) &=  Y_{j,+-}^{\cT_m,\cT'_n}(x) \,, &
 \bY_{j,--}^{\cT'_n,\cT_m}(x) &= A^{\cT_m}_j \, Y_{j,--}^{\cT'_n,\cT_m}(x) \,. \tag{\ref{bY3},\ref{bY4}}
\end{align}
{\it Section 5}. Transfer matrices ($\vec{c}\in\cT_m$):
\begin{align}
t_j^{\cV_{m}}(x) = \dl B^{\cV_{m}}_j(x) \dr^{j+m-1}_{j}, & \qquad\qquad
t_j^{\cT_{m}}(x) = \dl \bB^{\cT_{m}}_j(x) \dr^{j+m-1}_{j},  \tag{\ref{TM},\ref{TMA}}
\\
t^{\cT_m}_{j}(x \,|\, x c_{m+1} \,|\,\ldots\,|\,xc_{m+k}) &= \dl \bB^{\cT_m}_j(x) \,\tau_j(x c_{m+1} )\, \tau_j(x c_{m+2} )\, \cdots \tau_j(x c_{m+k} ) \dr^{j+m-1}_j , \tag{\ref{TMex2}}
\\
t^{{1}}_{j}(x \,|\, x c_{2} \,|\,\ldots\,|\,xc_{k}) &= \dl \tau_j(x c_1)\, \tau_j(x c_2)\cdots \tau_j(x c_{k} ) \dr_j \;. \tag{\ref{TM1ex}}
\end{align}
%

\bibliographystyle{amsplain}

\begin{thebibliography}{}

\bibitem{Chered84} I.~V.~Cherednik, {\it Factorizing Particles On A Half Line And Root Systems},
Theor.~Math.~Phys.~61 (1984) 977.

\bibitem{Sklya88} E.~K.~Sklyanin, {\it Boundary Conditions For Integrable Quantum Systems},
J.~Phys.~A 21 (1988) 2375.

\bibitem{GoZam92} S.~Ghoshal and A.~B.~Zamolodchikov, {\it Boundary S Matrix And Boundary
State In Two-Dimensional Integrable Quantum Field Theory}, Int.~J.~Mod.~Phys.~A 09 (1994) 3841, {\tt hep-th/9306002} .

\bibitem{BePearOB95} R.~E.~Behrend, P.~A.~Pearce, D.~L.~O'Brien, {\it Interaction-Round-a-Face Models with Fixed Boundary Conditions: The ABF Fusion Hierarchy}, J.~Stat.~Phys.~84 (1996) 1, {\tt hep-th/9507118}.

\bibitem{BePear00} R.~E.~Behrend, P.~A.~Pearce, {\it Integrable and Conformal Boundary Conditions 
for sl(2) A-D-E Lattice Models and Unitary Minimal Conformal Field Theories}, Jour.~of~Stat.~Phys.~102 (2001) 577, {\tt hep-th/0006094}.

\bibitem{IsaevRev} A.~P.~Isaev, {\it Quantum groups and Yang-Baxter equations}, preprint MPIM (Bonn), MPI 2004-132; Sov.\ J.\ Part.\ Nucl.\ 26 (1995) 501 (Fiz.\ Elem.\ Chastits i At. Yadra 26 (1995) 1204).

\bibitem{ChGeX91} Y.~Cheng, M.~L.~Ge and K.~Xue, {\it Yang-Baxterization of braid group
representations}, Comm.~Math.~Phys. 136,~no. 1 (1991), 195-208.

\bibitem{LevMar94} D.~Levi and P.~Martin, {\it Hecke algebra solutions to the reflection equation},
J.~Phys.~A 27 (1994) L521-L526.

\bibitem{OrRam04} R.~Orellana and A.~Ram, {\it Affine braids, Markov traces and the category
O}, {\tt  math.RT/0401317}.

\bibitem{FuGeX94} H.~-C.~Fu,~M.~L.~Ge and K.~Xue, {\it Yang-Baxterization of reflection equation and general K-matrices of An−1 vertex
models}, J.~Phys~A 27 (1994) 4457.

\bibitem{KuMu05} P.~P.~Kulish and A.~I.~Mudrov, {\it Baxterization of solutions to reflection equation with Hecke R-matrix},
 Lett.~Math.~Phys.~75 (2006) 151-170, {\tt math.QA/0508289}.

\bibitem{Doi05} A.~Doikou, {\it From affine Hecke algebras to boundary symmetries},
Nucl.~Phys.~B 725 (2005) 493, {\tt math-ph/0409060}.

\bibitem{DoiMar05} A.~Doikou and P.~P.~Martin, {\it On quantum group symmetry and Bethe ansatz 
for the asymmetric twin spin chain with integrable boundary}, J.~Stat.~Mech. (2006) P06004, {\tt hep-th/0503019}.

\bibitem{Nep02} R.~I.~Nepomechie, {\it Boundary quantum group generators of type A}, Lett.~Math.~ Phys.~62 (2002) 83, {\tt hep-th/0204181}.

\bibitem{Arn04} D.~Arnaudon, J.~Avan, N.~Crampe, A.~Doikou, L.~Frappat, E.~Ragoucy, {\it General boundary conditions for the sl(N) and sl(M/N) open spin chains}, J.~Stat.~Mech. 0408 (2004) P08005, {\tt math-ph/0406021}.

\bibitem{dGN09} J.~de Gier, A.~Nichols, {\it The two-boundary Temperley-Lieb algebra}, J.~Alg.  321 (2009), 1132-1167, {\tt math.RT/0703338}.
    
\bibitem{D09} A.~Doikou, {\it Murphy elements from the double-row transfer matrix}, J.~Stat.~ Mech. (2009) L03003,  {\tt arXiv:0812.0898}.

\bibitem{IsMoOg11} A.~P.~Isaev, A.~I.~Molev and O.~V.~Ogievetsky, {\it Idempotents for Birman-Murakami-Wenzl algebras and reflection equation}, {\tt arXiv:1111.2502}.

\bibitem{IsMoOg11B} A.~P.~Isaev, A.~I.~Molev and O.~V.~Ogievetsky, {\it A new fusion procedure for the Brauer algebra and evaluation homomorphisms}, IMRN  (2012)  2571, {\tt arXiv:1101.1336}.

\bibitem{Isaev05} A.~P.~Isaev and O.~V.~Ogievetsky, {\it On Baxterized Solutions of Reflection Equation and Integrable Chain Models}, J.~Nucl.~Phys.~B 760 (2007) 167, {\tt math-ph/0510078}.

\bibitem{Isaev10} A.~P.~Isaev, {\it Functional equations for transfer-matrix operators in open Hecke chain models}, Theor.~Math.~Phys.~150, No.2 (2007) 187, {\tt arXiv:1003.3385}.

\bibitem{IMO} A.~P.~Isaev, A.~I.~Molev, A.~F.~Os'kin, {\it On the idempotents of Hecke algebras}, Lett.~Math.~Phys.~85 (2008) 79-90 , {\tt arXiv:0804.4214}.

\bibitem{ChaPre} V.~Chari and A.~Pressley, {\it A guide to quantum groups}, Cambrige~Univ.~Press, 1994.

\bibitem{OgievLect} O.~Ogievetsky and P.~Pyatov, {\it Lecture on Hecke algebras}, in Proc. of the Int. School `Symmetries and Integrable Systems', JINR, Dubna D2,5-2000-218, pp.39-88; Preprint MPIM 2001-40.

\bibitem{Jucys66} A.~Jucys, {\it On the Young operators of the symmetric group}. Liet.~Fiz.~Rinkinys 6 (1966) 163-180.

\bibitem{Murphy81}  G.~E.~Murphy, {\it A new construction of Young's seminormal representation of the symmetric group}, J.~Algebra 69 (1981) 287–297.

\bibitem{KNS11} A.~Kuniba, T.~Nakanishi and J.~Suzuki {\it T-systems and Y-systems in integrable systems}. J.~Phys.~A44 (2011) 103001, {\tt arXiv:1010.1344}.

\bibitem{KP92} A.~Kluemper, P.A.~Pearce {\it Conformal weights of RSOS lattice models and their fusion hierarchies}, Physica ~A183 (1992) 304.
    
\bibitem{KlP92} A.~Kluemper, P.A.~Pearce {\it Analytic calculation of scaling dimensions: Tricritical hard squares and critical hard hexagons}, J.~Stat.~Phys. 64, 13 (1992).
    
\bibitem{ZP95} Y.-K.~Zhou, P.~Pearce, {\it Solution of functional equations of restricted $A^{(1)}_{n−1}$ fused lattice models}, Nucl.~Phys. ~B446 (1995) 485-510, {\tt arXiv:hep-th/9502067}.
    
\bibitem{Z95} Y.-K.~Zhou, {\it Fusion Hierarchy and Finite-Size Corrections
    of $U_q[sl(2)]$  Invariant Vertex Models with Open Boundaries}, Nucl.~Phys.
    ~B453 (1995) 619-646, {\tt arXiv:hep-th/9502053}.
    
\bibitem{N02} R.I.~Nepomechie, {\it Functional relations and Bethe Ansatz for the XXZ chain}, J.~Statist.~Phys. 111 (2003) 1363-1376, {\tt arXiv:hep-th/0211001}.

\end{thebibliography}

\end{document}